\documentclass[accepted,dvipsnames]{uai2023} % after acceptance, for a revised
                                    % version; also before submission to
                                    % see how the non-anonymous paper
                                    % would look like
%% There is a class option to choose the math font
% \documentclass[mathfont=ptmx]{uai2023} % ptmx math instead of Computer
                                         % Modern (has noticable issues)
% \documentclass[mathfont=newtx]{uai2023} % newtx fonts (improves upon
                                          % ptmx; less tested, no support)
% NOTE: Only keep *one* line above as appropriate, as it will be replaced
%       automatically for papers to be published. Do not make any other
%       change above this note for an accepted version.

%% Choose your variant of English; be consistent
%\usepackage[american]{babel}
\usepackage[british]{babel}

%% Some suggested packages, as needed:
\usepackage{natbib} % has a nice set of citation styles and commands
    \bibliographystyle{plainnat}
    
\usepackage{mathtools} % amsmath with fixes and additions
\usepackage{amssymb}

\usepackage{floatrow}
% Table float box with bottom caption, box width adjusted to content
\newfloatcommand{capbtabbox}{table}[][\FBwidth]

\usepackage{booktabs} % commands to create good-looking tables
\usepackage{tikz} % nice language for creating drawings and diagrams
\usepackage{subcaption}
\usepackage{tikz-network}
\usepackage{xspace}
\usepackage{color,soul}
\usepackage{amsmath}
\usepackage{algorithm}
\usepackage{algpseudocode}
\usepackage{xurl}
\usepackage{rotating}
\usetikzlibrary{automata,positioning,arrows}

\usepackage{thmtools}
\usepackage{thm-restate}
\usepackage{hyperref}
\usepackage{cleveref}

\pgfdeclarelayer{bg}    % declare background layer
\pgfsetlayers{bg,main}

%% Provided macros
% \smaller: Because the class footnote size is essentially LaTeX's \small,
%           redefining \footnotesize, we provide the original \footnotesize
%           using this macro.
%           (Use only sparingly, e.g., in drawings, as it is quite small.)

%% Self-defined macros

\newcommand{\mathbbm}[1]{\text{\usefont{U}{bbm}{m}{n}#1}}
\let\oldReturn\Return
\renewcommand{\Return}{\State\oldReturn}
\newtheorem{definition}{Definition}
\newtheorem{proof}{Proof}

\newtheorem{theorem}{Theorem}
\newtheorem{lemma}{Lemma}
\newtheorem{corollary}{Corollary}

\def\H{\mathcal{H}}
\def\P{\mathcal{P}}
\def\L{\mathcal{L}}
\def\A{\mathcal{A}}
\def\T{\mathcal{T}}
\def\F{\mathcal{F}}
\def\U{\mathcal{U}}
\def\V{\mathcal{V}}
\def\Os{\mathcal{O}}
\def\M{\mathcal{M}}

\def\Fig{{Fig.}}
\def\sec{{Sec.}}
\def\LE{{LE}}

\makeatletter
\DeclareRobustCommand{\cev}[1]{%
  {\mathpalette\do@cev{#1}}%
}
\newcommand{\do@cev}[2]{%
  \vbox{\offinterlineskip
    \sbox\z@{$\m@th#1 x$}%
    \ialign{##\cr
      \hidewidth\reflectbox{$\m@th#1\vec{}\mkern4mu$}\hidewidth\cr
      \noalign{\kern-\ht\z@}
      $\m@th#1#2$\cr
    }%
  }%
}
\makeatother

\title{Bayesian Inference for Vertex-Series-Parallel Partial Orders}

% The standard author block has changed for UAI 2023 to provide
% more space for long author lists and allow for complex affiliations
%
% All author information is authomatically removed by the class for the
% anonymous submission version of your paper, so you can already add your
% information below.
%
% Add authors
\author[1]{Chuxuan (Jessie) Jiang}
\author[1]{Geoff K. Nicholls}
\author[2]{Jeong Eun Lee}
% Add affiliations after the authors
\affil[1]{%
    Department of Statistics\\
    University of Oxford\\
    United Kingdom
}
\affil[2]{%
    Department of Statistics\\
    University of Auckland\\
    New Zealand
}
  
\begin{document}
\maketitle

\begin{abstract}

Partial orders are a natural model for the social hierarchies that may constrain ``queue-like'' rank-order data. However, the computational cost of counting the linear extensions of a general partial order on a ground set with more than a few tens of elements is prohibitive. Vertex-series-parallel partial orders (VSPs) are a subclass of partial orders which admit rapid counting and represent the sorts of relations we expect to see in a social hierarchy. However, no Bayesian analysis of VSPs has been given to date. We construct a marginally consistent family of priors over VSPs with a parameter controlling the prior distribution over  VSP depth. The prior for VSPs is given in closed form. We extend an existing observation model for queue-like rank-order data to represent noise in our data and carry out Bayesian inference on ``Royal Acta'' data and Formula 1 race data. Model comparison shows our model is a better fit to the data than Plackett-Luce mixtures, Mallows mixtures, and ``bucket order'' models and competitive with more complex models fitting general partial orders.

\end{abstract}

\iffalse
\section{TO-DOs}

\begin{itemize}
    \item Improve presentation\\
    \begin{enumerate}
        \item improve the conclusions to add some sentences summarising the comparison with other models, and highlighting weaknesses (no timeseries or covariates yet and QJ-B is costly to compute) and future work, for example comparison with extensions of CRS models.
        \item add pointers at the start of contribution section to the corresponding definitions. 
        \item more visualisation regarding adding/deleting a leaf in BDT
    \end{enumerate}
    \item Model comparison:\\
    \begin{enumerate}
        \item CRS: Add a few lines explaining CRS being a flexible baseline model but no transitivity; `CRS has some generalisations involving more complex clique-structures, which might offer a fix, but no one has yet explored that. We will add some remarks along these lines.'
        \item With `MLE' - compare concensus orders
    \end{enumerate}
\end{itemize}
\fi

\section{Introduction}\label{sec:intro}

Rank-order data are lists in which a set of elements are ranked. They are analysed in a wide range of areas, including decision support \citep{carlosequential}, medical research \citep{beerenwinkel2007conjunctive} and chemistry \citep{pavan2008scientific}. We classify ranking methods into two categories - total order ranking and partial order ranking.

\textit{Total order} models seek a ranking of the elements of the ground set (in our setting, the labels of a group of actors we want to rank) that is ``central'' to the rank-lists in the data. These models are suitable when we believe that an order relation exists between every pair of actors. The Mallows model \citep{mallows1957non}, the Plackett-Luce model \citep{plackett1975analysis,luce1959possible} and related mixture models are models for total orders. However, the real-world relations we are looking to recover may be weaker than a total order: perhaps relations between pairs of actors are not simply weak or uncertain, they don't actually exist. We expect this for some precedence relations that define some social hierarchies. %For example, relations like teacher \& line-manager may exist in academic and administrative spheres but not between them. %For example, in \sec~\ref{sec:}, we find a clear precedence of one Earl over another, but a certain Count can appear before, between or after the Earls in ranking lists. The Count cannot be tied with the Earls (or the Earls would be tied) so the Count has no precedence relation to the Earls.

If we want to learn social-order relations between actors by observing their behavior, then the elements of the model we fit should correspond to elements of reality: if relations are incomplete then we should fit a {\it partial order}. A partial order $h=\{[n],\prec_h\}$ is a (possibly incomplete) set of binary order relations $\prec_h$ over a ``ground set'' of actors with labels $[n]=\{1,\dots,n\}$. Our data are records of queues of actors constrained by a social hierarchy $h$, which is unknown. If we see enough queue realisations we can identify the hierarchy. In this setting the queue is just a \textit{linear extension} (\LE) of $h$, that is, a permutation of actors in $[n]$ that doesn't put an actor ahead of someone of higher precedence. 

Partial orders are widely used as a ranking summary tool, or to support efficient computation. For example, partial orders and LEs support efficient computation of marginals in Bayesian networks \citep{cano2011approximate,smail2018junction}. By contrast, in our work the partial order $h$ is the object of inference, so it is a parameter in the likelihood: the data are \LE s and the likelihood depends on the number of \LE s of $h$. Counting \LE s is an \#P-complete task \citep{brightwell1991counting}, so work to date in this setting either restricts the class of partial orders \citep{mannila2000global,gionis2006algorithms,mannila2008finding} to orders which admit fast counting or works with orders of manageable size \citep{beerenwinkel2007conjunctive,beerenwinkel12,nicholls122011partial,nicholls2022ts}. This approach does not scale well with $n$. We follow \cite{mannila2000global} and work with \textit{vertex-series-parallel} partial orders (VSPs). These orders are a sub-class of partial orders which can be formed by repeated series and parallel operations on smaller VSPs. They include \textit{bucket orders} \footnote{Actors are grouped in buckets - every actor is ordered with respect to actors in other groups, and any pair of actors in the same group are incomparible.} as a special case. \cite{valdes1979recognition} represent VSPs using binary decomposition trees (BDTs). These support counting in a time linear in $n$ \citep{wells1971elements} and scale to VSPs with hundreds of actors. 

VSPs are a well characterised combinatorial class \citep{wells1971elements,valdes1979recognition}. However, work on fitting VSPs to data is limited. \citet{mannila2000global} learn VSPs from \LE s by adapting a greedy search over VSPs. However, there is to date no Bayesian inference and hence no one has given a prior probability distributions over VSPs with good properties for inference. \cite{mannila2008finding} gave Bayesian inference for bucket orders, a subclass of VSP, and \cite{beerenwinkel12} for partial orders, a super-set that doesn't scale. 

\textbf{Contributions.} This is the first Bayesian inference for VSPs from \LE s and presents some useful new priors and likelihoods. VSPs are equivalent to ``transitively closed'' Directed Acyclic Graphs (DAGs); when we specify priors over objects of this sort we have to be careful to ensure the prior doesn't impose unwanted weighting and inconsistency. 

We specify a prior and give its probability mass function in a simple closed form. Our prior (\sec~\ref{sec:vsp-prior}) is marginally consistent. This property (defined in Definition~\ref{defn:mc_vsp_defn} below) is needed for the model to make sense in our setting. 
Our prior also represents the information available well: it is non-informative with respect to VSP depth, one of the most interesting summary statistics for a social hierarchy. 

Our new observation model (\sec~\ref{sec:QJ-B-defn-main}) generalises earlier models for observation noise in records for queue-like data and has a natural physical interpretation in terms of ``queue jumping'' and ``arriving late''. 

We give MCMC algorithms in Appendix~C %\ref{sec:MCMC-app} 
which target the VSP posterior. We carry out model comparison with the Plackett-Luce and Mallows mixture models in Appendix~E.1. We further compare our model with a simple restriction to Bucket-Order models in Appendix~E.2 and we compare it with a more general partial order model \citep{nicholls2022ts} in Appendix~E.3. 

Finally, our reconstruction of relations between witnesses appearing in Royal Acta (\sec~\ref{sec:QJ-up}) is new. Historians are interested in these relations, but it wasn't possible to reconstruct them all till now as the partial orders were too big to count their LEs (\cite{nicholls2022ts} analyse a subset, working in a time-series setting; we give timing comparisons in Appendix~F). Our models are relevant for any ranking problem where relations may be partial: in Appendix~D.2 we  fit Formula~1 race results for the 2021 season. These data show the same preference for our model over other models.
%Lastly, our algorithm for counting \LEs on a VSP is significantly faster than the current state-of-art (\cite{kangas2016counting}). 

% This paper is structured as follows. In this section, we introduce partial orders and VSPs. In \sec~\ref{sec:vsp-prior} and \ref{posterior}, we discuss our prior distribution over VSPs and our observation model over random data lists. In \sec~\ref{sec:results}, we implement our model to the 12th century `Royal Acta' witness list data. 

\subsection{Background}\label{sec:background}

%% define PO & LE

A partial order $h=\{V,\prec_h\}$ is a binary relation\footnote{The binary relation $\prec_h$ is both irreflexive (the relation $i \prec_h i$ does not exist) and transitive (if $i\prec_h j$ and $j\prec_h k$, then $i \prec_h k$), where $i,j,k\in [n]$ and $i\neq j\neq k$.} $\prec_h$ over a ``ground set'' of actors $V$. 
In our setting the actor labels are $V=[n]$ where $[n]=\{1,2,...,n\}$ or some subset. Two actors $i,j \in [n]$ are \textit{incomparable} $i\|_hj$, if neither $i\prec_h j$ nor $i \succ_h j$. Partial orders on $[n]$ are in one-to-one correspondence with transitively-closed DAGs $([n],E)$ with edges $E=\{\langle i,j\rangle\in [n]\times [n]: i\succ_h j\}$. Denote by $\H_{[n]}$ the set of all partial orders on actor labels $[n]$. %For computation it is convenient to code a partial order as a binary adjacency matrix $h \in \{0, 1\}^{n \times n}$, where $h_{i,j} = 1$ if and only if $i \succ_h j$ and $h_{i,i}=0,\ i\in [n]$.   
Let $\P_{[n]}$ be the set of all permutations of $[n]$. A linear extension $l_h\in\P_{[n]}$ is a permutation of actors in $[n]$ that does not violate partial order $h$. See \Fig~\ref{fig:POex} for an example partial order\footnote{In this article, we visualise a partial order via its transitive reduction - this omits all edges implied by transitivity and is unique.} and its \LE s. We denote the set of all \LE s for partial order $h$ as $\L[h]$. A \textit{sub-order} $h[o]=(o,\prec_h)$ of a partial order $h\in \H_{[n]}$ restricts $h$ to a subset $o \subseteq [n],\ o=\{o_1,...,o_m\}$: all order relations in $h$ are inherited by $h[o]$ so its DAG representation $(o,E[o])$ has edges $E[o]=\{e\in E: e\in o\times o\}$; directed edges incident vertices in $[n]\setminus o$ are removed and all others remain.
%the matrix representation is $h_{o,o}$ by retaining rows and columns $o_1,...,o_m$ of $h$. 
A \textit{chain} of $h\in \H_{[n]}$ is a sub-order $h[o]$ that is also a total order. The \textit{length} of a chain is the number of nodes $|o|$ in the sub-order. The \textit{depth} $D(h)$ of a partial order is the length of its longest chain, with $1\le D(h)\le n$. 

    \begin{minipage}{\linewidth}
    \centering
    \begin{tikzpicture}[thick,scale=.8, every node/.style={scale=0.6}]
        \node[draw, circle, minimum width=.1cm] (1) at (0, 1) {$1$};
        \node[draw, circle, minimum width=.1cm] (2) at (-1, -0.5) {$2$};
        \node[draw, circle, minimum width=.1cm] (3) at (1, 0) {$3$};
        \node[draw, circle, minimum width=.1cm] (4) at (1, -1) {$4$};
        \node[draw, circle, minimum width=.1cm] (5) at (0, -2) {$5$};
        \node[draw, circle, minimum width=.1cm] (6) at (2.5, 1) {$1$};
        \node[draw, circle, minimum width=.1cm] (7) at (2.5, 0.25) {$2$};
        \node[draw, circle, minimum width=.1cm] (8) at (2.5, -0.5) {$3$};
        \node[draw, circle, minimum width=.1cm] (9) at (2.5, -1.25) {$4$};
        \node[draw, circle, minimum width=.1cm] (10) at (2.5, -2) {$5$};
        \node[draw, circle, minimum width=.1cm] (11) at (3.1, 1) {$1$};
        \node[draw, circle, minimum width=.1cm] (12) at (3.1, 0.25) {$3$};
        \node[draw, circle, minimum width=.1cm] (13) at (3.1, -0.5) {$2$};
        \node[draw, circle, minimum width=.1cm] (14) at (3.1, -1.25) {$4$};
        \node[draw, circle, minimum width=.1cm] (15) at (3.1, -2) {$5$};
        \node[draw, circle, minimum width=.1cm] (16) at (3.7, 1) {$1$};
        \node[draw, circle, minimum width=.1cm] (17) at (3.7, 0.25) {$3$};
        \node[draw, circle, minimum width=.1cm] (18) at (3.7, -0.5) {$4$};
        \node[draw, circle, minimum width=.1cm] (19) at (3.7, -1.25) {$2$};
        \node[draw, circle, minimum width=.1cm] (20) at (3.7, -2) {$5$};
        \draw[-latex] (1) -- (2);
        \draw[-latex] (2) -- (5);
        \draw[-latex] (1) -- (3);
        \draw[-latex] (3) -- (4);
        \draw[-latex] (4) -- (5);
        \draw[-latex] (6) -- (7);
        \draw[-latex] (7) -- (8);
        \draw[-latex] (8) -- (9);
        \draw[-latex] (9) -- (10);
        \draw[-latex] (11) -- (12);
        \draw[-latex] (12) -- (13);
        \draw[-latex] (13) -- (14);
        \draw[-latex] (14) -- (15);
        \draw[-latex] (16) -- (17);
        \draw[-latex] (17) -- (18);
        \draw[-latex] (18) -- (19);
        \draw[-latex] (19) -- (20);
    \end{tikzpicture}
    
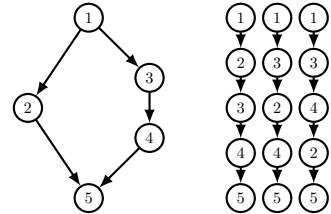
\captionof{figure}{(left) A partial order with 5 actors and depth 4 which is also a VSP, $v_0$ say, and (right) its three \LE s.}\label{fig:POex}
    \end{minipage}

%% define VSP

The \textit{vertex-series-parallel partial orders (VSP)} on $[n]$ are a class of partial orders $\V_{[n]}\subset\H_{[n]}$ formed by repeated \textit{series} $\otimes$ and \textit{parallel} $\oplus$ operations. For partial orders $h_1$ and $h_2$, let $V(h_1)$ and $V(h_2)$ represent the ground sets of actors for $h_1$ and $h_2$ respectively (which we assume are disjoint).
\begin{itemize}[topsep=0pt,itemsep=0pt,partopsep=0pt, parsep=0pt]
    \item A \textit{series partial order}, $h = h_1 \otimes h_2$, is the union of all relations in $h_1$ and $h_2$, with additional relations $i \succ_h j$ if $i \in V(h_1)$ and $j\in V(h_2)$.
    \item A \textit{parallel partial order}, $h = h_1 \oplus h_2$, is the union of all relations in $h_1$ and $h_2$ with incomparability $i \|_h j$ if $i \in V(h_1)$ and $j\in V(h_2)$.
\end{itemize}
The set of VSPs $\V_{[n]}$ is defined recursively: if $|V(h)|=1$ then $h$ is a VSP; if $h_1$ and $h_2$ are VSPs then 
$h_1 \otimes h_2$ and $h_1 \oplus h_2$ are VSPs. \cite{valdes1979recognition} show that a partial order is a VSP if it does not contain the ``forbidden sub-graph'' (Appendix~G, \Fig~G.1) as a subgraph isomorphism. 

The partial order $v_0$ in \Fig~\ref{fig:POex} is a VSP. It can be constructed using the series and parallel operations in \Fig~\ref{fig:SP}.

\begin{minipage}{\linewidth}
    \centering
    \begin{tikzpicture}[thick,scale=.9, every node/.style={scale=0.6}]
        \node[draw, circle, minimum width=.1cm,fill=lightgray,label={[xshift=.2cm,yshift=-.1cm]\small +}] (1) at (-5, 2) {$3$};
        \node[minimum width=1cm] at (-4.5,2) {$\bigotimes$};
        \node[draw, circle, minimum width=.1cm,fill=lightgray,label={[xshift=.2cm,yshift=-.1cm]\small -}] (2) at (-4, 2) {$4$};
        \draw[-implies,double equal sign distance] (-3.5,2) -- (-3,2) node[midway,above] {$S$};
        \node[draw, circle, minimum width=.1cm] (3) at (-2.5, 2.5) {$3$};
        \node[draw, circle, minimum width=.1cm] (4) at (-2.5, 1.5) {$4$};
        \draw[-latex] (3) -- (4);
        \draw[dashed] (-2.1,3) -- (-2.1,1);
        \node[minimum width=1cm] at (-1.7,2) {$\bigoplus$};
        \node[draw, circle, minimum width=.1cm,fill=lightgray] (5) at (-1.2, 2) {$2$};
        \draw[-implies,double equal sign distance] (-0.7,2) -- (-.2,2) node[midway,above] {$P$};
        \node[draw, circle, minimum width=.1cm] (5) at (.3, 2) {$2$};
        \node[draw, circle, minimum width=.1cm] (6) at (0.8, 2.5) {$3$};
        \node[draw, circle, minimum width=.1cm] (7) at (0.8, 1.5) {$4$};
        \draw[-latex] (6) -- (7);
        \draw[dashed] (1.2,3) -- (1.2,1);
        \node[minimum width=1cm] at (1.6,2) {$\bigotimes$};
        \node[draw, circle, minimum width=.1cm,fill=lightgray,label={[xshift=.2cm,yshift=-.1cm]\small -}] (8) at (2.1, 2) {$5$};
        \draw[-implies,double equal sign distance] (1.6,1) -- (1.6,.5) node[midway,right] {$S$};
        \node[draw, circle, minimum width=.1cm] (8) at (0.9, -.5) {$2$};
        \node[draw, circle, minimum width=.1cm] (9) at (2.1, -.15) {$3$};
        \node[draw, circle, minimum width=.1cm] (10) at (2.1, -.85) {$4$};
        \node[draw, circle, minimum width=.1cm] (11) at (1.5, -1.3) {$5$};
        \draw[-latex] (9) -- (10);
        \draw[-latex] (10) -- (11);
        \draw[-latex] (8) -- (11);
        \draw[dashed] (0.5,0.5) -- (0.5,-1.5);
        \node[minimum width=1cm] at (0.1,-.5) {$\bigotimes$};
        \node[draw, circle, minimum width=.1cm,fill=lightgray,label={[xshift=.2cm,yshift=-.1cm]\small +}] (12) at (-.4, -.5) {$1$};
        \draw[-implies,double equal sign distance] (-.9,-.5) -- (-1.4,-.5) node[midway,above] {$S$};
        \node[draw, circle, minimum width=.1cm] (13) at (-3.1, -.5) {$2$};
        \node[draw, circle, minimum width=.1cm] (14) at (-1.9, -.15) {$3$};
        \node[draw, circle, minimum width=.1cm] (15) at (-1.9, -.85) {$4$};
        \node[draw, circle, minimum width=.1cm] (16) at (-2.5, -1.3) {$5$};
        \node[draw, circle, minimum width=.1cm] (17) at (-2.5, .3) {$1$};
        \draw[-latex] (13) -- (16);
        \draw[-latex] (15) -- (16);
        \draw[-latex] (14) -- (15);
        \draw[-latex] (17) -- (13);
        \draw[-latex] (17) -- (14);
    \end{tikzpicture}
    
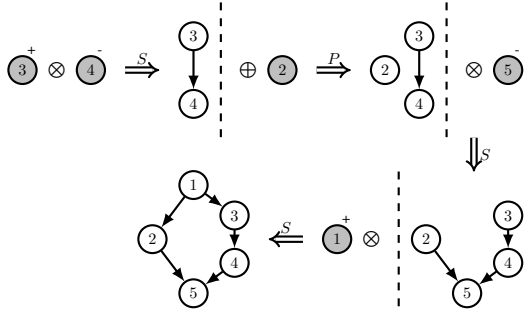
\captionof{figure}{One possible construction procedure for the VSP $v_0$ shown in \Fig~\ref{fig:POex}.}\label{fig:SP}
\end{minipage}

%%% BDT

A VSP on $n$ actors can be parameterised as a Binary Decomposition Tree (BDT) \cite{valdes1979recognition} - a binary tree $t\in \T_{[n]}$ with $n$ leaves in which nodes have additional attributes (listed below) and edges are directed from the root to the leaves. Let $\F$ and $\A$ be the index sets for the $n$ leaves and $n-1$ internal nodes respectively, with $\F\cup\A=[2n-1]$. Each leaf node index corresponds to a unique actor in the VSP. It is convenient to distinguish between leaf nodes indices and the actor labels to which they correspond.
For each leaf node $i\in \F$, let $F_i(t)\in [n]$ give the actor label for the actor corresponding to that leaf node. Internal nodes $i\in \A$ are $S$ nodes if the subtrees rooted by their child nodes are merged in series, otherwise they are $P$ nodes and the subtrees are merged in parallel. Internal nodes with an $S$ label have an additional attribute indicating which of its child nodes is the ``upper child'': the subtree of this child node (indicated by a `+' and a red edge in \Fig~\ref{fig:tree_ex}) is stacked above the subtree rooted by the other child node (indicated by a `-'). As an example, the VSP $v_0$ in \Fig~\ref{fig:POex} can be represented by the BDT $t_0$ in \Fig~\ref{fig:tree_ex}. Let $S(t)\in [n-1]$ be the number of  $S$-nodes in tree $t$. 

A tree $t$ with edge set $E(t)$ is written $t=(F(t),E(t),L(t))$. Here $L(t)=\{L_i\}_{i\in \A}$ with $L_i(t)=(j,j')$ indicating that internal node $i$ is an $S$-node with child nodes $j,j'$ and the subtree rooted by $j$ is stacked above that rooted by $j'$, and $L_i(t)=\emptyset$ if $i$ is a $P$-node. The map from a BDT to the VSP $v: \T_{[n]} \to \V_{[n]}$ is not bijective: for a VSP $v\in\V_{[n]}$, there may exist many BDTs $t\in\T_{[n]}$ which represent it. Let $t(v)=\{t\in \T_{[n]}: v(t)=v\}$ give the set of BDTs representing VSP $v\in \V_{[n]}$. %Further details follow in \sec~\ref{sec:vsp-prior}. 

\begin{minipage}{\linewidth}
    \centering
    \begin{tikzpicture}[thick,scale=.58, every node/.style={scale=0.45}]
        \node[draw, circle, minimum width=1cm,fill=pink] (1) at (0, 2) {$S$};
        \node[draw, circle, minimum width=1cm] (2) at (-1, 1) {$1+$};
        \node[draw, circle, minimum width=1cm,fill=pink] (3) at (1, 1) {$S-$};
        \node[draw, circle, minimum width=1cm,fill=cyan] (4) at (0, 0) {$P+$};
        \node[draw, circle, minimum width=1cm] (5) at (2, 0) {$5-$};
        \node[draw, circle, minimum width=1cm] (6) at (-1, -1) {$2$};
        \node[draw, circle, minimum width=1cm,fill=pink] (7) at (1, -1) {$S$};
        \node[draw, circle, minimum width=1cm] (8) at (0, -2) {$3+$};
        \node[draw, circle, minimum width=1cm] (9) at (2, -2) {$4-$};
        \draw[-latex] (1) -- (3);
        \draw[-latex][red] (1) -- (2);
        \draw[-latex][red] (3) -- (4);
        \draw[-latex] (3) -- (5);
        \draw[-latex] (4) -- (6);
        \draw[-latex] (4) -- (7);
        \draw[-latex][red] (7) -- (8);
        \draw[-latex] (7) -- (9);
    \end{tikzpicture}
    
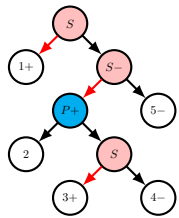
\captionof{figure}{A BDT $t_0$ representing $v_0$ in \Fig~\ref{fig:POex}, so that $v(t_0) = v_0$. Red edges and `$+$' signs indicate the upper child.}\label{fig:tree_ex}
\end{minipage}

%%% counting LEs of VSP

\cite{brightwell1991counting} show that counting the number of \LE s of a partial order is a \#P-complete problem. However, the subclass of VSP partial orders admits fast counting. \cite{wells1971elements} gives
\begin{align}
    |\L(h_1 \otimes h_2)| = & |\L(h_1)\| \L(h_2)| \label{eq:le_s}\\
    |\L(h_1 \oplus h_2)| = & |\L(h_1)\|\L(h_2)|{{|V(h_1)|+|V(h_2)|\choose |V(h_1)|}}\label{eq:le_p}
\end{align}
where $|V(h_1)|$ and $|V(h_2)|$ give the number of actors in $h_1$ and $h_2$. This may be evaluated recursively in $O(n)$ steps.

In the following we make use of one more representation of a VSP: the \textit{Multi-Decomposition Tree} (MDT). These trees are obtained by collapsing edges which connect internal nodes of the same $S/P$-type in the BDT, as in \Fig~\ref{fig:multitree-ex}. Let $\M_{[n]}$ be the set of all MDTs with $n$ distinguisable leaves.
A formal definition is given in Appendix~A.3.
%\ref{sec:mdt-defn}
%Our contribution is to count the number of BDTs that reduce to a given MDT.

    \begin{minipage}{\linewidth}
    \centering
    \begin{tikzpicture}[thick,scale=.55, every node/.style={scale=0.45}]
        \node[draw, circle, minimum width=1cm,fill=pink] (1) at (-2, 2) {$S$};
        \node[draw, circle, minimum width=1cm,fill=pink] (2) at (-3, 1) {$S+$};
        \node[draw, circle, minimum width=1cm,fill=pink] (3) at (-1, 1) {$S-$};
        \node[draw, circle, minimum width=1cm] (4) at (0, 0) {$6-$};
        \node[draw, circle, minimum width=1cm] (5) at (-1.5, 0) {$5+$};
        \node[draw, circle, minimum width=1cm] (6) at (-4, 0) {$1+$};
        \node[draw, circle, minimum width=1cm,fill=cyan] (7) at (-2.5, 0) {$P-$};
        \node[draw, circle, minimum width=1cm,fill=cyan] (8) at (-3.2, -1.1) {$P$};
        \node[draw, circle, minimum width=1cm] (9) at (-1.8, -1.1) {$4$};
        \node[draw, circle, minimum width=1cm] (10) at (-3.9, -2.2) {$2$};
        \node[draw, circle, minimum width=1cm] (11) at (-2.5, -2.2) {$3$};
        \node[draw, circle,fill=pink, minimum width=1cm] (12) at (3.5, 1.8) {$S$};
        \node[draw, circle, minimum width=1cm,label={\small 1}] (13) at (2, 0) {$1$};
        \node[draw, circle, minimum width=1cm,fill=cyan,label={\small 2}] (14) at (3, 0) {$P$};
        \node[draw, circle, minimum width=1cm,label={\small 3}] (15) at (4, 0) {$5$};
        \node[draw, circle, minimum width=1cm,label={\small 4}] (16) at (5, 0) {$6$};
        \node[draw, circle, minimum width=1cm] (17) at (2, -2) {$2$};
        \node[draw, circle, minimum width=1cm] (18) at (3, -2) {$3$};
        \node[draw, circle, minimum width=1cm] (19) at (4, -2) {$4$};
        \draw[-latex] (1) -- (3);
        \draw[-latex][red] (1) -- (2);
        \draw[-latex][red] (2) -- (6);
        \draw[-latex] (2) -- (7);
        \draw[-latex] (3) -- (4);
        \draw[-latex][red] (3) -- (5);
        \draw[-latex] (8) -- (10);
        \draw[-latex] (8) -- (11);
        \draw[-latex] (7) -- (8);
        \draw[-latex] (7) -- (9);
        \draw[-latex] (12) -- (13);
        \draw[-latex] (12) -- (14);
        \draw[-latex] (12) -- (15);
        \draw[-latex] (12) -- (16);
        \draw[-latex] (14) -- (17);
        \draw[-latex] (14) -- (18);
        \draw[-latex] (14) -- (19);
        \draw[-implies,double equal sign distance] (.5,0) -- (1.5,0) node[midway,above] {$m_\T(\cdot)$};;
    \end{tikzpicture}
    
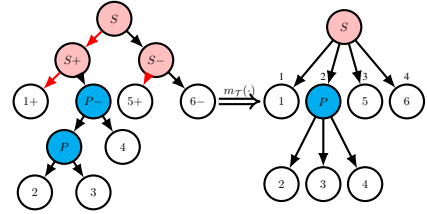
\captionof{figure}{An example BDT $t_1$ (left) and its corresponding MDT $m_1$ (right). The child-nodes of any $S$-node in the MDT are numbered to give the order in which their subtrees are stacked by the BDT. %In this example, we have $1\succ M_3(m_1)\succ 5 \succ 6$ and $L(m_1)=\{\langle 1,(2,3,4,5) \rangle$\}.
    }\label{fig:multitree-ex}
\end{minipage}

\cite{valdes78} has shown that MDTs are one-to-one with VSPs, so all the BDTs in $t(v)$ representing the VSP $v$ must ``collapse down'' to give the same MDT.
    For $m\in\M_{[n]}$ we write $v=v(m)$ for the map to VSPs (relations between any pair of actors in the VSP are simply given by the type of their Most Recent Common Ancestor (MRCA) in $m$).  
    Let $m_\V(v)=\{m\in\M_{[n]}: v(m)=v\}$ be the set of MDTs representing $v\in\V_{[n]}$. 
    
    \begin{lemma}\label{lemma:mdt-vsp}
    The map $m_\V: \V_{[n]}\to \M_{[n]}$ is bijective (so that $|m_\V(v)|=1$). See \cite{valdes78} for proof and \cite{valdes1979recognition} for further discussion.
    \end{lemma}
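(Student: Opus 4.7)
The plan is to construct an explicit inverse map $v: \M_{[n]}\to\V_{[n]}$ and show both compositions are identities, which amounts to verifying that $m_\V$ is surjective and injective. For surjectivity, I would define $v(m)$ recursively from the tree structure of $m\in\M_{[n]}$: a leaf with actor label $a$ gives the singleton VSP on $\{a\}$; a $P$-node whose children give sub-VSPs $v_1,\dots,v_k$ yields $v_1\oplus\cdots\oplus v_k$; and an $S$-node with ordered children gives $v_1\otimes v_2\otimes\cdots\otimes v_k$ in the sequence fixed by the MDT. Associativity of $\otimes$ and $\oplus$ makes these $k$-ary operations well-defined, and by closure of $\V_{[n]}$ under both operations, $v(m)\in\V_{[n]}$. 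Applying $m_\V$ to $v(m)$ recovers $m$ because the MRCA characterization mentioned in the excerpt tells us that the type of the MRCA of any two leaves matches the incomparability/comparability pattern in the VSP, so the tree structure of $m$ can be read off from the relations of $v(m)$.

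For injectivity, I would show that every VSP $v$ with $|V(v)|\ge 2$ admits a unique canonical top-level decomposition, then iterate. Concretely: (i) if the undirected Hasse diagram of $v$ is disconnected, its components induce sub-orders $v_1,\dots,v_k$ (with $k\ge 2$) that are themselves VSPs, and $v=v_1\oplus\cdots\oplus v_k$ uniquely; (ii) if the Hasse diagram is connected, then $v$ admits a unique finest series decomposition $v=v_1\otimes\cdots\otimes v_k$ with each $v_i$ not itself a non-trivial series composition. Cases (i) and (ii) are mutually exclusive since any non-trivial parallel composition has disconnected Hasse diagram and any non-trivial series composition has an edge from a minimum of the upper part to a maximum of the lower part, hence is connected. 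Iterating this decomposition until singletons appear produces the MDT of $v$, and the enforced alternation (no $P$-child of a $P$-node, no $S$-child of an $S$-node) coincides exactly with the normal form definition of an MDT.

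The main obstacle is establishing case (ii): that every connected VSP with at least two actors actually splits non-trivially as a series composition, and that the finest such split is unique. Existence uses the VSP characterization of \cite{valdes1979recognition}: the absence of the forbidden $N$-shaped sub-graph forces any connected VSP to have a non-trivial ``series cut'' $V(v)=A\sqcup B$ with every element of $A$ above every element of $B$. Uniqueness of the finest cut sequence follows because the set of valid series cuts in a connected VSP is totally ordered by refinement—two incompatible cuts would produce an incomparable pair straddling both, contradicting the definition of a series cut. With existence and uniqueness of the top-level decomposition in both cases, induction on $|V(v)|$ completes the argument, and uniqueness of the recursive decomposition gives $|m_\V(v)|=1$.
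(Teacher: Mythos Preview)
The paper does not supply its own proof of this lemma: the statement explicitly defers to \cite{valdes78} for the proof and to \cite{valdes1979recognition} for further discussion, treating the bijection between VSPs and MDTs as an established combinatorial fact that the rest of the paper then uses. There is therefore no in-paper argument to compare your proposal against.

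That said, your sketch is the standard route to this result and is essentially what Valdes does: recover the MDT from a VSP by iterating a canonical top-level split (parallel if the Hasse diagram is disconnected, otherwise a maximal chain of series cuts), and check that the resulting tree satisfies the MDT alternation constraint on adjacent internal nodes. Your handling of the uniqueness step is correct: if $(A_1,B_1)$ and $(A_2,B_2)$ are both series cuts of a connected order and are not nested, then picking $a\in A_1\cap B_2$ and $a'\in A_2\cap B_1$ forces both $a\succ a'$ and $a'\succ a$, a contradiction; so the series cuts of a connected order are linearly ordered by inclusion and the finest chain is unique. The one place I would tighten is the existence claim in case~(ii): you invoke the forbidden-subgraph characterisation to assert that every connected VSP on at least two elements admits a non-trivial series cut, but this deserves a short direct argument (for instance, showing that if no single element is comparable to all others then four suitably chosen elements realise the $N$-shape). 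With that gap closed, your induction goes through and matches the classical argument.
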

    
\section{VSP Prior}\label{sec:vsp-prior}

In this section we give a marginally consistent prior $\pi_{\V_{[n]}}(v|q)$ over VSPs on actors in $[n]$, controling the distribution over VSP-depth. We begin by defining a prior probability distribution $\pi_{\T_{[n]}}(t|q)$ over BDTs $t\in \T_{[n]}$. 

Our prior on $\T_{[n]}$ has a uniform distribution over trees $([2n-1],E(t))$ with distinguishable leaves. Internal nodes are labelled $S$ with probability $q$ and otherwise $P$. We choose an ``upper child'' for each $S$ node at random from its two child nodes, so we have
\begin{equation}\label{eq:tree-prior}
        \pi_{\T_{[n]}}(t|q) = \frac{1}{|\T_{[n]}|} \left(\frac{q}{2}\right)^{S(t)} (1-q)^{n-S(t)-1},
    \end{equation}
    where $S(t)$ is the number of $S$-nodes, $|\T_{[n]}|=(2n-3)!!\equiv (2n-3)\cdot (2n-5)...3\cdot 1$ is the number of binary tree topologies with $n$ distinguishable leaves, and the types of the $n-1$ internal nodes are independent with a factor $2^{-S(t)}$ for the stacking order of the children of $S$-nodes.

We get the prior on VSPs $v\in \V_{[n]}$ by summing over all BDTs that represent $v$,
\begin{equation}\label{eq:vsp_prior}
    \pi_{\V_{[n]}}(v|q)=\sum_{t\in t(v)} \pi_{\T_{[n]}}(t|q)
\end{equation}
This simple choice, based on a uniform distribution over tree topologies, determines a prior for VSPs that represents the prior knowledge we want to impose in our setting. If a social hierarchy is built up by making comparisons between groups of people, based for example on their profession, then it will be a VSP. Secondly, the unknown true depth of the social hierarchy we are trying to reconstruct (which is the length of the longest chain in the VSP) is a feature of particular interest, so we don't want the prior to strongly inform depth. 
%(\Fig~\ref{fig:depth-prior})
We choose a prior distribution over $q$ so that the marginal distribution $\pi_{\V_{[n]}}(v)$ gives a reasonably flat prior distribution for depth $D(v)$ (see Appendix~H and Fig.H.1).

We assume that relations between two actors are determined by (unknown) properties intrinsic to those actors (for example, their professions, or ancestry). If that is true then the presence or absence of a third actor should not affect the relations between the first two. It is not straightforward to get this property {\it and} transitivity. If two actors $1\| 2$ are unordered and we add actor $3$ with relations $1\succ 3$ and $3\succ 2$ then $1\succ 2$ by transitivity: the presence of actor $3$ changes the relation between actors $1$ and $2$. Random VSPs can be built up in many different ways (that is, they are represented by many different BDTs), so we want the prior probability that $1\succ_w 2$ in a random VSP $w\sim \pi_{\V_{[2]}}$ to be the same as the prior probability that $1\succ_{v} 2$ in a random VSP $v\sim \pi_{\V_{[3]}}$. This adds a consistency restriction on any family of prior distributions $\pi_{\V_[n]},\ {n\ge 1}$ we write down.

A family of priors like $\pi_{\T_{[n]}}(t|q)$ or $\pi_{\V_{[n]}}(v|q),\ n\ge 1$ is \emph{marginally consistent} (also known as \emph{projective}) if every marginal of every distribution in the family is also in the family. Marginal consistency is not a property we get for free from the axioms of probability: the uniform distribution on partial orders $h\sim \mathcal{U}(\H_{[n]})$ is not consistent: there are 3 partial orders on the labels $\{1,2\}$ and 19 on $\{1,2,3\}$; since 19 is not divisible by 3, the probability for $1\succ_h 2$ in $h\sim \mathcal{U}(\H_{[2]})$ cannot equal the marginal probability for $1\succ_g 2$ in $g\sim \mathcal{U}(\H_{[3]})$.

\begin{definition}[Marginal consistency]\label{defn:mc_vsp_defn}
Let $\Os_{[n]}=\{o\subseteq [n]: |o|>0\}$ be the set of all subsets of $[n]$ with at least one element. The family of VSP priors $\pi_{\V_o}(v|q),\ o\in \Os_{[n]},\ n\ge 1$ is marginally consistent if, for all $n\ge 1$ and all $o,\tilde o\in \Os_{[n]}$ with $o\subseteq \tilde o$, distributions in the family satisfy
\begin{equation}\label{MarC}
\pi_{\V_o}(w|q)=\sum_{\substack{v\in\V_{\tilde o}\\ v[o]=w}} \pi_{\V_{\tilde o}}(v|q)\quad \mbox{for all $w\in \V_o$}.
\end{equation}
\end{definition}
If marginal consistency holds for all $q$ then it holds for marginals $\pi_o(w)$ by taking expectations over $q$ in (\ref{MarC}).

The following Theorem is our first main result: we give a closed form expression for the prior for a VSP (we calculate the sum in (\ref{eq:vsp_prior})) and show that the family of priors is marginally consistent. %\footnote{The Catalan number $\mathcal{C}_s$ is commonly used on combinatorial mathematics. It counts the number of non-crossing partitions of the set $\{1,\dots,2s\}$ in which every block is of size 2. } 
For $v\in\V_{[n]}$, let $t\in t(v)$ be some tree representing $v$. Partition the internal nodes $\A$ of $t$ into 
    \textit{$S$-clusters} $C_k^{(S)},\ k=1,...,K_S$ and \textit{$P$-clusters} $C_k^{(P)}, k=1,...,K_P$. An $S$-cluster is a maximal set of internal nodes of type $S$ which are connected by edges in $E(t)$ and corresponds to a node in the MDT-representation. The $P$-clusters are defined similarly. %Adjacent nodes not belonging to the same cluster must be in clusters of different type. 
    We will see (in Appendix~A.2, proof of Proposition~5) that two BDTs representing the same VSP have the same numbers of $S$ and $P$ clusters, with the same sizes.%, and the multi-tree ``connecting'' $S$ and $P$ clusters in each tree must be the same. %Let $C^{(S)}=\{C_1^{(S)},\dots,C_{K_S}^{(S)}\}$ be the set of $S$-clusters and $C^{(P)}=\{C_1^{(P)},\dots,C_{K_P}^{(P)}\}$ the $P$-clusters. The probability distribution over a vertex-series-parallel partial order (VSP) is
\begin{theorem}\label{thm:vsp-prior}
    The family, $\pi_{\V_o}(v|q),\ o\in \Os_{[n]}\ n\ge 1$, of VSP priors is marginally consistent. The probability distribution over VSPs $v\in\V_{[n]}$ in (\ref{eq:vsp_prior}) is
    \begin{align}\label{eq:vsp-prior}
    \pi_{\V_{[n]}}(v|q)&\!=\!\pi_{\T_{[n]}}(t|q)\prod_{k=1}^{K_P} (2|C_k^{(P)}|\!-\! 1)!!\!\prod_{k'=1}^{K^S} \mathcal{C}_{|C_{k'}^{(S)}|}
\end{align}
where $t$ may be taken to be any tree $t\in t(v)$ with $P$- and $S$-clusters defined above, $\pi_{\T_{[n]}}(t|q)$ is given in (\ref{eq:tree-prior}) and
    % (Part II) The number of trees $|t(v)|$ representing $v$ is
    % . The number of BDTs of $v$ is
    % \begin{equation}\label{eq:tree-total}
    %      |t(v)|= .
    % \end{equation} 
    \begin{equation}\label{eq:catalan}
        \mathcal{C}_{s} = \frac{1}{s+1}{2s \choose s},\quad s\ge 0
    \end{equation} 
    is the $s$'th Catalan number \citep{Weisstein02}.
\end{theorem}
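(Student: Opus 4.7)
The plan is to establish the two claims of Theorem~\ref{thm:vsp-prior} separately: the closed-form expression first, then marginal consistency.

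\textbf{Closed-form expression.} The first step is to observe that $\pi_{\T_{[n]}}(t|q)$ in (\ref{eq:tree-prior}) depends on $t$ only through $S(t)$, and that $S(t)$ is constant over $t\in t(v)$: every BDT in $t(v)$ collapses to the unique MDT $m_\V(v)$ (Lemma~\ref{lemma:mdt-vsp}) and therefore shares the same multiset of cluster sizes, so $S(t)=\sum_{k'=1}^{K^S}|C_{k'}^{(S)}|$. Hence $\pi_{\V_{[n]}}(v|q)=|t(v)|\cdot\pi_{\T_{[n]}}(t|q)$ for any representative $t\in t(v)$, reducing the task to counting $|t(v)|$. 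I would then count $|t(v)|$ by expanding each internal MDT node independently into a BDT cluster: a $P$-cluster of size $c=|C_k^{(P)}|$ corresponds to a rooted binary tree with $c+1$ distinguishable unordered leaves (the $c+1$ external subtrees attached at the cluster), giving $(2c-1)!!$ choices; an $S$-cluster of size $c=|C_{k'}^{(S)}|$ corresponds to a rooted binary tree with $c+1$ ordered leaves (the order being the one the MDT records on the $S$-node's children, inherited from upper-child attributes in the BDT), giving $\mathcal{C}_c$ choices. Multiplying these independent factors across the MDT nodes yields (\ref{eq:vsp-prior}).

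\textbf{Marginal consistency.} By chaining single-element extensions it suffices to show $\pi_{\V_o}(w|q)=\sum_{v\in\V_{\tilde o},\,v[o]=w}\pi_{\V_{\tilde o}}(v|q)$ when $\tilde o=o\cup\{j\}$. Introduce the BDT leaf-removal map $\mathrm{del}(t,j)$, which deletes leaf $j$, removes its parent $p$, and reconnects $p$'s sibling to $p$'s grandparent (promoting it to root if $p$ was the root), preserving $S/P$-type and upper-child assignments elsewhere. An MRCA argument shows $v(\mathrm{del}(t,j))=v(t)[o]$: for any $a,b\in o$ the MRCA of $a$ and $b$ in $t$ cannot equal $p$, and the deletion preserves its identity, $S/P$-type, and upper-child orientation. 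Therefore
\begin{equation*}
\sum_{\substack{v\in\V_{\tilde o}\\ v[o]=w}}\pi_{\V_{\tilde o}}(v|q)=\sum_{t'\in t(w)}\sum_{\substack{t\in\T_{\tilde o}\\ \mathrm{del}(t,j)=t'}}\pi_{\T_{\tilde o}}(t|q),
\end{equation*}
and it suffices to show that for each $t'\in\T_o$ the inner sum equals $\pi_{\T_o}(t'|q)$. Every preimage $t$ arises by attaching leaf $j$ above one of the $2|o|-1$ nodes of $t'$, choosing the new internal node's type ($S$ or $P$) and, if $S$, its upper child; the per-position contribution simplifies via the identity $2\cdot(q/2)+(1-q)=1$, and summation over the $2|o|-1$ positions uses $(2|o|-1)/(2|o|-1)!!=1/(2|o|-3)!!$ to recover $\pi_{\T_o}(t'|q)$ exactly.

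\textbf{Main obstacle.} The only non-routine step is justifying the Catalan count for $S$-clusters: one must verify that the upper-child attributes along a maximal $S$-chain cooperate to induce on the $c+1$ attached subtrees precisely the order recorded in the MDT $S$-node's child numbering, so that counting BDT sub-structures within a cluster reduces to counting binary trees with a prescribed leaf order. Once this and its symmetric unordered counterpart for $P$-clusters are in hand, both the closed form and the marginal-consistency sum reduce to direct algebra.
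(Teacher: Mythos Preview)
Your proposal is correct and mirrors the paper's proof almost exactly: the paper likewise factors $\pi_{\V_{[n]}}(v|q)=|t(v)|\,\pi_{\T_{[n]}}(t|q)$ via constancy of $S(t)$ on $t(v)$ (Lemma~\ref{lemma:mdt-vsp}), then counts $|t(v)|$ by expanding each MDT node into its possible BDT sub-trees, obtaining the double-factorial and Catalan factors; for marginal consistency it reduces to one-leaf removal/insertion on BDTs and uses precisely the identity $2\cdot(q/2)+(1-q)=1$ together with $(2|o|-1)/(2|o|-1)!!=1/(2|o|-3)!!$. The only place the paper adds detail beyond your sketch is the Catalan step you flag as the main obstacle, which it handles by the standard convolution recurrence $\sum_{k=1}^{s}\mathcal{C}_{k-1}\mathcal{C}_{s-k}=\mathcal{C}_{s}$.
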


\begin{proof}[Theorem~\ref{thm:vsp-prior}]
    The proof of Theorem \ref{thm:vsp-prior} is given in two parts in Appendix~A. In Proposition~3 %\ref{prop:tree-mc} 
    in Appendix~A.1 we show that the family of tree-priors $\pi_{\T_{[n]}}(t|q),\ o\in\Os_{[n]},\ n\ge 1$ is marginally consistent. This result is used in Proposition~4 %\ref{prop:vsp-mc} 
    in A.1 to show that VSPs are marginally consistent - the first part of Theorem~\ref{thm:vsp-prior}. 
    
    The proof of the second part is given in Appendix~A.2. 
    We show in Proposition~5
    %\ref{prop:vsp-prior} 
    that all trees $t\in t(v)$ have equal values of $\pi_{\T_{[n]}}(t|q)$, so that $\pi_{\V_{[n]}}(v|q)=|t(v)|\pi_{\T_{[n]}}(t|q)$ for any $t\in t(v)$. This is straightforward, as they must all collapse down to the same MDT. Finally, in Proposition~6,
    %\ref{prop:tree-total} 
    we give a formula for $|t(v)|$. We count the number of BDTs that collapse down to a given MDT.
    Any $P$-cluster $C_k^P$ of a BDT corresponds to a $P$-node in its MDT and covers a small sub-tree of the BDT representing an empty partial order on its $|C_k^P|+1$ labeled leaves. It can be replaced in the BDT by any sub-tree representing the empty partial order without changing the MDT, and there are $(2|C_k^{(P)}|\!-\! 1)!!$ such trees. Similarly, any $S$-cluster $C_k^S$ corresponds to a $S$-node in the MDT and covers a sub-tree of the BDT representing a total order on its leaves. It can be replaced in the BDT by any sub-tree representing the same total order.
    The Catalan numbers enter because $\mathcal{C}_{s-1}$ gives the number of BDTs representing a total order on $s$ elements (see proof Proposition~6).
    %\ref{prop:tree-total} 
    This last result is new, gives (\ref{eq:vsp-prior}) and completes the proof of Theorem~\ref{thm:vsp-prior}. 
\end{proof}

Theorem~\ref{thm:vsp-prior} gives the prior for a VSP in terms of the prior for one of the BDTs that represent that VSP. We can also parameterise VSPs using MDTs and this leads to the second MCMC scheme given in Appendix~C.2. 

\begin{corollary}\label{cor:mdt-prior}
For $m\in \M_{[n]}$ with internal nodes $\A$, let $c_i$ give the number of children of node $i\in A$ and let $P(m)=\{i\in\A: L_i(m)=\emptyset\}$ and $S(m)=\A\setminus P(m)$ give the sets of $P-$ and $S-$node labels. The prior for VSPs given in (\ref{eq:vsp-prior}) is equivalently a prior for MDTs,
    \begin{align}\label{eq:mdt-prior}
    \pi_{\V_{[n]}}(v(m)|q)&=\pi_{\M_{[n]}}(m|q)\\&=\frac{1}{(2n-3)!!}\prod_{i\in P(m)}(1-q)^{c_i-1}(2c_i-3)!!\nonumber \\
    &\quad \times\quad\prod_{j\in S(m)} \left(\frac{q}{2}\right)^{c_j-1}\mathcal{C}_{c_j-1}.\nonumber
    \end{align}
\end{corollary}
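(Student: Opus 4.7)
The plan is to derive the MDT prior directly by substituting the bijection from Lemma~\ref{lemma:mdt-vsp} into the VSP prior formula of Theorem~\ref{thm:vsp-prior}, then translating cluster-size statistics of any representative BDT into child-count statistics of the MDT.

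First, since $m_\V:\V_{[n]}\to \M_{[n]}$ is bijective, setting $m=m_\V(v)$ gives $\pi_{\M_{[n]}}(m|q)=\pi_{\V_{[n]}}(v(m)|q)$, so it suffices to rewrite the right-hand side of (\ref{eq:vsp-prior}) in terms of attributes of $m$. Pick any $t\in t(v)$. Recall that the $P$-clusters of $t$ are in one-to-one correspondence with the $P$-nodes of $m$, and the $S$-clusters with the $S$-nodes. The key combinatorial observation is the size translation: a maximal connected cluster of $k$ internal nodes of the same type in the binary tree $t$ collapses to a single internal node of the MDT with exactly $k+1$ children (a binary tree with $k$ internal nodes has $k+1$ leaves, and the leaves of the sub-cluster become the children of the collapsed MDT node). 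Hence for the $P$-node $i\in P(m)$ corresponding to cluster $C_k^{(P)}$ we have $|C_k^{(P)}|=c_i-1$, and similarly $|C_{k'}^{(S)}|=c_j-1$ for the $S$-node $j\in S(m)$ corresponding to $C_{k'}^{(S)}$.

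Using this translation, the cluster-size products in (\ref{eq:vsp-prior}) become
\begin{equation*}
\prod_{k=1}^{K_P}(2|C_k^{(P)}|-1)!!=\prod_{i\in P(m)}(2c_i-3)!!,\qquad \prod_{k'=1}^{K_S}\mathcal{C}_{|C_{k'}^{(S)}|}=\prod_{j\in S(m)}\mathcal{C}_{c_j-1}.
\end{equation*}
It remains to rewrite the BDT-prior factor $\pi_{\T_{[n]}}(t|q)$ from (\ref{eq:tree-prior}). Every internal node of $t$ belongs to exactly one $S$- or $P$-cluster, so
\begin{equation*}
S(t)=\sum_{j\in S(m)}(c_j-1),\qquad n-1-S(t)=\sum_{i\in P(m)}(c_i-1),
\end{equation*}
which in turn gives $(q/2)^{S(t)}=\prod_{j\in S(m)}(q/2)^{c_j-1}$ and $(1-q)^{n-S(t)-1}=\prod_{i\in P(m)}(1-q)^{c_i-1}$. (As a sanity check, summing $(c_i-1)$ over all internal nodes of $m$ equals the number of edges in $m$ minus the number of internal nodes, which is $n-1$, matching the number of internal nodes in $t$.)

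Putting the pieces together, $\pi_{\T_{[n]}}(t|q)=\frac{1}{(2n-3)!!}\prod_{j\in S(m)}(q/2)^{c_j-1}\prod_{i\in P(m)}(1-q)^{c_i-1}$, and multiplying by the translated cluster products yields exactly (\ref{eq:mdt-prior}). No step is genuinely difficult here; the only mild obstacle is verifying the cluster-size-to-children identity carefully and confirming that the exponent bookkeeping partitions the $n-1$ internal BDT nodes cleanly into the two products — after which the corollary is pure substitution into Theorem~\ref{thm:vsp-prior}.
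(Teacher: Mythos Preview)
Your proof is correct and follows exactly the approach the paper takes: substitute (\ref{eq:tree-prior}) into (\ref{eq:vsp-prior}) and use the fact that a cluster with $c_i-1$ internal nodes corresponds to an MDT node with $c_i$ children (equivalently, a binary tree with $c_i$ leaves has $c_i-1$ internal nodes). The paper's own proof is a single sentence to this effect; you have simply spelled out the bookkeeping that makes the substitution go through.
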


\begin{proof}[Corollary~\ref{cor:mdt-prior}]
Substitute (\ref{eq:tree-prior}) into (\ref{eq:vsp-prior}) and note a tree with $c_i$ leaves has $c_i-1$ internal nodes. This result gives a convenient representation for prior evaluation. 
\end{proof}

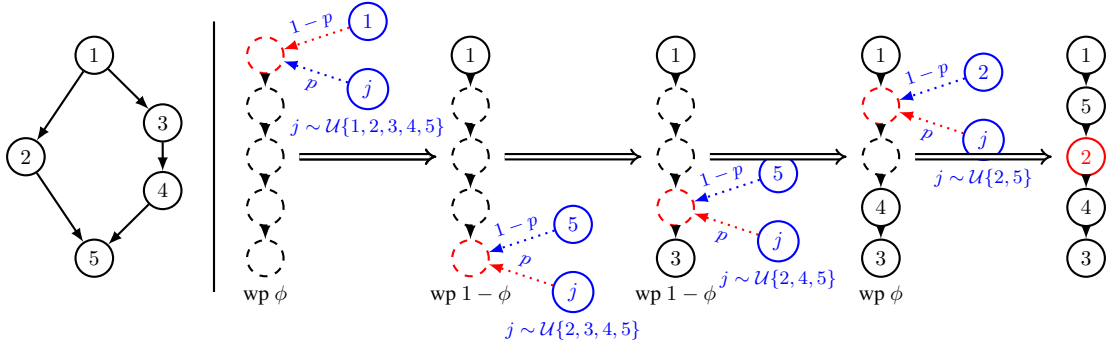
\begin{figure*}
\begin{minipage}{\linewidth}
    \centering
    \begin{tikzpicture}[thick,scale=.9, every node/.style={scale=0.8}]
        \node[draw, circle, minimum width=.6cm] (1) at (0, 1) {$1$};
        \node[draw, circle, minimum width=.6cm] (2) at (-1, -0.5) {$2$};
        \node[draw, circle, minimum width=.6cm] (3) at (1, 0) {$3$};
        \node[draw, circle, minimum width=.6cm] (4) at (1, -1) {$4$};
        \node[draw, circle, minimum width=.6cm] (5) at (0, -2) {$5$};
        \draw [solid] (1.75,-2.5) -- (1.75,1.5);
        \node[draw, circle, dashed, color = red, minimum width=.6cm] (6) at (2.5, 1) {};
        \node[draw, circle, dashed, minimum width=.6cm] (7) at (2.5, 0.25) {};
        \node[draw, circle, dashed, minimum width=.6cm] (8) at (2.5, -0.5) {};
        \node[draw, circle, dashed, minimum width=.6cm] (9) at (2.5, -1.25) {};
        \node[draw, circle, dashed, minimum width=.6cm, label=below:{wp $\phi$}] (10) at (2.5, -2) {};
        \node[draw, circle, color=blue, minimum width=.6cm] (21) at (4, 1.5) {$1$};
        \node[draw, circle, color=blue, minimum width=.6cm,label=below:\textcolor{blue}{\small $j\sim \mathcal{U}\{1,2,3,4,5\}$}] (22) at (4, .5) {$j$};
        \node[draw, circle, minimum width=.6cm] (11) at (5.5, 1) {$1$};
        \node[draw, circle, dashed, minimum width=.6cm] (12) at (5.5, 0.25) {};
        \node[draw, circle, color=blue, minimum width=.6cm] (23) at (7, -1.5) {$5$};
        \node[draw, circle, color=blue, minimum width=.6cm,label=below:\textcolor{blue}{\small $j\sim \mathcal{U}\{2,3,4,5\}$}] (24) at (7, -2.5) {$j$};
        \node[draw, circle, dashed, minimum width=.6cm] (13) at (5.5, -0.5) {};
        \node[draw, circle, dashed, minimum width=.6cm] (14) at (5.5, -1.25) {};
        \node[draw, circle, dashed, minimum width=.6cm,color=red, label=below:{wp $1-\phi$}] (15) at (5.5, -2) {};
        \node[draw, circle, minimum width=.6cm] (25) at (8.5, 1) {$1$};
        \node[draw, circle, dashed, minimum width=.6cm] (26) at (8.5, 0.25) {};
        \node[draw, circle, color=blue, minimum width=.6cm] (30) at (10, -.75) {$5$};
        \node[draw, circle, color=blue, minimum width=.6cm,label=below:\textcolor{blue}{\small $j\sim \mathcal{U}\{2,4,5\}$}] (31) at (10, -1.75) {$j$};
        \node[draw, circle, dashed, minimum width=.6cm] (27) at (8.5, -0.5) {};
        \node[draw, circle, dashed, color=red, minimum width=.6cm] (28) at (8.5, -1.25) {};
        \node[draw, circle, minimum width=.6cm, label=below:{wp $1-\phi$}] (29) at (8.5, -2) {$3$};
        \node[draw, circle, minimum width=.6cm] (32) at (11.5, 1) {$1$};
        \node[draw, circle, dashed, color=red, minimum width=.6cm] (33) at (11.5, 0.25) {};
        \node[draw, circle, color=blue, minimum width=.6cm] (37) at (13, 0.75) {$2$};
        \node[draw, circle, color=blue, minimum width=.6cm,label=below:\textcolor{blue}{\small $j\sim \mathcal{U}\{2,5\}$}] (38) at (13, -.25) {$j$};
        \node[draw, circle, dashed, minimum width=.6cm] (34) at (11.5, -0.5) {};
        \node[draw, circle,minimum width=.6cm] (35) at (11.5, -1.25) {$4$};
        \node[draw, circle, minimum width=.6cm, label=below:{wp $\phi$}] (36) at (11.5, -2) {$3$};
        \node[draw, circle, minimum width=.6cm] (16) at (14.5, 1) {$1$};
        \node[draw, circle, minimum width=.6cm] (17) at (14.5, 0.25) {$5$};
        \node[draw, circle, color=red, minimum width=.6cm] (18) at (14.5, -0.5) {$2$};
        \node[draw, circle, minimum width=.6cm] (19) at (14.5, -1.25) {$4$};
        \node[draw, circle, minimum width=.6cm] (20) at (14.5, -2) {$3$};
        \node at ($(27)!.7!(18)$) {\ldots};
        \draw[-latex] (1) -- (2);
        \draw[-latex, color=red, dotted] (21) -- (6) node[midway,above,sloped] {\textcolor{blue}{\small{$1-p$}}};
        \draw[-latex, color=blue, dotted] (22) -- (6) node[midway,below,sloped] {\textcolor{blue}{\small{$p$}}};
        \draw[-latex, color=blue, dotted] (23) -- (15) node[midway,above,sloped] {\textcolor{blue}{\small{$1-p$}}};
        \draw[-latex, color=red, dotted] (24) -- (15) node[midway,above,sloped] {\textcolor{blue}{\small{$p$}}};
        \draw[-implies,double equal sign distance] (3,-.5) -- (5,-.5);
        \draw[-latex, color=blue, dotted] (30) -- (28) node[midway,above,sloped] {\textcolor{blue}{\small{$1-p$}}};
        \draw[-latex, color=red, dotted] (31) -- (28) node[midway,below,sloped] {\textcolor{blue}{\small{$p$}}};
        \draw[-latex, color=blue, dotted] (37) -- (33) node[midway,above,sloped] {\textcolor{blue}{\small{$1-p$}}};
        \draw[-latex, color=red, dotted] (38) -- (33) node[midway,below,sloped] {\textcolor{blue}{\small{$p$}}};
        \draw[-implies,double equal sign distance] (3,-.5) -- (5,-.5);
        \draw[-implies,double equal sign distance] (6,-.5) -- (8,-.5);
        \draw[-implies,double equal sign distance] (9,-.5) -- (11,-.5);
        \draw[-implies,double equal sign distance] (12,-.5) -- (14,-.5);
        \draw[-latex] (2) -- (5);
        \draw[-latex] (1) -- (3);
        \draw[-latex] (3) -- (4);
        \draw[-latex] (4) -- (5);
        \draw[-latex] (6) -- (7);
        \draw[-latex] (7) -- (8);
        \draw[-latex] (8) -- (9);
        \draw[-latex] (9) -- (10);
        \draw[-latex] (11) -- (12);
        \draw[-latex] (12) -- (13);
        \draw[-latex] (13) -- (14);
        \draw[-latex] (14) -- (15);
        \draw[-latex] (16) -- (17);
        \draw[-latex] (17) -- (18);
        \draw[-latex] (18) -- (19);
        \draw[-latex] (19) -- (20);
        \draw[-latex] (25) -- (26); 
        \draw[-latex] (26) -- (27); 
        \draw[-latex] (27) -- (28); 
        \draw[-latex] (28) -- (29); 
        \draw[-latex] (32) -- (33); 
        \draw[-latex] (33) -- (34); 
        \draw[-latex] (34) -- (35); 
        \draw[-latex] (35) -- (36); 
    \end{tikzpicture}
    \captionof{figure}{One example list simulation process from the VSP $v_0$ (left) via the QJ-B observation model. The simulated list is displayed on the right. }\label{fig:QJB}
\end{minipage}
\end{figure*}

\section{BI-DIRECTIONAL QUEUE-JUMPING OBSERVATION MODEL}\label{sec:QJ-B-defn-main}

Our data is a collection of $N$ lists. For $j\in [N]$ let $o_j\subseteq [n],\ o_j=\{o_1,...,o_{n_j}\}$ be the actors present when the $j$'th ranking list was observed and let $y_j\in \P_{o_j},\ y_j=(y_{j,1},...,y_{j,n_j})$ be the observed list, just an ordered version of $o_j$. Let $y=(y_1,...,y_N)$ be the list of lists. The `queue-based' observation model given in \cite{nicholls122011partial} models list data as a realisation of a random queue constrained to put higher status individuals before those of lower status. In this model the queue is dynamic. It forms and then unconstrained pairs of actors swap places at random. If this process reaches equilibrium before the queue is read off then the resulting list is a uniform draw from the linear extensions of the constraining social hierarchy \citep{Karzanov91}. %Let $x\in \P_{\tilde o}$ be a single list ranking actors $\tilde o\subseteq [n]$ and $v\in \V_{[n]}$ a VSP constraining actors, and having a suborder $v[\tilde o]$ over the actors in $\tilde o$. The noise-free observation model derived from the queue process has $x\sim \U(\L[v[\tilde o]])$, so $x$ is uniform on the \LE s of the suborder $v[\tilde o]$ constraining the actors $\tilde o$ actually present.
In this noise-free model $y_j\sim \U(\L[v[o_j]])$ independently for $j\in [N]$.

It is unlikely the observations are ``error free''. In a ``queue-jumping'' model (QJ-U, see Appendix B.1%\ref{sec:QJ-U-appendix}
and \cite{nicholls122011partial} for details) the queue is read from the top: with probability $p\in [0,1]$ the ``next'' person in the queue is drawn at random from those remaining, ignoring the social hierarchy; otherwise they are the first person in the remaining \LE. 
The queue can also be read from the bottom up. In this model (QJ-D) actors fall down the queue. We think of these events as actors arriving while the queue is being read.%In this paper we allow ``bi-directional queue-jumping'' (QJ-B). 

We would like to have a queue-based model in which displacement in both directions is possible. The resulting ``bi-directional queue-jumping'' model (QJ-B) is not simply a mixture of QJ-U and QJ-D, as it allows displacement in both directions within a single realisation. The cost of evaluating a QJ-B likelihood is exponential in $n$. However, for the application in Section~\ref{sec:acta-data-analysis-outline} there is a subset of actors (bishops) known a priori to appear as a group. Separate modelling of this manageable subset ($n\simeq 20$) is well-motivated. Although QJ-B cannot be evaluated for a general partial order (counting \LE s is prohibitive) it is fine for a VSP. 

Like QJ-U, QJ-B ranks by repeated selection. \Fig~\ref{fig:QJB} provides an example QJ-B list-realisation from VSP $v_0$. A generic list $x\in \P_{[n]}$ is built up from both ends (see Appendix B.2). Let $z\in\{0,1\}^{n-1}$ with $z_k\sim Bern(\phi)$. Here $z_k=0$ indicates the $k$'th actor to be added to the list was placed bottom-up in the QJ-D model and $z_k=1$ indicates they were placed top-down in the QJ-U model. In \Fig~\ref{fig:QJB}, $z=(1,0,0,1)$. If we let $U_0=0$ then $U_k=U_{k-1}+z_k$ gives the number of places filled from the top after the $k$'th actor has arrived, so if $z_k=1$ then the $k$'th actor was placed into position $i_k=U_k$ in $x$. Similarly, if $D_0=n+1$ then $D_k=D_{k-1}-(1-z_k)$ tracks places filled from the bottom and gives the placement index $i_k=D_k$ in $x$ when $z_k=0$, so 
%\begin{equation}\label{eq:bi-next-index}
    $i_k=z_k U_k+(1-z_k)D_k$
%\end{equation}
gives the position in $x$ into which the $k$'th actor was added. If $z=(1,0,0,1)$, then $(i_1,...,i_4)=(1,5,4,2)$ (and $i_5=3$, the only remaining place).
%after step 3 the list was $(x_1,\emptyset,\emptyset,x_4,x_5)$ with $x_2,x_3$ not yet realised. 

\begin{definition}[Bi-Directional Queue-Jumping Model]
    Let $L_T(v)=|\L[v]|$ be the number of \LE s of VSP $v\in\V_{ [n]}$ and for $i\in [n]$ let $T_i(v)=|\{l\in\L[v]:l_1=i\}|$ give the number of \LE s with actor $i$ at the top. Let $B_i(v)=|\{l\in\L[v]:l_n=i\}|$ give the number of \LE s with actor $i$ at the bottom. If $z\in \{0,1\}^{n-1}$ is given then $i_k=i_k(z), k=1,...,n$ is given above. The observation model for QJ-B for a list $x\in\P_{[n]}$ given $z$ is
    \begin{align*}
       &Q_{bi}(x|z,v,p,\phi)\!=\!\prod_{k=1}^{n-1}\!
       [\phi\mathbbm{1}_{\{z_k=0\}}
       Q_{bi}(x_{i_k}|x_{i_{1:k-1}},z_k,v,p)\\
       &\quad +\quad (1-\phi)\mathbbm{1}_{\{z_k=1\}}
       Q_{bi}(x_{i_k}|x_{i_{1:k-1}},z_k,v,p)],
       \end{align*}
       where
       \begin{align*}
       &Q_{bi}(x_{i_k}|x_{i_{1:k-1}},z_k=0,v,p)=\\
       &\frac{p}{n-k+1}+(1-p)\frac{T_{x_{i_k}}(v[x_{[n]\backslash\{i_1,\dots,i_{k-1}\}}])}{L_T(v[x_{[n]\backslash\{i_1,\dots,i_{k-1}\}}])},
       \end{align*}
       \begin{align*}
       & Q_{bi}(x_{i_k}|x_{i_1:k-1},z_k=1,v,p)=\\
       &\frac{p}{n-k+1}+(1-p)\frac{B_{x_{i_k}}(v[x_{[n]\backslash\{i_1,\dots,i_{k-1}\}}])}{L_T(v[x_{[n]\backslash\{i_1,\dots,i_{k-1}\}}])},
       \end{align*}
       and marginally,
    \begin{align}\label{eq:QJ-bidirectn}
        Q_{bi}(x|v,p,\phi)=\!\!\!\!\!\!\sum_{z\in\{0,1\}^n}\!\!\!\! Q_{bi}(x|z,v,p,\phi) p(z|\phi)
    \end{align}
    where $p(z|\phi)=\phi^{\sum_i z_i}(1-\phi)^{n-\sum_i z_i}$.
\end{definition}
We give a generative model realising $x\sim Q_{bi}$ in Appendix B.2. %\ref{sec:QJ-B-app-sim}.
This distribution reduces to $Q_{up}$/QJ-U in Appendix B.1 %\ref{sec:QJ-U-appendix} 
when $\phi=1$ (and $Q_{down}$/QJ-D when $\phi=0$). We use this nesting to investigate whether QJ-U or QJ-D or QJ-B fits the data better. This is of interest in our application as different error types correspond to obvious physical mechanisms (downwards displacement may be ``arrived late'' and upwards displacement may be ``my friend the King is present''). When $p=0$ this is the noise free model for every $\phi\in [0,1]$, so $\phi$ is not identifiable in the noise free setting.  

Counting \LE s of a VSP (evaluating $L_T(v)$ etc) is $O(n)$ so the computational complexity for naive evaluation of $Q_{bi}$ using (\ref{eq:QJ-bidirectn}) is $O(n^2 2^n)$. We used a recursion (Algorithm~B.3%\ref{alg:recur-bi-directn}
) of computational complexity 
% $O(\sum_{i=0}^{n-1}2^{i+1}(n-i))=O(2^{n+2}-2n-4)=$
$O(n2^n)$. This avoids repeated evaluation of \LE-counts for the same suborders and (by Proposition~7 
%\ref{prop:bi-directn} 
in Appendix~B.3%\ref{sec:QJ-B-app-eval}
) evaluates $Q_{bi}$.

% \subsection{Axiom of Choice for VSP}

% Most of ranking model are choice-based. One classic choice system is the Luce's choice axiom, also called the independence of irrelevant alternatives (IIA). Given a finite set $\mathcal{X}$ of $n$ alternatives. The Luce's choice axiom states that for any $x,y\in\mathcal{X}$ and choice sets $A,B\in \mathcal{X}$ with $x,y\in A,B$, we have $$\frac{P(x|A)}{P(y|A)}=\frac{P(x|B)}{P(y|B)}.$$
% A natural choice-based ranking to the IIA is the MNL choice model, which leads to the popular Plackett-Luce model under repeated selection. 

% We aim to propose a choice axiom for the VSPs.

% \begin{equation*}
%     \mathbb{P}(a \prec b) = \frac{\# \text{ list w/ }a\prec b}{\# list}.
% \end{equation*}

% \subsection{Axiom of Choice for CRS}

% $\mathbb{P}(a\prec b)$ doesn't change if we change $c \& d$. 

\section{SUMMARISING THE VSP POSTERIOR}\label{sec:posterior}

Bayesian inference is straightforward in principle given an explicit prior distribution over VSPs and an observation model $Q=Q_{up}$ or $Q=Q_{bi}$ for our $N$ ranking-lists. We can represent a VSP as a MDT (since the mapping is one-to-one) or carry out Bayesian inference on the latent space of BDTs $t\in \T_{[n]}$ and use the fact that they marginalise to MDTs. We present the posteriors for BDT and VSP. Let $\psi=(p,\phi)$ for QJ-B and $\psi=p$ for QJ-U.

The posterior for the BDT $t\in \T_{[n]}$ is 
    \begin{equation}\label{eq:post-BDT}
        \pi_{\T_{[n]}}(t,q,\psi|y)\propto \pi_{\T_{[n]}}(t|q)\pi(q,\psi) Q(y|v(t),\psi)
    \end{equation}
    The posterior distribution for the VSP $v\in \V_{[n]}$ is
    \begin{equation}\label{eq:post-VSP}
        \pi_{\V_{[n]}}(v,q,\psi|y) \propto \pi_{\V_{[n]}}(v|q)\pi(q,\psi) Q(y|v,\psi),
    \end{equation}
    where we use the equivalent MDT posterior with prior given in Corollary~\ref{cor:mdt-prior} %\ref{cor:mdt-prior} 
    for VSPs in $(\ref{eq:post-VSP})$.
    % and the expression for the MDT prior given in (\ref{eq:mdt-prior}) which is slightly more easily evaluated. 

\begin{restatable}[Posterior Marginals]{prop}{posteriors}
\label{prop:posteriors}
Sampling the BDT posterior $(t,q,\psi)\sim \pi_{\T_{[n]}}(\cdot|y)$ gives samples $(v(t),q,\psi)\sim \pi_{\V_{[n]}}(\cdot|y)$ from the VSP posterior (see Appendix A.4 %\ref{proof:prop:posteriors}
for proof).
\end{restatable}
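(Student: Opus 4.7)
The plan is a direct marginalisation/pushforward argument. The map $v:\T_{[n]}\to\V_{[n]}$ partitions $\T_{[n]}$ into the fibres $t(v)$, and since $v(t)=v$ is the only place $t$ enters the data term, the likelihood $Q(y|v(t),\psi)$ is constant on each fibre. So I plan to sum the BDT posterior in \eqref{eq:post-BDT} over $t\in t(v)$ and recognise the result as the VSP posterior in \eqref{eq:post-VSP}.

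Concretely, I would first fix $v\in\V_{[n]}$, $q$ and $\psi$, and compute
\begin{align*}
\sum_{t\in t(v)}\!\!\pi_{\T_{[n]}}(t,q,\psi|y)
&\propto \sum_{t\in t(v)}\pi_{\T_{[n]}}(t|q)\,\pi(q,\psi)\,Q(y|v(t),\psi)\\
&= \pi(q,\psi)\,Q(y|v,\psi)\sum_{t\in t(v)}\pi_{\T_{[n]}}(t|q)\\
&= \pi(q,\psi)\,Q(y|v,\psi)\,\pi_{\V_{[n]}}(v|q),
\end{align*}
where the second line uses that $v(t)=v$ for every term in the sum and the last line is just the definition of $\pi_{\V_{[n]}}(v|q)$ in \eqref{eq:vsp_prior}. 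The right-hand side is, up to a constant independent of $(v,q,\psi)$, the VSP posterior \eqref{eq:post-VSP}. Since both sides are probability distributions in $(v,q,\psi)$, the two normalising constants must coincide, which justifies replacing $\propto$ by equality.

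The conclusion then follows from a standard pushforward argument: if $(t,q,\psi)\sim \pi_{\T_{[n]}}(\cdot|y)$ and $g:(t,q,\psi)\mapsto(v(t),q,\psi)$, then for any measurable $A\subseteq \V_{[n]}\times \text{(hyperparameter space)}$ we have $\Pr(g(t,q,\psi)\in A)=\sum_{(v,q,\psi)\in A}\sum_{t\in t(v)}\pi_{\T_{[n]}}(t,q,\psi|y)$, which by the computation above equals $\sum_{(v,q,\psi)\in A}\pi_{\V_{[n]}}(v,q,\psi|y)$.

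The only point that needs a line of justification is that $Q(y|v(t),\psi)$ really does depend on $t$ only through $v(t)$; this is immediate from the definitions of $Q_{up}$ and $Q_{bi}$ in \Sec~\ref{sec:QJ-B-defn-main}, which are functions of the VSP $v$ (through counts of \LE s of sub-orders of $v$) and never of the particular BDT representing it. I do not anticipate any real obstacle beyond making this observation explicit and keeping track of the fact that the two posteriors share the same normalising constant because both integrate to one.
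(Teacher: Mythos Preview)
Your proof is correct and is essentially identical to the paper's own argument: both sum the BDT posterior over the fibre $t(v)$, pull out $Q(y|v(t),\psi)=Q(y|v,\psi)$ as constant on the fibre, and use the definition \eqref{eq:vsp_prior} of $\pi_{\V_{[n]}}(v|q)$ to identify the result with the VSP posterior. Your additional remarks on the shared normalising constant and the explicit pushforward are more careful than the paper's version but add nothing new.
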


We implemented separate MCMC samplers targeting both (\ref{eq:post-BDT}) and (\ref{eq:post-VSP}). Our MCMC algorithms are given in Appendix~C. We checked that the VSP-posterior marginals for the two implementations were equal (up to Monte-Carlo error). We implemented MCMC targeting the BDT posterior (\ref{eq:post-BDT}) first, as BDT data structures are slightly more straightforward to handle than the MDT data structures needed to target the VSP posterior in (\ref{eq:post-VSP}). All results in the next section were computed using the BDT-MCMC.

\section{APPLICATIONS}\label{sec:results}

\subsection{DATA AND ANALYSES}\label{sec:acta-data-analysis-outline}

We analyse a dataset accessed through a database made for ``The Charters of William II and Henry I'' project by Professor Richard Sharpe and Dr Nicholas Karn \citep{sharpe14}. These data
%\footnote{Kindly provided by Dr David Johnson of St Peter’s College, University of Oxford and Dr Nicholas E Karn, History, School of Humanities, University of Southampton.} 
collect witness lists from legal documents from England and Wales in the eleventh and twelfth century. %The data are discussed in \cite{nicholls2022ts}. 
Witness lists respect a rigid social hierarchy: higher status individuals come ahead of lower status individuals in the lists. \Fig~D.1 %\ref{fig:ex-list} 
%\ref{sec:data-and-results-app} 
is an example list.

We represent the hierarchy on actors $[n]$ appearing in the lists as a partial order which is a VSP $v\in \V_{[n]}$ and model a list as the outcome of one of the queuing processes described in Section~\ref{sec:QJ-B-defn-main}.
We imagine the actors lining up to witness the document in a virtual queue.
%Actors $o\subset [n]$ are present. They form a (virtual) queue, exchanging positions at random, subject to respecting the VSP.
%The equilibrated queue is a uniform draw from the \LE s of the suborder $v[o]$ \citep{Karzanov91}.
%Occasional departures from the rules of placement are allowed by the error model.

Lists are witnessed by people from all walks of life and we have their titles. These include ``others'' (actors who lack titles). Historians are interested in social hierarchies and how they change over time.  For illustration we reconstruct hierarchies in three snapshots: the years 1080-84, 1126-30 and 1134-38. The last two cover periods shortly before and after Stephen became King, a time of great change. The 5-year intervals are short enough for any changes in the hierarchy to be slight \citep{nicholls2022ts}. For ease of visualisation we present results for individuals appearing in at least 5 lists (5LPA data) here and results on all actors (1LPA data) in Appendix~D.1.1.
%\ref{sec:1lpa}. 
We fit VSP/QJ-U to all data and fit VSP/QJ-B to 2 of the 3 5LPA data sets (not 1134-38, as QJ-B has runtime growing exponentially with the length of the longest list). However, relations between bishops in 1134-38 are of particular interest so we present VSP/QJ-B results for this subgroup. Table D.1 
%\ref{tab:data} 
%\ref{sec:data-and-results-app} 
summarises the data in the different experiments on the Royal Acta data.

In a separate analysis illustrating how our methods apply more generally to any rank-order data, we give an analysis of Formula 1 race outcomes for the 2021 season. Data and results are given in Appendix~D.2.
%The bi-directional queue-jumping (QJ-B) model gives high flexibility, its computational constraint restricts our analysis to smaller datasets. 

The prior for error probability $p$ and for $q$ (probability for an $S$-node) is given in \Fig~\ref{fig:pq}. All fitting is done using MCMC in the BDT representation, Algorithm~C.1. 
%\ref{alg:MCMC-BDT}. 
For any given model we draw MCMC samples $t^{(k)},p^{(k)},q^{(k)},\phi^{(k)}\sim \pi_{\T_{[n]}}(\cdot|y)$ for $k=1,...,K$ and set $v^{(k)}=v(t^{(k)})$ per Proposition~\ref{prop:posteriors}. Example MCMC traces are given in the supplement with Effective Sample Size (ESS) values (Appendix~D.1).
Sampled VSPs are summarised using consensus VSPs:
$V^{con}(\epsilon)$ includes order relation/edge $\langle i,j\rangle$ if the relation appears more than $\epsilon K$ times in the MCMC output. We color edges black if they are in $V^{con}(\epsilon)$ at $\epsilon=0.5$ but not $\epsilon=0.9$ and red if they are supported at $\epsilon=0.9$. We plot transitive reductions. These omit strongly supported edges from the top to the bottom of the DAG for clarity.

In \sec~\ref{sec:QJ-up}, we fit the QJ-U and QJ-B models to the 5LPA data and make a model comparison using Bayes factors. Consensus orders for the 1LPA data are given in Appendix~D.1.1. 
%\ref{sec:1lpa}.
We additionally compare these models with bucket order models, a Plackett-Luce mixture, Mallows mixture and latent partial order model in Appendix~E. We carry out these tests on both the Royal Acta data and the F1 race result data. We report computing time measurements for counting \LE s for the latent partial order model and the VSP. They are compared empirically in Appendix~F.%\ref{sec:runtime}. 

\subsection{RESULTS}\label{sec:QJ-up}

%We analyzed all actors (1LPA) and actor appearing in at least 5 different list (5LPA). The result for 5LPA is presented here for easy visualisation and the full analysis results for 1LPA in the periods, 1180-1184 and 1134-1138 are in Supplemantary material. %While it is possible to analyse the full datasets, we only present our inference results on their subsets where we only consider actors appearing in at least 5 different list (5LPA) for easier visualisation. See supplementary material for analysis on the full datasets (1LPA) for the periods, 1180-1184 and 1134-1138. 

%\subsubsection{RECONSTRUCTION ACCURACY TESTS}
{\it We begin by making reconstruction-accuracy tests on synthetic data.} Our list data are incomplete, in the sense that the membership in list $i=1,...,N$ is $o_i$ not $[n]$ and the $N$-values in Table~D.1 %\ref{tab:data} 
are not much larger than the number of actors $n$. In order to measure the reliability of the reconstructions which follow we take representative parameters (parameters sampled from the corresponding posterior, the last sampled state $v^{(K)},p^{(K)},q^{(K)},\phi^{(K)}$) and generate synthetic data with the same list-membership and length structures as the real data. %in each of the three time-intervals (omitting 34-38(b)), so $y'_i\sim p(\cdot|v[o_i],p,\phi),\ i=1,...,N$. 
%This reconstruction accuracy test is implemented to each data set in %and consensus VSPs were close to true VSPs.%: if our models are correct then the data contain enough information to reconstruct the truth accurately. Appencix D.1.1-1.3. %\ref{sec:5lpa-u}. 
The ROC curves in Fig.~D.12 (5LPA data and QJ-U) %\ref{fig:roc} 
and D.15 (5LPA data and QJ-B) %\ref{fig:roc-b} 
for consensus orders $V^{con}(\epsilon)$ show the proportion of inferred false-positive and true-positive relations increasing with decreasing $\epsilon$ from $(0,0)$ at $\epsilon=1$ (the consensus order is empty) to $(1,1)$ at $\epsilon=0$ (complete graph). For each simulated data set there is $\epsilon$ giving high true-positive and low false-positive reconstructed relation fractions: if our model is accurate then we reconstruct relations well.

%Validation on synthetic data like that conducted for QJ-U in \sec~\ref{sec:5lpa-u} was repeated for the QJ-B model. The ROC curve shown in \Fig~\ref{fig:roc-b} show that very high true positive rates are achieved at very low false positive rates. Our fitting procedures successfully capture the true partial orders and the data contain enough information to reconstruct order relations (if the models are correct).

%\subsubsection{FITTED CONSENSUS ORDERS}
{\it We next report consensus partial orders.} Consensus orders for actors color-coded by their professions are shown in \Fig~\ref{fig:34-38fullcon-order} and \ref{fig:8084con-order}. For both QJ-U and QJ-B models, we observe three clear social hierarchies: King $\succ$ Queen $\succ$ Duke appear at the top, in that order (when they are in the 5LPA data, in 1180-84 and 1134-38); then archbishop/prince $\succ$ bishops; the remaining professions (earl, count, chancellor, other) are ranked lower than bishops in a relatively complex hierarchy. 

\begin{minipage}{\linewidth}
    \centering
    \begin{tikzpicture}[thick,scale=.3, every node/.style={scale=0.3}]
        \node[draw, circle, minimum width=.7cm, fill=Goldenrod] (1) at (-7, -1.5) {};
        \node[draw, circle, minimum width=.7cm,fill=Fuchsia] (2) at (-7, -2.5) {};
        \node[draw, circle, minimum width=.7cm,fill=RoyalBlue] (3) at (-7, -3.5) {};
        \node[draw, circle, minimum width=.7cm,fill=Maroon] (4) at (-7, -4.5) {};
        \node[draw, circle, minimum width=.7cm,fill=Maroon] (5) at (-7, -5.5) {};
        \node[draw, circle, minimum width=.7cm,fill=Maroon] (6) at (-7, -6.5) {};
        \node[draw, circle, minimum width=.7cm,fill=Magenta] (7) at (-7, -7.5) {};
        \node[draw, circle, minimum width=.7cm,fill=Magenta] (8) at (-7, -8.5) {};
        \node[draw, circle, minimum width=.7cm,fill=Magenta] (9) at (-12, -9) {};
        \node[draw, circle, minimum width=.7cm,fill=Magenta] (10) at (-6.5, -9.5) {};
        \node[draw, circle, minimum width=.7cm,fill=Magenta] (11) at (-11, -10.5) {};
        \node[draw, circle, minimum width=.7cm,fill=Magenta] (12) at (-10, -10.5) {};
        \node[draw, circle, minimum width=.7cm,fill=Magenta] (13) at (-9, -10.5) {};
        \node[draw, circle, minimum width=.7cm,fill=Magenta] (14) at (-8, -10.5) {};
        \node[draw, circle, minimum width=.7cm,fill=Magenta] (15) at (-7, -10.5) {};
        \node[draw, circle, minimum width=.7cm,fill=Magenta] (16) at (-6,-10.5) {};
        \node[draw, circle, minimum width=.7cm,fill=Magenta] (17) at (-5, -10.5) {};
        \node[draw, circle, minimum width=.7cm,fill=Magenta] (18) at (-4, -10.5) {};
        \node[draw, circle, minimum width=.7cm,fill=Magenta] (19) at (-3, -10.5) {};
        \node[draw, circle, minimum width=.7cm,fill=Magenta] (20) at (-2, -10.5) {};
        \node[draw, circle, minimum width=.7cm,fill=Magenta] (21) at (-6.5, -11.5) {};
        \node[draw, circle, minimum width=.7cm,fill=SpringGreen] (22) at (-6.5, -12.5) {};
        \node[draw, circle, minimum width=.7cm,fill=SkyBlue] (23) at (-5.5, -13) {};
        \node[draw, circle, minimum width=.7cm,fill=SkyBlue] (24) at (-6.2, -13.5) {};
        \node[draw, circle, minimum width=.7cm,fill=Apricot] (25) at (-4.8, -13.5) {};
        \node[draw, circle, minimum width=.7cm,fill=SkyBlue] (26) at (-7, -14.5) {};
        \node[draw, circle, minimum width=.7cm,fill=SkyBlue] (27) at (-5.5, -14.5) {};
        \node[draw, circle, minimum width=.7cm,fill=SkyBlue] (28) at (-4, -14.5) {};
        \node[draw, circle, minimum width=.7cm,fill=SkyBlue] (29) at (-2.5, -14.5) {};
        \node[draw, circle, minimum width=.7cm,fill=Gray] (30) at (-8.5, -14.5) {};
        \node[draw, circle, minimum width=.7cm,fill=Gray] (31) at (-10, -14.5) {};
        \node[draw, circle, minimum width=.7cm,fill=Magenta] (32) at (-12, -14.5) {};
        \node[draw, circle, minimum width=.7cm,fill=Gray] (33) at (-11.5, -15.5) {};
        \node[draw, circle, minimum width=.7cm,fill=Gray] (34) at (-10, -15.5) {};
        \node[draw, circle, minimum width=.7cm,fill=Gray] (35) at (-8.5, -15.5) {};
        \node[draw, circle, minimum width=.7cm,fill=Gray] (36) at (-7, -15.5) {};
        \node[draw, circle, minimum width=.7cm,fill=Gray] (37) at (-5.5, -15.5) {};
        \node[draw, circle, minimum width=.7cm,fill=SkyBlue] (38) at (-7, -16.5) {};
        \node[draw, circle, minimum width=.7cm,fill=SkyBlue] (39) at (-9, -16.5) {};
        \node[draw, circle, minimum width=.7cm,fill=Gray] (40) at (-11, -16.5) {};
        \node[draw, circle, minimum width=.7cm,fill=SkyBlue] (41) at (-13, -16.5) {};
        \node[draw, circle, minimum width=.7cm,fill=SkyBlue] (42) at (-5, -16.5) {};
        \node[draw, circle, minimum width=.7cm,fill=Gray] (43) at (-3, -16.5) {};
        \node[draw, circle, minimum width=.7cm,fill=Magenta] (44) at (-1, -16.5) {};
        \node[draw, circle, minimum width=.7cm,fill=SkyBlue] (45) at (-7.8, -17.5) {};
        \node[draw, circle, minimum width=.7cm,fill=Gray] (46) at (-9.4, -17.5) {};
        \node[draw, circle, minimum width=.7cm,fill=Gray] (47) at (-6.2, -17.5) {};
        \node[draw, circle, minimum width=.7cm,fill=Gray] (48) at (-4.6, -17.5) {};
        \node[draw, circle, minimum width=.7cm,fill=Gray] (49) at (-7, -18.5) {};
        \begin{pgfonlayer}{bg}
            \draw[-latex,red] (1) -- (2);
            \draw[-latex,red] (2) -- (3);
            \draw[-latex] (3) -- (4);
            \draw[-latex,red] (4) -- (5);
            \draw[-latex] (5) -- (6);
            \draw[-latex] (6) -- (7);
            \draw[-latex,red] (7) -- (8);
            \draw[-latex,red] (8) -- (9);
            \draw[-latex,red] (9) -- (10);
            \draw[-latex] (10) -- (11);
            \draw[-latex] (10) -- (12);
            \draw[-latex,red] (10) -- (13);
            \draw[-latex,red] (10) -- (14);
            \draw[-latex] (10) -- (15);
            \draw[-latex,red] (10) -- (16);
            \draw[-latex] (10) -- (17);
            \draw[-latex] (10) -- (18);
            \draw[-latex] (10) -- (19);
            \draw[-latex,red] (10) -- (20);
            \draw[-latex] (11) -- (21);
            \draw[-latex] (12) -- (21);
            \draw[-latex,red] (13) -- (21);
            \draw[-latex,red] (14) -- (21);
            \draw[-latex] (15) -- (21);
            \draw[-latex,red] (16) -- (21);
            \draw[-latex] (17) -- (21);
            \draw[-latex] (18) -- (21);
            \draw[-latex] (19) -- (21);
            \draw[-latex] (20) -- (21);
            \draw[-latex,red] (21) -- (22);
            \draw[-latex] (22) -- (23);
            \draw[-latex] (23) -- (24);
            \draw[-latex] (23) -- (25);
            \draw[-latex] (9) -- (32);
            \draw[-latex] (22) -- (31);
            \draw[-latex] (30) -- (40);
            \draw[-latex,red] (27) -- (40);
            \draw[-latex,red] (28) -- (40);
            \draw[-latex,red] (29) -- (40);
            \draw[-latex] (26) -- (40);
            \draw[-latex] (31) -- (40);
            \draw[-latex] (32) -- (40);
            \draw[-latex] (30) -- (42);
            \draw[-latex,red] (27) -- (42);
            \draw[-latex,red] (28) -- (42);
            \draw[-latex,red] (29) -- (42);
            \draw[-latex] (26) -- (42);
            \draw[-latex] (31) -- (42);
            \draw[-latex] (32) -- (42);
            \draw[-latex] (30) -- (38);
            \draw[-latex,red] (27) -- (38);
            \draw[-latex,red] (28) -- (38);
            \draw[-latex,red] (29) -- (38);
            \draw[-latex] (26) -- (38);
            \draw[-latex] (31) -- (38);
            \draw[-latex] (32) -- (38);
            \draw[-latex] (30) -- (43);
            \draw[-latex,red] (27) -- (43);
            \draw[-latex,red] (28) -- (43);
            \draw[-latex,red] (29) -- (43);
            \draw[-latex] (26) -- (43);
            \draw[-latex] (31) -- (43);
            \draw[-latex] (32) -- (43);
            \draw[-latex] (24) -- (30);
            \draw[-latex] (24) -- (29);
            \draw[-latex] (24) -- (28);
            \draw[-latex] (24) -- (27);
            \draw[-latex] (24) -- (26);
            \draw[-latex] (25) -- (30);
            \draw[-latex] (25) -- (28);
            \draw[-latex] (25) -- (27);
            \draw[-latex,red] (27) -- (33);
            \draw[-latex,red] (27) -- (34);
            \draw[-latex,red] (27) -- (35);
            \draw[-latex,red] (27) -- (36);
            \draw[-latex,red] (27) -- (37);
            \draw[-latex,red] (28) -- (33);
            \draw[-latex,red] (28) -- (34);
            \draw[-latex,red] (28) -- (35);
            \draw[-latex,red] (28) -- (36);
            \draw[-latex,red] (28) -- (37);
            \draw[-latex,red] (29) -- (33);
            \draw[-latex,red] (29) -- (34);
            \draw[-latex,red] (29) -- (35);
            \draw[-latex,red] (29) -- (36);
            \draw[-latex,red] (29) -- (37);
            \draw[-latex] (26) -- (33);
            \draw[-latex] (26) -- (34);
            \draw[-latex] (26) -- (35);
            \draw[-latex] (26) -- (36);
            \draw[-latex] (26) -- (37);
            \draw[-latex] (31) -- (33);
            \draw[-latex] (31) -- (34);
            \draw[-latex] (31) -- (35);
            \draw[-latex] (31) -- (36);
            \draw[-latex] (31) -- (37);
            \draw[-latex] (32) -- (33);
            \draw[-latex] (32) -- (34);
            \draw[-latex] (32) -- (35);
            \draw[-latex] (32) -- (36);
            \draw[-latex] (32) -- (37);
            \draw[-latex] (30) -- (33);
            \draw[-latex] (30) -- (34);
            \draw[-latex] (30) -- (35);
            \draw[-latex] (30) -- (36);
            \draw[-latex] (30) -- (37);
            \draw[-latex] (9) -- (44);
            \draw[-latex] (33) -- (39);
            \draw[-latex] (34) -- (39);
            \draw[-latex] (35) -- (39);
            \draw[-latex] (36) -- (39);
            \draw[-latex] (37) -- (39);
            \draw[-latex] (37) -- (39);
            \draw[-latex] (33) -- (41);
            \draw[-latex,red] (38) -- (45);
            \draw[-latex] (38) -- (46);
            \draw[-latex,red] (38) -- (47);
            \draw[-latex] (38) -- (48);
            \draw[-latex,red] (39) -- (45);
            \draw[-latex] (39) -- (46);
            \draw[-latex,red] (39) -- (47);
            \draw[-latex] (39) -- (48);
            \draw[-latex,red] (40) -- (45);
            \draw[-latex] (40) -- (46);
            \draw[-latex,red] (40) -- (47);
            \draw[-latex] (40) -- (48);
            \draw[-latex] (41) -- (45);
            \draw[-latex] (41) -- (46);
            \draw[-latex] (41) -- (47);
            \draw[-latex] (41) -- (48);
            \draw[-latex] (42) -- (45);
            \draw[-latex] (42) -- (46);
            \draw[-latex] (42) -- (47);
            \draw[-latex] (42) -- (48);
            \draw[-latex] (43) -- (45);
            \draw[-latex] (43) -- (47);
            \draw[-latex] (43) -- (48);
            \draw[-latex] (44) -- (45);
            \draw[-latex] (44) -- (47);
            \draw[-latex] (44) -- (48);
            \draw[-latex] (45) -- (49);
            \draw[-latex] (46) -- (49);
            \draw[-latex] (47) -- (49);
            \draw[-latex] (48) -- (49);
        \end{pgfonlayer}
    \end{tikzpicture}
    \qquad
    \begin{tikzpicture}[scale=0.4,every node/.style={scale=0.4}]
        \matrix [draw, left]{
          \node [circle,fill=Goldenrod,label=right:King] {}; \\
          \node [circle,fill=Fuchsia,label=right:Queen] {}; \\
          \node [circle,fill=RoyalBlue,label=right:Son of King] {}; \\
          \node [circle,fill=Maroon,label=right:Archbishop] {}; \\
          \node [circle,fill=Magenta,label=right:Bishop] {}; \\
          \node [circle,fill=Violet,label=right:Duke] {}; \\
          \node [circle,fill=SpringGreen,label=right:Chancellor] {};\\
          \node [circle,fill=SkyBlue,label=right:Earl] {}; \\
          \node [circle,fill=Apricot,label=right:Count] {}; \\
          \node [circle,fill=Gray,label=right:Other] {}; \\
        };
    \end{tikzpicture}
    
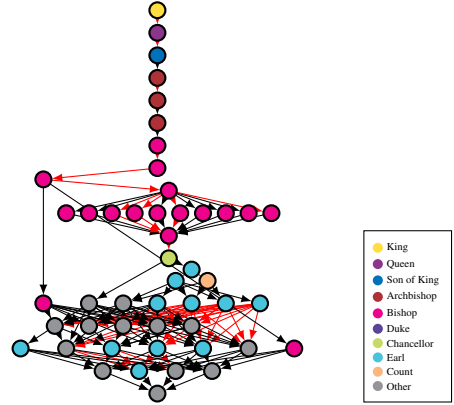
\captionof{figure}{VSP/QJ-U model. Consensus order for 1134-38 5LPA data. Significant/strong order relations are indicated by black/red edges respectively.}
    \label{fig:34-38fullcon-order}
\end{minipage}

\begin{figure}[h]
\begin{minipage}{\linewidth}
    \centering
    \begin{tikzpicture}[thick,scale=.17, every node/.style={scale=0.2}]
        \node[draw, circle, minimum width=.7cm,fill=Magenta] (50) at (19, 12) {};
        \node[draw, circle, minimum width=.7cm,fill=Magenta] (51) at (19, 10.5) {};
        \node[draw, circle, minimum width=.7cm,fill=Magenta] (52) at (19, 9) {};
        \node[draw, circle, minimum width=.7cm,fill=Magenta] (53) at (19, 7.5) {};
        \node[draw, circle, minimum width=.7cm,fill=Magenta] (54) at (16, 6.5) {};
        \node[draw, circle, minimum width=.7cm,fill=Magenta] (55) at (22, 6.5) {};
        \node[draw, circle, minimum width=.7cm,fill=Magenta] (56) at (17, 5.5) {};
        \node[draw, circle, minimum width=.7cm,fill=Magenta] (57) at (15, 5.5) {};
        \node[draw, circle, minimum width=.7cm,fill=Magenta] (58) at (16, 4.5) {};
        \node[draw, circle, minimum width=.7cm,fill=Magenta] (59) at (16, 3) {};
        \node[draw, circle, minimum width=.7cm,fill=Magenta] (60) at (19, 3) {};
        \node[draw, circle, minimum width=.7cm,fill=Magenta] (61) at (21, 3) {};
        \node[draw, circle, minimum width=.7cm,fill=Magenta] (62) at (23, 3) {};
        \node[draw, circle, minimum width=.7cm,fill=Magenta] (63) at (19, 1) {};
        \begin{pgfonlayer}{bg}
            \draw[-latex,red] (50) -- (51);
            \draw[-latex,red] (51) -- (52);
            \draw[-latex,red] (52) -- (53);
            \draw[-latex,red] (53) -- (54);
            \draw[-latex] (53) -- (55);
            \draw[-latex] (54) -- (56);
            \draw[-latex] (54) -- (57);
            \draw[-latex] (56) -- (58);
            \draw[-latex] (57) -- (58);
            \draw[-latex] (58) -- (59);
            \draw[-latex] (55) -- (56);
            \draw[-latex] (53) -- (60);
            \draw[-latex] (53) -- (61);
            \draw[-latex] (53) -- (62);
            \draw[-latex] (60) -- (63);
            \draw[-latex] (61) -- (63);
            \draw[-latex] (62) -- (63);
            \draw[-latex] (59) -- (63);
        \end{pgfonlayer}
        \node[draw, circle, minimum width=.7cm,fill=Gray] (64) at (-6, 1) {};
        \node[draw, circle, minimum width=.7cm,fill=Gray] (65) at (-6, 1.8) {};
        \node[draw, circle, minimum width=.7cm,fill=Apricot] (66) at (-9, 2.6) {};
        \node[draw, circle, minimum width=.7cm,fill=SkyBlue] (67) at (-9, 3.4) {};
        \node[draw, circle, minimum width=.7cm,fill=SkyBlue] (68) at (-9, 4.2) {};
        \node[draw, circle, minimum width=.7cm,fill=Magenta] (69) at (-9, 5) {};
        \node[draw, circle, minimum width=.7cm,fill=Magenta] (70) at (-6, 5.8) {};
        \node[draw, circle, minimum width=.7cm,fill=Magenta] (71) at (-6, 6.6) {};
        \node[draw, circle, minimum width=.7cm,fill=Magenta] (72) at (-6, 7.4) {};
        \node[draw, circle, minimum width=.7cm,fill=Magenta] (73) at (-6, 8.2) {};
        \node[draw, circle, minimum width=.7cm,fill=Apricot] (74) at (-3, 3.8) {};
        \node[draw, circle, minimum width=.7cm,fill=Maroon] (75) at (-9, 8.9) {};
        \node[draw, circle, minimum width=.7cm,fill=RoyalBlue] (76) at (-3, 8.9) {};
        \node[draw, circle, minimum width=.7cm,fill=Maroon] (77) at (-6, 9.6) {};
        \node[draw, circle, minimum width=.7cm,fill=Violet] (78) at (-6, 10.4) {};
        \node[draw, circle, minimum width=.7cm,fill=Fuchsia] (79) at (-6, 11.2) {};
        \node[draw, circle, minimum width=.7cm,fill=Goldenrod] (80) at (-6, 12) {};
        \begin{pgfonlayer}{bg}
            \draw[-latex] (65) -- (64);
            \draw[-latex,red] (66) -- (65);
            \draw[-latex] (67) -- (66);
            \draw[-latex] (68) -- (67);
            \draw[-latex,red] (69) -- (68);
            \draw[-latex,red] (70) -- (69);
            \draw[-latex,red] (71) -- (70);
            \draw[-latex] (72) -- (71);
            \draw[-latex] (73) -- (72);
            \draw[-latex] (70) -- (74);
            \draw[-latex] (74) -- (65);
            \draw[-latex,red] (75) -- (73);
            \draw[-latex] (76) -- (73);
            \draw[-latex,red] (77) -- (75);
            \draw[-latex] (77) -- (76);
            \draw[-latex] (78) -- (77);
            \draw[-latex] (79) -- (78);
            \draw[-latex,red] (80) -- (79);
        \end{pgfonlayer}
        \node[draw, circle, minimum width=.7cm,fill=Maroon] (81) at (7, 12) {};
        \node[draw, circle, minimum width=.7cm,fill=Magenta] (82) at (7, 10.8) {};
        \node[draw, circle, minimum width=.7cm,fill=Magenta] (83) at (7, 9.6) {};
        \node[draw, circle, minimum width=.7cm,fill=Magenta] (84) at (7, 8.4) {};
        \node[draw, circle, minimum width=.7cm,fill=Magenta] (85) at (7, 7.2) {};
        \node[draw, circle, minimum width=.7cm,fill=SkyBlue] (86) at (2, 6) {};
        \node[draw, circle, minimum width=.7cm,fill=Gray] (87) at (2, 4.5) {};
        \node[draw, circle, minimum width=.7cm,fill=Gray] (88) at (2, 3) {};
        \node[draw, circle, minimum width=.7cm,fill=Gray] (89) at (3.5, 2) {};
        \node[draw, circle, minimum width=.7cm,fill=Apricot] (90) at (5, 3) {};
        \node[draw, circle, minimum width=.7cm,fill=Magenta] (91) at (9, 4) {};
        \node[draw, circle, minimum width=.7cm,fill=Gray] (92) at (12, 4) {};
        \node[draw, circle, minimum width=.7cm,fill=Gray] (93) at (9, 1) {};
        \begin{pgfonlayer}{bg}
            \draw[-latex,red] (81) -- (82);
            \draw[-latex,red] (82) -- (83);
            \draw[-latex,red] (83) -- (84);
            \draw[-latex] (84) -- (85);
            \draw[-latex,red] (85) -- (86);
            \draw[-latex] (86) -- (87);
            \draw[-latex] (87) -- (88);
            \draw[-latex] (88) -- (89);
            \draw[-latex] (90) -- (89);
            \draw[-latex,red] (85) -- (90);
            \draw[-latex] (90) -- (89);
            \draw[-latex,red] (85) -- (91);
            \draw[-latex,red] (92) -- (93);
            \draw[-latex,red] (85) -- (92);
            \draw[-latex] (91) -- (93);
            \draw[-latex] (89) -- (93);
        \end{pgfonlayer}
        \node[draw, circle, minimum width=.7cm,fill=Maroon] (1) at (7, -1) {};
        \node[draw, circle, minimum width=.7cm,fill=Magenta] (2) at (7, -2) {};
        \node[draw, circle, minimum width=.7cm,fill=Magenta] (3) at (7, -3) {};
        \node[draw, circle, minimum width=.7cm,fill=Magenta] (4) at (7, -4) {};
        \node[draw, circle, minimum width=.7cm,fill=Magenta] (5) at (7, -5) {};
        \node[draw, circle, minimum width=.7cm,fill=Magenta] (6) at (7, -6) {};
        \node[draw, circle, minimum width=.7cm,fill=SkyBlue] (7) at (7, -7) {};
        \node[draw, circle, minimum width=.7cm,fill=Gray] (8) at (7, -8) {};
        \node[draw, circle, minimum width=.7cm,fill=Gray] (9) at (10, -9) {};
        \node[draw, circle, minimum width=.7cm,fill=Apricot] (10) at (4, -9) {};
        \node[draw, circle, minimum width=.7cm,fill=Gray] (11) at (7, -10) {};
        \node[draw, circle, minimum width=.7cm,fill=Gray] (12) at (7, -11) {};
        \node[draw, circle, minimum width=.7cm,fill=Gray] (13) at (7, -12) {};
        \begin{pgfonlayer}{bg}
            \draw[-latex,red] (1) -- (2);
            \draw[-latex,red] (2) -- (3);
            \draw[-latex,red] (3) -- (4);
            \draw[-latex] (4) -- (5);
            \draw[-latex,red] (5) -- (6);
            \draw[-latex,red] (6) -- (7);
            \draw[-latex,red] (7) -- (8);
            \draw[-latex,red] (8) -- (9);
            \draw[-latex] (8) -- (10);
            \draw[-latex] (10) -- (11);
            \draw[-latex] (9) -- (11);
            \draw[-latex] (11) -- (12);
            \draw[-latex] (12) -- (13);
        \end{pgfonlayer}
        \node[draw, circle, minimum width=.7cm,fill=Goldenrod] (14) at (-6, -.8) {};
        \node[draw, circle, minimum width=.7cm,fill=Fuchsia] (15) at (-6, -1.6) {};
        \node[draw, circle, minimum width=.7cm,fill=Maroon] (16) at (-6, -2.4) {};
        \node[draw, circle, minimum width=.7cm,fill=Maroon] (17) at (-9, -3.2) {};
        \node[draw, circle, minimum width=.7cm,fill=Violet] (18) at (-3, -3.2) {};
        \node[draw, circle, minimum width=.7cm,fill=Magenta] (19) at (-6, -4) {};
        \node[draw, circle, minimum width=.7cm,fill=RoyalBlue] (20) at (-6, -4.8) {};
        \node[draw, circle, minimum width=.7cm,fill=Magenta] (21) at (-6, -5.6) {};
        \node[draw, circle, minimum width=.7cm,fill=Magenta] (22) at (-6, -6.4) {};
        \node[draw, circle, minimum width=.7cm,fill=Magenta] (23) at (-6, -7.2) {};
        \node[draw, circle, minimum width=.7cm,fill=Magenta] (24) at (-6, -8) {};
        \node[draw, circle, minimum width=.7cm,fill=SkyBlue] (25) at (-6, -8.8) {};
        \node[draw, circle, minimum width=.7cm,fill=Apricot] (26) at (-9, -9.6) {};
        \node[draw, circle, minimum width=.7cm,fill=SkyBlue] (27) at (-3, -9.6) {};
        \node[draw, circle, minimum width=.7cm,fill=Apricot] (28) at (-6, -10.4) {};
        \node[draw, circle, minimum width=.7cm,fill=Gray] (29) at (-6, -11.2) {};
        \node[draw, circle, minimum width=.7cm,fill=Gray] (30) at (-6, -12) {};
        \begin{pgfonlayer}{bg}
            \draw[-latex,red] (14) -- (15);
            \draw[-latex,red] (15) -- (16);
            \draw[-latex,red] (16) -- (17);
            \draw[-latex] (16) -- (18);
            \draw[-latex] (17) -- (19);
            \draw[-latex] (18) -- (19);
            \draw[-latex] (19) -- (20);
            \draw[-latex] (20) -- (21);
            \draw[-latex] (21) -- (22);
            \draw[-latex,red] (22) -- (23);
            \draw[-latex,red] (23) -- (24);
            \draw[-latex] (24) -- (25);
            \draw[-latex] (25) -- (26);
            \draw[-latex] (25) -- (27);
            \draw[-latex] (27) -- (28);
            \draw[-latex] (26) -- (28);
            \draw[-latex] (28) -- (29);
            \draw[-latex] (29) -- (30);
        \end{pgfonlayer}

        \node[draw, circle, minimum width=.7cm,fill=Magenta] (31) at (19, -1) {};
        \node[draw, circle, minimum width=.7cm,fill=Magenta] (32) at (19, -2) {};
        \node[draw, circle, minimum width=.7cm,fill=Magenta] (33) at (19, -2) {};
        \node[draw, circle, minimum width=.7cm,fill=Magenta] (34) at (19, -3) {};
        \node[draw, circle, minimum width=.7cm,fill=Magenta] (35) at (19, -4) {};
        \node[draw, circle, minimum width=.7cm,fill=Magenta] (36) at (17.5, -6) {};
        \node[draw, circle, minimum width=.7cm,fill=Magenta] (37) at (17.5, -7) {};
        \node[draw, circle, minimum width=.7cm,fill=Magenta] (38) at (16, -8) {};
        \node[draw, circle, minimum width=.7cm,fill=Magenta] (39) at (16, -9) {};
        \node[draw, circle, minimum width=.7cm,fill=Magenta] (40) at (17.5, -10) {};
        \node[draw, circle, minimum width=.7cm,fill=Magenta] (41) at (17.5, -11) {};
        \node[draw, circle, minimum width=.7cm,fill=Magenta] (42) at (19, -8.5) {};
        \node[draw, circle, minimum width=.7cm,fill=Magenta] (43) at (19, -12) {};
        \node[draw, circle, minimum width=.7cm,fill=Magenta] (44) at (22, -8.5) {};
        \begin{pgfonlayer}{bg}
            \draw[-latex,red] (31) -- (32);
            \draw[-latex,red] (32) -- (33);
            \draw[-latex,red] (33) -- (34);
            \draw[-latex,red] (34) -- (35);
            \draw[-latex] (35) -- (36);
            \draw[-latex] (35) -- (44);
            \draw[-latex] (36) -- (37);
            \draw[-latex] (37) -- (38);
            \draw[-latex] (38) -- (39);
            \draw[-latex,red] (39) -- (40);
            \draw[-latex,red] (40) -- (41);
            \draw[-latex,red] (41) -- (43);
            \draw[-latex] (37) -- (42);
            \draw[-latex] (42) -- (40);
            \draw[-latex] (44) -- (43);
        \end{pgfonlayer}
    \end{tikzpicture}
    \qquad
    \begin{tikzpicture}[scale=0.4,every node/.style={scale=0.4}]
        \matrix [draw, left]{
          \node [circle,fill=Goldenrod,label=right:King] {}; \\
          \node [circle,fill=Fuchsia,label=right:Queen] {}; \\
          \node [circle,fill=RoyalBlue,label=right:Son of King] {}; \\
          \node [circle,fill=Maroon,label=right:Archbishop] {}; \\
          \node [circle,fill=Magenta,label=right:Bishop] {}; \\
          \node [circle,fill=Violet,label=right:Duke] {}; \\
          \node [circle,fill=SkyBlue,label=right:Earl] {}; \\
          \node [circle,fill=Apricot,label=right:Count] {}; \\
          \node [circle,fill=Gray,label=right:Other] {}; \\
        };
    \end{tikzpicture}
    \captionof{figure}{VSP/QJ-U (top row) and VSP/QJ-B (bottom row). Consensus orders for 1080-84, 1126-30 and 1134-38 (bishops) (left to right columns) 5LPA data. } %Significant/strong order relations are indicated by black/red edges respectively.
    \label{fig:8084con-order}
\end{minipage}
\end{figure}
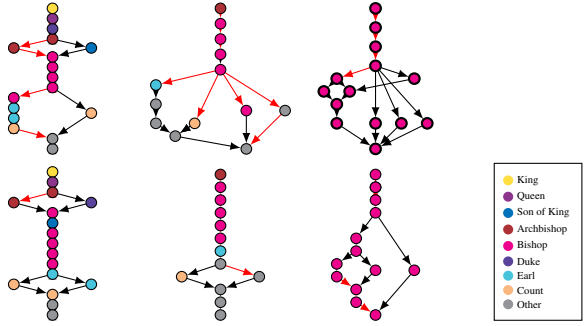

Some of this is common sense. However, the web of strongly attested relations between earls and others in 1134-38 is new. There is clear evidence for hierarchies within professions. The bishop-only QJ-U analysis in 1134-38 (top-right graph in Fig.~\ref{fig:8084con-order}) is similar to the bishop subgraph of the full QJ-U analysis for the same period (pink nodes in Fig.~\ref{fig:34-38fullcon-order}). The prior is marginally consistent, but information is shared across lists so removing actors changes the data and changes estimated order relations between those that remain. However, the bishops appear as a group in the lists and in Fig.~\ref{fig:34-38fullcon-order} and there are few non-bishops ``between'' bishops in lists, so this effect is slight. We can attach names to nodes: for example, the top three bishops in 34-38 (in Fig.~\ref{fig:34-38fullcon-order} and in both QJ-U and QJ-B analyses in the rightmost column of Fig.~\ref{fig:8084con-order}) are Henry, de Blois, Bishop of Wincester $\succ$ Roger, Bishop of Salisbury $\succ$ Alexander, Bishop of Lincoln. %They are all more important than the archbishop and the son of king, who seems to have similar social importance or incomparable. The bishops $\succ$ the earl $\succ$ the counts. 

The status hierarchies fitted using by QJ-B (bottom row Fig.~\ref{fig:8084con-order}) are simpler and deeper than QJ-U (top row Fig.~\ref{fig:8084con-order}). The data must contain a small number of errors in both directions. A uni-directional model must fit a shallower VSP as it accommodates errors in the ``wrong'' direction by removing order relations in the reconstructed VSP. %The consensus order for the 1134-38(b) data is simpler but qualitatively similar to the reconstructed under the QJ-U model. %For example, the top bishops are Henry, de Blois, Bishop of Wincester $\succ$ Roger, Bishop of Salisbury $\succ$ Alexander, Bishop of Lincoln, and Adelulf, Bishop of Carlisle is again at the bottom. 

% In 1080-1084 and 1134-1138, very strong order relation, king $\succ$ queen, is observed. In 1080-1084, strong order relation between two archbishops and the most powerful bishop is observed; Lanfranc, archbishop of Canterbury $\succ$ Thomas I, archbishop of York $\succ$ Odo, bishop of Bayeux.

%The consensus order for this time period is shown in \Fig~\ref{fig:3438con-order} after burn-in. It is clear that the king $\succ$ queen $\succ$ son of king with more than 90\% probability. The archbishop has higher social hierarchy than the bishops, and the bishops are more important than the earls. The count appears to have similar social importance as the earls. However, further analysis may be required given limited data on the counts. 

%In 1134-1138, there are more rigid social hierarchy among the archbishops; William, Archbishop of Canterbury $\succ$ Thurstan, Archbishop of York $\succ$ Hugh, d'Amiens, Archbishop of Rouen. Nevertheless, more incomparable/equivalence order relations are observed among bishops. 

We summarise the status of ``professions'' within VSPs by averaging ranks. Given a partial order $v\in\V_{[n]}$, the rank of actor $i\in[n]$ is the number of actors above them, $\text{rank}_i(v) = 1+|\{\langle e_1,e_2\rangle\in E(v): e_2=i\}|$, and take as our summary the average rank of actors in the profession. The posterior mean ranks given in Table D.5 %\ref{tab:exp-rank-34385} 
and D.7 %\ref{tab:exp-rank-b} 
match our remarks on consensus orders.

%\subsubsection{PARAMETER DISTRIBUTIONS}
{\it We next report parameter distributions.} Prior and posterior distributions for the probability $q$ for a serial node, error probability $p$ and QJ-B parameter $\phi$ (equal one for QJ-U and zero for QJ-D) for the three periods are shown in \Fig~\ref{fig:pq}. The $p$-posteriors are weighted toward smaller values and overlap, though errors are low in 1126-30 and higher in 1180-84 indicating greater respect for the rules of precedence in 1126-30 than in 1180-84. Prior and posterior depth distributions are shown in \Fig~D.11 
%\ref{fig:depth} 
and D.14.
%\ref{fig:depth-b}
The prior depth distributions are fairly flat so any depth-structure in the posterior comes from the data. The probability for a series node in the BDT ($q$) controls the depth of the fitted order relation. For example, in 1180-84 a relatively high $q$ for QJ-U is associated with relatively high depth VSPs with a mean depth of 14 relative to maximum depth 17 (the number of actors). In contrast, the posterior probabilities for S and P nodes are almost equal in 1134-38 and so we get a relatively shallower hierarchy: the posterior mean depth is about 23 relative to a maximum depth 49 in \Fig~D.11.%\ref{fig:depth}. 

\begin{figure}[htb]
  \centering
\includegraphics[width=\linewidth]{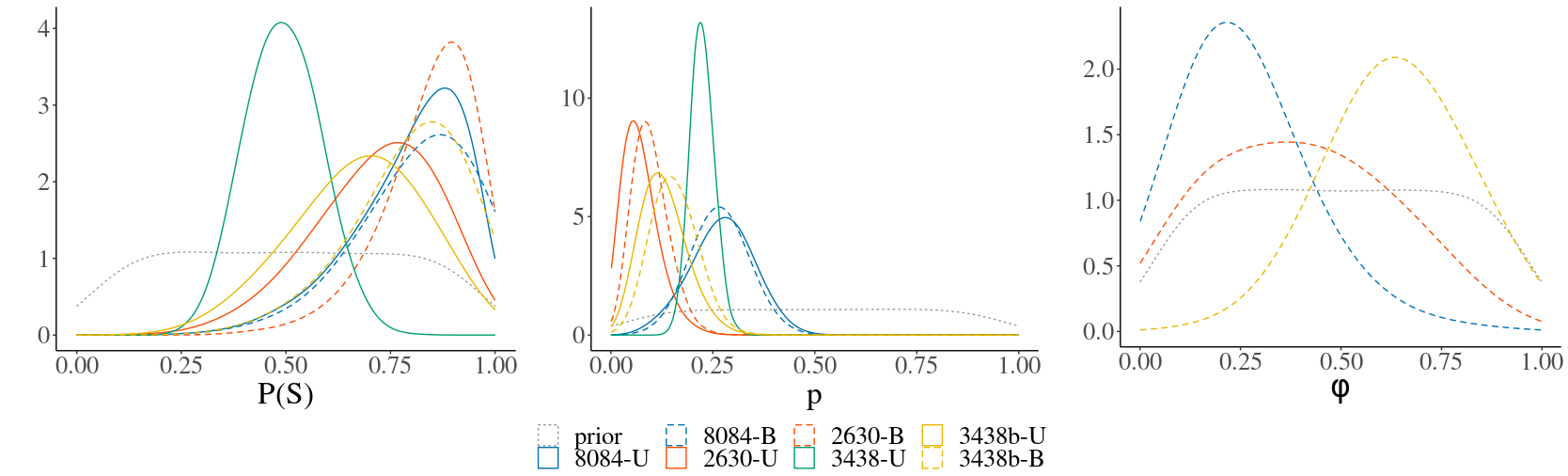}
  \caption{Posterior distributions for $q=P(S)$ (left), error probability $p$ (middle) and QJ-B probability $\phi$ (right) for the time periods 1180-1184 (blue), 1126-1130 (red), 1134-1138 (green) and 1134-1138(b) (yellow) from both the VSP/QJ-U (solid) and VSP/QJ-B (dashed) models. The prior is represented in grey in all figures. }\label{fig:pq}
\end{figure}

%\subsection{Analysis with bi-directional Queue Jumping}\label{sec:QJ-bi}

%The QJ-B model allows both error types - jumping-up and jumping-down the queue in a single realised list. We implement the QJ-B observation model on time periods 1180-84, 1126-30 and 1134-38(b). 

%The reconstructed status hierarchy of king$\succ$ queen $\succ$ archbishop $\succ$ bishop $\succ$ earl $\succ$ count $\succ$ other (\Fig~\ref{fig:con-order-b}) is simpler and deeper in this model. The data must contain a small number of errors in both directions. A uni-directional model must fit a shallower VSP as it accommodates errors in the ``wrong'' direction by removing order relations in the reconstructed VSP. Average rank estimates for ``professions'' are given in Table~\ref{tab:exp-rank-b}. The consensus order for the 1134-38(b) data is simpler but qualitatively similar to the that reconstructed under the VSP/QJ-U model. For example, the top bishops are Henry, de Blois, Bishop of Wincester $\succ$ Roger, Bishop of Salisbury $\succ$ Alexander, Bishop of Lincoln, and Adelulf, Bishop of Carlisle is again at the bottom. 

%Posterior distributions for key parameters are summarised in \Figs \ref{fig:pos-qjb} and \ref{fig:depth-b}. 

The QJ-B model for noise in the list data allows actors to jump up or down from a queue-position appropriate for their status. QJ-U is favored if $\phi>1/2$ and otherwise QJ-D so we see from Fig.~\ref{fig:pq} that QJ-U is favored in 1134-38(b), while the 1080-84 data supports QJ-D. %In contrast to QJ-U, more serial operations (high $q$) supports deeper order relation reconstruction in \Fig~\ref{fig:8084con-order}. %In contrast to QJ-U in \sec~\ref{sec:QJ-up}, more serial operations (high $q$) and deeper order relations are reconstructed in QJ-B. This is visible in the consensus orders in \Fig~\ref{fig:con-order-b}. 
However, the $p$-posteriors %in \Figs \ref{fig:pq} and \ref{fig:pos-qjb} 
both favor small $p$. The displacement direction controlled by $\phi$ is hard to measure and not identifiable at $p=0$ so the $\phi$-distributions are correspondingly broad.

%The prior and posterior depth distribution are shown in \Fig~\ref{fig:depth-b}. Same priors are chosen as \sec~\ref{sec:QJ-up}. Higher depth (more rigid social hierarchy) is again favoured in these three time periods of interest.

%\subsubsection{QUEUE-JUMPING MODEL SELECTION}
{\it We next report results of model selection between different queue jumping error models.} Preference shifts from downwards to bidirectional to upwards displacement error models over the period 1080-1140. We justify this reading of the results using Bayes factors below. In summary, QJ-D is slightly favored over QJ-B (so we write ``$D>B$'') in 1080-84 while in 1126-30 models QJ-D and QJ-B are equally good ($D\approx B$). Both are clearly favored over QJ-U in these periods ($D,B\gg U$). In 34-38(b) we have $U\approx B$ and $U,B\gg D$. %This coincides to the elpd comparison in Table E.1. %\ref{tab:waic}.

We can read the Bayes factors we need off Fig.~\ref{fig:pq} because the models QJ-U and QJ-D are nested in the model QJ-B. The Bayes factor $B_{U,B}$ for QJ-U over QJ-B is 
\begin{align*}
    B_{U,B}&= \lim_{\delta\to 0}\frac{p(y|\phi>1-\delta)}{p(y|\phi\in(0,1))}\\
    &=\lim_{\delta\to 0}\frac{\pi(\phi>1-\delta|y)}{\pi(\phi\in(0,1)|y)}\frac{\pi(\phi\in(0,1))}{\pi(\phi>1-\delta)}\\
    &= \lim_{\delta\to 0}\frac{\pi(\phi>1-\delta|y)}{\pi(\phi>1-\delta)},
\end{align*}
since $\phi\in (0,1)$ with probability one. Similarly, 
\[
B_{D,B}=\lim_{\delta\to 0}\frac{\pi(\phi<\delta|y)}{\pi(\phi<\delta)},
\]
and then $B_{U,D}=B_{U,B}/B_{D,B}$.
From \Fig~\ref{fig:pq}, $B_{U,B}$ is close to 0 in periods 1180-84 and 1126-30 as the posterior density is well below the prior density at $\phi\to 1$, providing strong support for QJ-B over QJ-U. In 1134-38(b), we see $B_{U,B} \approx 1$, as the curves meet as $\phi\to 1$ so there is no clear signal from the data. The other comparisons may be justified similarly.

{\it Finally we make model comparisons with other models.} Comparisons with a Plackett-Luce mixture model and a Mallows mixture model are given in Appendix~E.1, the latent partial order model from \cite{nicholls122011partial} in Appendix~E.3 and a simple Bucket Order model in Appendix~E.2. When models are nested (Bucket Order) we estimate a Bayes factor. When they are not, we use the Expected Log Pointwise Predictive Density (ELPD, \cite{vehtari2017practical}) as our criterion. This is a predictive loss which can be estimated using LOOCV or the WAIC \citep{Watanabe2012}. On this basis VSP/QJ (-U and -B) is clearly favoured over Placket-Luce mixture models and Mallows mixture model in Table E.1 (``Royal Acta'') and E.3 (Formula 1 race data). With Bayes factors around 2 or 3, Bucket orders are equal or slightly preferred over VSPs in the QJ-B model (Table E.3). Our VSP-based model QJ-U is clearly preferred over Bucket orders in the QJ-U fit (some very large Bayes factors in favor of VSP).

The support of our VSP model is a subset of the PO support, as POs containing the forbidden sub-graph (Appendix G) are not VSPs. The PO/QJ-U has a slightly larger ELPD ($-36.7$, see Table~E.4) than VSP/QJ-U ($-37.8$) on the 1126-1130 data with 5LPA. However, the difference is not significant at the precision ($\pm 10$) of these estimates so we conclude that VSP/QJ-U models these data as well as PO/QJ-U. It gives similar consensus orders (\Fig~E.2) and profession rankings (Table E.5). 

A VSP-based analysis is far more computationally efficient than a PO-based model when the number of actors is large. The computing time for counting the LEs of a VSP rises linearly with the number of actors (\Fig~F.1) while it increases exponentially for PO (using the best code we could find, \textit{LEcount}, \cite{kangas2016counting}, but inevitable given \cite{brightwell1991counting}). %The time to draw one effective independent sample in the PO-based analysis, plotted in \Fig~F.1 (right), rises rapidly. 
We have to count LEs of random POs. In our experience counting LEs on random POs with up to about 25-30 actors is feasible. However, at larger numbers we encounter occasional random POs which are especially ``hard'' to count and VSP-based analysis is the only way forward at present.

%Our comparison to PO was restricted to a smaller dataset as counting LE for more than 25 actors for PO is hardly practical. We compared the maximum run-time required to each likelihood (QJ-U) evaluation between \textit{LEcount} and VSP in \Fig~F.1. Exponentially increasing computing time using {\it LEcount} is observed with $n$ while the computing time for VSP is relative constant. On average, VSP is more efficient when there are more than 25 actors.  

\section{Discussion and Conclusion}

Our work was motivated by the need to fit relatively large partial orders (up to 200 nodes) to noisy linear-extension data. We saw that, for data on this scale, counting linear extensions in the VSP-tree representation is much faster than current state-of-art counting for general partial orders, enabling our methods to scale. We gave a new consistent and closed form prior distribution over VSPs with a parameter $q$ controlling VSP depth, and a new observation model QJ-B for noisy \LE s which generalises QJ-U \citep{nicholls122011partial}. We fit the new model to some of the smaller data sets and the old model to all data sets. Neither of these analyses would be possible without the VSP-setup. The data support the new observation model in our application. Our $elpd_{waic}$-based model comparisons also clearly favor VSP/QJ-U and VSP/QJ-B over a Plackett-Luce mixture or a Mallows Mixture. Although we could fit the large data sets, visualising consensus partial orders proved challenging (compare \Fig~\ref{fig:8084con-order} (top left corner) and \Fig~D.4).

We gave MCMC algorithms targeting the posterior for VSPs in both the latent-space (BDT) parameterisation and the integrated MDT parameterisation. We found the BDT-MCMC adequate, though it would be good to make an efficiency comparison with MDT-MCMC, which we expect to be more efficient. These comparisons are underway. BDT updates which don't change the VSP are fast so BDT-MCMC seems to be competitive. For code see
\url{https://github.com/JessieJ315/Bayesian-Inference-for-Vertex-Series-Parallel-Partial-Orders.git}.

In future work we would like to compare our fit with the recently-proposed contextual repeated selection (CRS) model (\cite{seshadri2020learning} and \cite{ragain2018choosing}). This is a rich class of models for rank-order data. The elements of the model are not essentially physical, in the sense that a VSP represents a social hierarchy relation by relation. Also, CRS models do not encode transitivity. It is easy to show VSP models cannot be represented as CRS models with ``cliques'' of size two. CRS models may fit the data well, and a comparison would be worthwhile. However, there is currently no Bayesian CRS analysis so we leave that for future work.

%\newpage

% compare with CRS

% The \textit{contextual repeated selection} (CRS) model is a large class of model proposed in \cite{seshadri2020learning} and \cite{ragain2018choosing}. The CRS model is a repeated selection model of the \textit{context-dependent utility} (CDM) model  (\cite{seshadri2019discovering}) which defines selection probability of an actor $i\in [n]$ based on pairwise comparisons between $i$ and all remaining actors $j\in [n]\backslash i$. While we can use CRS model to describe the distribution of \LE s and construct partial order for a dataset as the intersection order, the CRS model display different hypothesis from the VSP/partial order models. First, the CRS model doesn't admit transitivity. Second, the CRS model does not represent all VSPs\footnote{See supplementary.}. 

% \begin{contributions} % will be removed in pdf for initial submission 
% 					  % (without ‘accepted’ option in \documentclass)
%                       % so you can already fill it to test with the
%                       % ‘accepted’ class option

%     GKN: I dont think we need this?
% \end{contributions}

\begin{acknowledgements} % will be removed in pdf for initial submission,
						 % (without ‘accepted’ option in \documentclass)
                         % so you can already fill it to test with the
                         % ‘accepted’ class option
    We thank Dr. Nicholas Karn for providing the data, and Prof. David Johnson for enlightening discussions. 
\end{acknowledgements}

% References
\bibliography{reference}

\begin{thebibliography}{32}
\providecommand{\natexlab}[1]{#1}
\providecommand{\url}[1]{\texttt{#1}}
\expandafter\ifx\csname urlstyle\endcsname\relax
  \providecommand{\doi}[1]{doi: #1}\else
  \providecommand{\doi}{doi: \begingroup \urlstyle{rm}\Url}\fi

\bibitem[F1D()]{F1Data}
Formula 1 template.
\newblock \url{https://www.spreadsheet.com/template/formula-1}.
\newblock Accessed: 2023-04-23.

\bibitem[Beerenwinkel et~al.(2007)Beerenwinkel, Eriksson, and
  Sturmfels]{beerenwinkel2007conjunctive}
Niko Beerenwinkel, Nicholas Eriksson, and Bernd Sturmfels.
\newblock Conjunctive {B}ayesian networks.
\newblock \emph{Bernoulli}, pages 893--909, 2007.

\bibitem[Beichl et~al.(2017)Beichl, Jensen, and Sullivan]{carlosequential}
Isabel Beichl, Alathea Jensen, and Francis Sullivan.
\newblock A sequential importance sampling algorithm for estimating linear
  extensions.
\newblock 2017.

\bibitem[Brightwell and Winkler(1991)]{brightwell1991counting}
Graham Brightwell and Peter Winkler.
\newblock Counting linear extensions.
\newblock \emph{Order}, 8\penalty0 (3):\penalty0 225--242, 1991.

\bibitem[Cano et~al.(2011)Cano, G{\'e}mez-Olmedo, and
  Moral]{cano2011approximate}
Andr{\'e}s Cano, Manuel G{\'e}mez-Olmedo, and Seraf{\'e}n Moral.
\newblock Approximate inference in {B}ayesian networks using binary probability
  trees.
\newblock \emph{International Journal of Approximate Reasoning}, 52\penalty0
  (1):\penalty0 49--62, 2011.

\bibitem[Caron and Doucet(2012)]{caron2012efficient}
Francois Caron and Arnaud Doucet.
\newblock Efficient {B}ayesian inference for generalized {B}radley--{T}erry
  models.
\newblock \emph{Journal of Computational and Graphical Statistics}, 21\penalty0
  (1):\penalty0 174--196, 2012.

\bibitem[Gionis et~al.(2006)Gionis, Mannila, Puolam{\"a}ki, and
  Ukkonen]{gionis2006algorithms}
Aristides Gionis, Heikki Mannila, Kai Puolam{\"a}ki, and Antti Ukkonen.
\newblock Algorithms for discovering bucket orders from data.
\newblock In \emph{Proceedings of the 12th ACM SIGKDD international conference
  on Knowledge discovery and data mining}, pages 561--566, 2006.

\bibitem[Kangas et~al.(2016)Kangas, Hankala, Niinim{\"a}ki, and
  Koivisto]{kangas2016counting}
Kustaa Kangas, Teemu Hankala, Teppo~Mikael Niinim{\"a}ki, and Mikko Koivisto.
\newblock Counting linear extensions of sparse posets.
\newblock In \emph{IJCAI}, pages 603--609, 2016.

\bibitem[Karzanov and Khachiyan(1991)]{Karzanov91}
A.~Karzanov and L.~Khachiyan.
\newblock On the conductance of order {M}arkov chains.
\newblock \emph{Order}, 8:\penalty0 7--15, 1991.

\bibitem[Luce(1959)]{luce1959possible}
R~Duncan Luce.
\newblock On the possible psychophysical laws.
\newblock \emph{Psychological review}, 66\penalty0 (2):\penalty0 81, 1959.

\bibitem[Mallows(1957)]{mallows1957non}
Colin~L Mallows.
\newblock Non-null ranking models. i.
\newblock \emph{Biometrika}, 44\penalty0 (1/2):\penalty0 114--130, 1957.

\bibitem[Mannila(2008)]{mannila2008finding}
Heikki Mannila.
\newblock Finding total and partial orders from data for seriation.
\newblock In \emph{International Conference on Discovery Science}, pages
  16--25. Springer, 2008.

\bibitem[Mannila and Meek(2000)]{mannila2000global}
Heikki Mannila and Christopher Meek.
\newblock Global partial orders from sequential data.
\newblock In \emph{Proceedings of the sixth ACM SIGKDD international conference
  on Knowledge discovery and data mining}, pages 161--168, 2000.

\bibitem[Mollica and Tardella(2017)]{mollica2017}
Cristina Mollica and Luca Tardella.
\newblock Bayesian {P}lackett-{L}uce mixture models for partially ranked data.
\newblock \emph{Psychometrika}, 82\penalty0 (2):\penalty0 442--458, 2017.
\newblock ISSN 0033-3123.
\newblock \doi{10.1007/s11336-016-9530-0}.

\bibitem[Mollica and Tardella(2020)]{mollica2020plmix}
Cristina Mollica and Luca Tardella.
\newblock {PLMIX}: {A}n {R} package for modelling and clustering partially
  ranked data.
\newblock \emph{Journal of Statistical Computation and Simulation}, 90\penalty0
  (5):\penalty0 925--959, 2020.

\bibitem[Nicholls and Muir~Watt(2011)]{nicholls122011partial}
Geoff~K Nicholls and Alexis Muir~Watt.
\newblock Partial order models for episcopal social status in 12th century
  {E}ngland.
\newblock \emph{IWSM 2011}, page 437, 2011.

\bibitem[Nicholls et~al.(2022)Nicholls, Lee, Karn, Johnson, Huang, and
  Muir-Watt]{nicholls2022ts}
Geoff~K. Nicholls, Jeong~Eun Lee, Nicholas Karn, David Johnson, Rukuang Huang,
  and Alexis Muir-Watt.
\newblock Bayesian inference for partial orders from random linear extensions:
  power relations from 12th {C}entury {R}oyal {A}cta, 2022.
\newblock URL \url{https://arxiv.org/abs/2212.05524}.

\bibitem[Pavan and Todeschini(2008)]{pavan2008scientific}
Manuela Pavan and Roberto Todeschini.
\newblock \emph{Scientific data ranking methods: theory and applications}.
\newblock Elsevier, 2008.

\bibitem[Plackett(1975)]{plackett1975analysis}
Robin~L Plackett.
\newblock The analysis of permutations.
\newblock \emph{Journal of the Royal Statistical Society: Series C (Applied
  Statistics)}, 24\penalty0 (2):\penalty0 193--202, 1975.

\bibitem[Ragain and Ugander(2018)]{ragain2018choosing}
Stephen Ragain and Johan Ugander.
\newblock Choosing to rank.
\newblock \emph{arXiv preprint arXiv:1809.05139}, 2018.

\bibitem[Sakoparnig and Beerenwinkel(2012)]{beerenwinkel12}
Thomas Sakoparnig and Niko Beerenwinkel.
\newblock {Efficient sampling for {B}ayesian inference of conjunctive Bayesian
  networks}.
\newblock \emph{Bioinformatics}, 28\penalty0 (18):\penalty0 2318--2324, 07
  2012.
\newblock ISSN 1367-4803.
\newblock \doi{10.1093/bioinformatics/bts433}.
\newblock URL \url{https://doi.org/10.1093/bioinformatics/bts433}.

\bibitem[Seshadri et~al.(2020)Seshadri, Ragain, and
  Ugander]{seshadri2020learning}
Arjun Seshadri, Stephen Ragain, and Johan Ugander.
\newblock Learning rich rankings.
\newblock \emph{Advances in Neural Information Processing Systems},
  33:\penalty0 9435--9446, 2020.

\bibitem[Sharpe et~al.(2014)Sharpe, Carpenter, Doherty, Hagger, and
  Karn]{sharpe14}
R.~Sharpe, D.~Carpenter, H.~Doherty, M.~Hagger, and N.~Karn.
\newblock The {C}harters of {W}illiam {II} and {H}enry {I}.
\newblock Online: Last accessed 27 October 2022, 2014.

\bibitem[Smail(2018)]{smail2018junction}
L~Smail.
\newblock Junction trees construction: {A}pplication to {B}ayesian networks.
\newblock In \emph{AIP Conference Proceedings}, volume 2025, page 100007. AIP
  Publishing LLC, 2018.

\bibitem[S{\o}rensen et~al.(2020)S{\o}rensen, Crispino, Liu, and
  Vitelli]{bayesmallows}
{\O}ystein S{\o}rensen, Marta Crispino, Qinghua Liu, and Valeria Vitelli.
\newblock Bayes{M}allows: An {R} package for the {B}ayesian {M}allows model.
\newblock \emph{The R Journal}, 12\penalty0 (1):\penalty0 324--342, 2020.
\newblock \doi{10.32614/RJ-2020-026}.

\bibitem[Stanley and Weisstein(2002)]{Weisstein02}
Richard Stanley and Eric~W. Weisstein.
\newblock {\it {C}atalan {N}umber}.
\newblock https://mathworld.wolfram.com/CatalanNumber.html, 2002.
\newblock MathWorld--A Wolfram Web Resource.

\bibitem[Valdes(1978)]{valdes78}
Jacobo Valdes.
\newblock \emph{Parsing {F}lowcharts and {S}eries-{P}arallel {G}raphs.}
\newblock PhD thesis, Stanford, CA, USA, 1978.
\newblock AAI7905944.

\bibitem[Valdes et~al.(1979)Valdes, Tarjan, and Lawler]{valdes1979recognition}
Jacobo Valdes, Robert~E Tarjan, and Eugene~L Lawler.
\newblock The recognition of series parallel digraphs.
\newblock In \emph{Proceedings of the eleventh annual ACM symposium on Theory
  of computing}, pages 1--12, 1979.

\bibitem[Vehtari et~al.(2017)Vehtari, Gelman, and Gabry]{vehtari2017practical}
Aki Vehtari, Andrew Gelman, and Jonah Gabry.
\newblock Practical {B}ayesian model evaluation using leave-one-out
  cross-validation and waic.
\newblock \emph{Statistics and computing}, 27:\penalty0 1413--1432, 2017.

\bibitem[Vitelli et~al.(2018)Vitelli, S{\o}rensen, Crispino, Frigessi
  Di~Rattalma, and Arjas]{vitelli2018probabilistic}
Valeria Vitelli, {\O}ystein S{\o}rensen, Marta Crispino, Arnoldo Frigessi
  Di~Rattalma, and Elja Arjas.
\newblock Probabilistic preference learning with the mallows rank model.
\newblock \emph{Journal of Machine Learning Research}, 18\penalty0
  (158):\penalty0 1--49, 2018.

\bibitem[Watanabe(2013)]{Watanabe2012}
Sumio Watanabe.
\newblock {A widely applicable bayesian information criterion}.
\newblock \emph{Journal of Machine Learning Research}, 14\penalty0
  (1):\penalty0 867--897, 8 2013.
\newblock ISSN 15324435.
\newblock URL \url{http://arxiv.org/abs/1208.6338}.

\bibitem[Wells(1971)]{wells1971elements}
Mark~B Wells.
\newblock Elements of combinatorial computing.
\newblock 1971.

\end{thebibliography}

\begin{appendix} 

\onecolumn %% Turn this off if single column is desired for the supplement
% \maketitle
\appendix
\counterwithin{figure}{section}
\setcounter{prop}{1}
\counterwithin{proof}{section}
\counterwithin{table}{section}
\counterwithin{algorithm}{section}
\counterwithin{definition}{section}
\counterwithin{equation}{section}
\counterwithin{cor}{section}
\section{Proof of Theorem \ref{thm:vsp-prior}
}

This Appendix states and proves the propositions referred to in the proof of Theorem~\ref{thm:vsp-prior} 
given in Section \ref{sec:vsp-prior}. 

\subsection{Part I: MARGINAL CONSISTENCY}

We first prove marginal consistency for our VSP prior. Intuitively, relations between actors in a VSP $v\in\V_{[n]}$ are determined by the type of their ``Most Recent Common Ancestor'' (MRCA) in any BDT $t\in t(v)$ representing $v$. For example the MRCA of actors 2 and 4 in the tree $t_0$ in \Fig~\ref{fig:tree_ex}
is the blue $P$-node, so $2\|_{v_0} 4$ in the VSP $v_0$ in \Fig~\ref{fig:POex}. 
Adding or removing a leaf in the BDT doesn't change relations between other actors because it doesn't change the types of their MRCA's. This property leads to marginal consistency of trees and VSPs.

We begin by giving a stochastic process realising $t\sim  \pi_{\T_{[n]}}(t|q)$ in which leaves are added to the tree one at a time. This construction appears in \cite{valdes78} but without the random element.

\begin{definition}[Leaf Insertion and deletion]\label{defn:leaf-ins-del}
If $t'\in \T_{[n-1]}$, $t'=(F(t'),E(t'),L(t'))$, is a tree on actors $(i_1,...,i_{n-1})$ with $\F'\cup\A'=[2n-3]$ then the \emph{leaf-insertion} operation $t=t' \triangleleft (e,i_n)$ at edge $e=\langle e_1,e_2\rangle,\ e\in E(t')$, gives a 
tree $t=(F(t),E(t),L(t))$ with two new nodes $j'=2n-2$ and $j=2n-1$, leaves $\F=\F'\cup \{j'\}$, internal nodes $\A=\A'\cup\{j\}$, leaf-to-actor map $F_{\F}(t)=F_\F(t')$ and $F_{j'}(t')=i_n$, edge set
\[E(t)=E(t')\setminus \{e\}\cup \{\langle j,j'\rangle,\langle e_1,j\rangle,\langle j,e_2\rangle\}\]
and $L(t)=L(t')\cup L_j$ where $L_j=(j',e_2),(e_2,j')$ or $\emptyset$ with probabilities $q/2,q/2$ and $1-q$ respectively. The leaf deletion operation $t'=t\triangleright i_n$ reverses this operation, pruning the leaf for actor $i_n$ (and removing its parent node).
\end{definition}
%$L(t')=L(t)\cup \{\langle 2n+1,n\rangle\}$ and $L(t')=L(t)\cup \{\langle 2n+1,e_2\rangle\}$ each with probability $q/2$. The generative model for the tree-prior is:
%
% Geoff WIP
%
\begin{definition}[Generative model for BDTs]\label{defn:tree-gen-proc}
Let $(i_1,\dots,i_n)\in \P_{[n]}$ be the actor list taken in any order. Simulate $t\sim\pi_{\T_{[n]}}(t|q)$ as follows: 
    \begin{enumerate}
        \item Set $\F=\{0,1\}$, $\A=\emptyset$, $F_1=1,E=\{\langle 0,1\rangle\}, L=\emptyset$ and $t_{(1)}=(F,E,L)$ (a single-edge tree);
        \item For $k=2:n$, (add the actors one at a time)
        \begin{enumerate}
            \item choose an edge $e\sim\U\{E(t_{(k-1)})\}$ at random;
            \item set $t_{(k)}=t_{(k-1)}\triangleleft (e,i_k)$;
        \end{enumerate}
        \item $E(t_{(n)})$ contains an edge $e=\langle 0,e_2\rangle$. Return the BDT $t=(F(t_{(n)}),E(t_{(n)})\setminus \{e\},L(t_{(n)}))$
        with leaf labels $\F\gets \F\setminus \{0\}$.
    \end{enumerate}    
\end{definition}
If we run this generative model we get a random tree distributed according to $\pi_{\T_{[n]}}$.
\begin{restatable}[Prior Probability Distribution over $\T_{[n]}$]{prop}{priorT}
\label{prop:priorT}
The probability distribution over BDTs determined by the process in Definition~\ref{defn:tree-gen-proc} is 
    given by (\ref{eq:tree-prior}). 
    \begin{proof}[Proposition~\ref{prop:priorT}]
    Each distinct topology is determined by a unique sequence of edge choices at step 2a in Definition~\ref{defn:tree-gen-proc}, and at step $k$ an edge is chosen uniformly over the $2k-3$ edges of a tree with $k$ leaves (recall there is a temporary leaf $0\in \F$ which is removed at step 3). The types of internal nodes are independent so it makes no difference if we set them as we build the tree or at the end.
\end{proof}
\end{restatable}
 
% \begin{proposition}[Prior Probability Distribution over $\T_{[n]}$]\label{prop:priorT}
%     The probability distribution over BDTs determined by the generative model in Definition~\ref{defn:tree-gen-proc} is 
%     given by (\ref{eq:tree-prior}).
% \end{proposition}

  We now define sub-trees of BDTs. At the end of step $k$ in the tree-generation process in Definition~\ref{defn:tree-gen-proc} the ``current tree'' is $t_{(k)}\in \T_o$ with $o=(i_1,...,i_k)$ and at the end of 
  step $k'>k$ it is $t_{(k')}\in \T_{\tilde o}$ with $\tilde o=(i_1,...,i_k,i_{k+1},...,i_{k'})$. 
  If, for $o,\tilde o\in \Os_{[n]}$ with $o\subseteq \tilde o$, we fix $\tau\in \T_o$ and $t\in \T_{\tilde o}$ then the conditional probability
  \[
  \pi_{\T_{\tilde o}|\T_{o}}(t|\tau,q)= \Pr(t_{(k')}=t|t_{(k)}=\tau,q)
  \]
  is the probability to realise $t_{(k')}=t$ when $t_{(k)}=\tau$.
  \begin{definition}[Sub-trees and containing trees]\label{defn:sub-contain-tree}
       Tree $\tau$ is a sub-tree of $t$ (and $t$ contains $\tau$) if $\pi_{\T_{\tilde o}|\T_{o}}(t|\tau,q)>0$. Let \[\T_{\tilde o}(\tau)=\{t\in \T_{\tilde o}: \pi_{\T_{\tilde o}|\T_{o}}(t|\tau,q)>0\}\]
       give the set of trees in $\T_{\tilde o}$ containing a given tree $\tau\in \T_{o}$.
  \end{definition}
  If $t$ contains $\tau$ then $t$ can be realised from $\tau$ by a sequence of edge insertions $\triangleleft$ and $\tau$ can be recovered from $t$ removing the actors in $\tilde o\setminus o$ using the pruning operator $\triangleright$.
  
  The family of prior distributions over trees $\pi_{\T_{o}}(\tau|q),\ o\in \Os_{[n]},\ n\ge 1$ is marginally consistent  if, for all $n\ge 1$ and all $o,\tilde o\in \Os_{[n]}$ with $o\subseteq \tilde o$, distributions in the family satisfy
  %if for every $n\ge 1$ the prior probability $\pi_{\T_o}(\tau|q)$ we write down for a tree $\tau$ on a set of actors $o\subseteq [n]$ is equal to the probability the relations between actors in $\tau$ are the same as those determined by a tree $t\sim \pi_{\T_{[n]}}(t|q)$ on the larger set. We say $\tau$ is ``contained'' in $t$ if this property holds. If $\tau$ is contained in $t$ then $\tau$ is obtained from $t$ by successively pruning the leaves $\{i\in [2n-1]: F_i(t)\not \in o\}$ of $t$ for actors in $t$ which are not in $\tau$, and removing the parent nodes of those leaves, the reverse of the leaf insertion operation. Let $\T_{[n]}(\tau)$ be the set of trees in $t \in \T_{[n]}$ such that $\tau$ is contained in $t$. Marginal consistency is
\begin{equation}\label{eq:mc_trees_defn}
\pi_{\T_{o}}(\tau|q)=\sum_{t\in\T_{\tilde o}(\tau)}\pi_{\T_{\tilde o}}(t|q)\quad\mbox{for all $\tau \in \T_o$ }.
\end{equation}

\begin{restatable}{prop}{treemc}
\label{prop:tree-mc}
The probability distribution over BDTs given in (3) %\ref{eq:tree-prior})
is marginally consistent.
\begin{proof}[Proposition \ref{prop:tree-mc}]
    It is sufficient show marginal consistency holds for $\tilde o=[n]$ and $o=[n]\setminus \{i\}$ for any single actor $i\in [n]$ as Eqn.~\ref{eq:mc_trees_defn} follows for any pair of  subsets of $[n]$ by pruning leaves one at a time using the $\triangleright$ operator. 
    
    Since $\pi_{\T_{[n]}}(t|q)$ in Eqn.~3 %\ref{eq:tree-prior} 
    does not depend on the order $i_1,\dots,i_n$ in which we add actors, we can make node $i_n=i$ the last arrival. If $t_{-i}$ is the tree at the end of the penultimate loop then
    \begin{equation} \label{eq:mc_tree_sum_e}
    \pi_{\T_{o}}(t_{-i}|q) = \sum_{e\in E(t_{-i})}\pi_{\T_{\tilde o}}( t_{-i}\triangleleft (e,i) |q).
    \end{equation}
    Now take $\tau = t_{-i}$. Since leaf deletion reverses edge insertion, the set of trees $\T_{[n]}(\tau)$ that contain $\tau$ is the set of trees that are obtained from $\tau$ by some edge addition,
    \[
      \T_{[n]}(\tau)=\bigcup_{e\in E(\tau)} \{\tau \triangleleft (e,i)\}
    \]
    and so 
    \[
    \pi_{\T_{o}}(\tau|q) = \sum_{t\in \T_{[n]}(\tau)}\pi_{\T_{\tilde o}}(t|q).
    \]
    which is marginal consistency for addition of one actor.
\end{proof}
\end{restatable}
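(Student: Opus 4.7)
The plan is to exploit the sequential generative construction in Definition~\ref{defn:tree-gen-proc}, which realises $t \sim \pi_{\T_{[n]}}(t|q)$ by inserting leaves one at a time and independently assigning each new internal node a type (an $S$-node with probability $q$, together with a uniform choice of upper child, or a $P$-node with probability $1-q$). Proposition~\ref{prop:priorT} has already verified that this process reproduces (\ref{eq:tree-prior}); the crucial further observation is that the formula depends on $t$ only through $n$ and the $S$-node count $S(t)$, so the induced distribution on $\T_{[n]}$ is symmetric in the leaf labels and hence invariant to the order in which the actors $i_1,\ldots,i_n$ are inserted.

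The first step would be to reduce to the single-actor case $\tilde o = o \cup \{i\}$. The general claim follows by iterating: the pruning operator $\triangleright$ strips one leaf at a time, so $\tilde o \setminus o$ can be peeled away element by element and the single-actor consistency statements telescope. In the single-actor case, $\T_{\tilde o}(\tau)$ is exactly the set of trees obtainable from $\tau$ by one leaf insertion $\triangleleft$ at some edge of $\tau$, together with some choice of type for the new internal node.

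The second step would invoke permutation-invariance and re-order the insertions so that actor $i$ is added last. The intermediate tree $t_{(n-1)}$ produced after the first $n-1$ steps is then distributed as $\pi_{\T_o}(\cdot|q)$, and the final step acts as an independent conditional update: a uniform edge choice followed by an independent type assignment. Summing the joint probability of ``ending at $\tau$ after $n-1$ steps and then producing $t$ at the last step'' over all $t \in \T_{\tilde o}(\tau)$ marginalises out the final step, so $\sum_{t \in \T_{\tilde o}(\tau)} \pi_{\T_{\tilde o}}(t|q)$ collapses to $\pi_{\T_o}(\tau|q)$, because the conditional probabilities total one across edge- and node-type choices.

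The main piece of bookkeeping I expect to need is matching the enumeration constant in (\ref{eq:tree-prior}) with the edge count used by the insertion step. Inserting one new leaf changes $|\T_{[n-1]}| = (2n-5)!!$ to $|\T_{[n]}| = (2n-3)!!$, a ratio of $2n-3$, which must cancel against the uniform-edge denominator at the insertion step, while the independent node-type choice contributes $q/2 + q/2 + (1-q) = 1$. Once these counts are aligned, no case analysis on the structure of $\tau$ is required and the proof reduces to reading off the marginalisation, with the multi-actor case following by induction on $|\tilde o \setminus o|$.
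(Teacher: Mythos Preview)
Your proposal is correct and follows essentially the same route as the paper: reduce to the single-actor case by iterated pruning, use permutation-invariance of the insertion order to make the deleted actor the last arrival, and then marginalise the final insertion step. Your explicit bookkeeping---the $(2n-3)$ edge-count cancelling the ratio $(2n-3)!!/(2n-5)!!$ and the type-choice summing to $q/2+q/2+(1-q)=1$---is exactly what underlies the paper's equation~(\ref{eq:mc_tree_sum_e}), just spelled out more carefully than the paper does.
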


\begin{restatable}{prop}{vspmc}
\label{prop:vsp-mc}
The probability distribution over VSPs given in (4) %\ref{eq:vsp_prior}) 
is marginally consistent.
\begin{proof}[Proposition~\ref{prop:vsp-mc}]
It is sufficient to show that Eqn.~5 %\ref{MarC} 
holds for $\tilde o=[n]$ and $o=[n]\setminus \{i\}$ and any $i\in [n]$ in Definition~1 %\ref{defn:mc_vsp_defn} 
since Eqn.~5 %\ref{MarC} 
follows for any pair of subsets of $[n]$ by removing actors one at a time. 

In this case $v[o]$ is the suborder obtained from $v\in \V_{[n]}$ by removing actor $i$ and we want to verfiy
    \begin{equation}\label{eq:vsp-mc}
        \pi_{\V_{o}}(w|q) = \sum_{v\in\V_{[n]} \atop v[o]=w} 
        \pi_{\V_{[n]}}(v|q)\quad \mbox{for all $w\in \V_o$}.
    \end{equation}
    Picking up the RHS of Eqn.~\ref{eq:vsp-mc} we have from Eqn.~4%~\ref{eq:vsp_prior}
    \begin{align*}
        \sum_{v\in\V_{[n]} \atop v[o]=w} 
        \pi_{\V_{[n]}}(v|q)&=\sum_{v\in\V_{[n]} \atop v[o]=w}\sum_{t\in t(v)}\pi_{\T_{[n]}}(t|q).
    \end{align*}
    Referring to Definition~\ref{defn:sub-contain-tree}, the sum on the right is a sum over all trees ``containing'' a tree in $t(w)$, that is, the set of all trees which can be constructed by taking a tree $\tau\in t(w)$ and adding actor $i$ to the tree by edge insertion at any edge in $\tau$:
    \begin{align*}
        \bigcup_{v\in\V_{[n]}\atop v[o]=w} \bigcup_{t\in t(v)}\{t\} 
        & = \bigcup_{\tau\in t(w)}\bigcup_{e\in E(\tau)} \{\tau \triangleleft (e,i)\}.
    \end{align*}
    It follows that
    \begin{align*}
        \sum_{v\in\V_{[n]} \atop v[o]=w} 
        \pi_{\V_{[n]}}(v|q) & = \sum_{\tau\in t(w)}\sum_{e\in E(\tau)} \pi_{\T_{[n]}}(\tau\triangleleft (e,i)|q)\\
        & = \sum_{\tau\in t(w)}\pi_{\T_o}(\tau|q),\ (\mbox{by Eqn.~\ref{eq:mc_tree_sum_e}})\\
        & = \pi_{\V_o}(w|q) \hspace*{0.55in} (\mbox{by Eqn.~4)},%\ref{eq:vsp_prior}}),
    \end{align*}
    which is the LHS of Eqn.~\ref{eq:vsp-mc}.
    %The family of distributions $\pi_{\V_{[n]}}(v|q),\ n\ge 1$ defined by Eqn.~\ref{eq:vsp_prior} is therefore marginally consistent.
\end{proof}
\end{restatable}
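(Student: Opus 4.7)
I would reduce the problem to single-actor removal and then leverage the marginal consistency of the BDT prior (Proposition~\ref{prop:tree-mc}) together with the definition $\pi_{\V_{[n]}}(v|q) = \sum_{t \in t(v)} \pi_{\T_{[n]}}(t|q)$ from (\ref{eq:vsp_prior}). The overall strategy is to swap the order of summation in the marginalisation identity so that the inner sum becomes exactly the quantity covered by BDT marginal consistency, and the outer sum recovers the VSP prior on the smaller ground set.

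\textbf{Key steps.} First, I would argue that it suffices to prove consistency for $\tilde o = [n]$ and $o = [n]\setminus\{i\}$ for a single actor $i$, since the general case follows by iterating one-actor removals. Next, for a fixed $w \in \V_o$, I would expand
\begin{align*}
\sum_{v \in \V_{[n]} : v[o]=w} \pi_{\V_{[n]}}(v|q)
= \sum_{v : v[o]=w} \sum_{t \in t(v)} \pi_{\T_{[n]}}(t|q).
\end{align*}
The third step is the re-indexing: I would show that the set of BDTs ranged over by this double sum coincides with $\bigcup_{\tau \in t(w)} \T_{[n]}(\tau)$, i.e.\ the set of BDTs obtained from some $\tau \in t(w)$ by a single leaf-insertion $\tau \triangleleft (e,i)$ over all valid edges $e \in E(\tau)$. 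Granting this, I can rewrite the double sum as
\begin{align*}
\sum_{\tau \in t(w)} \sum_{t \in \T_{[n]}(\tau)} \pi_{\T_{[n]}}(t|q)
= \sum_{\tau \in t(w)} \pi_{\T_o}(\tau|q)
= \pi_{\V_o}(w|q),
\end{align*}
where the first equality uses Proposition~\ref{prop:tree-mc} and the second uses (\ref{eq:vsp_prior}) again.

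\textbf{Main obstacle.} The crux is the re-indexing step: I must show that pruning actor $i$ from any BDT $t$ that represents $v$ produces a BDT for $v[o]$, and conversely that any leaf-insertion at any edge (with any $S/P$ label for the new internal node) of any BDT for $w$ yields a BDT whose VSP restricts back to $w$. The cleanest justification uses the MRCA characterisation already invoked in the proof of Lemma~\ref{lemma:mdt-vsp}: relations between two actors in $v(t)$ are determined by the type of their most recent common ancestor in $t$, and neither pruning the leaf for $i$ nor inserting a new leaf between two existing nodes changes the MRCA (or its type) of any pair of retained leaves. Once this is spelled out, the set equality
\begin{align*}
\bigcup_{v \in \V_{[n]} : v[o]=w} t(v) \;=\; \bigcup_{\tau \in t(w)} \T_{[n]}(\tau)
\end{align*}
is immediate, and the disjointness on the right (each $t$ has a unique $\tau = t \triangleright i$) means no double-counting is introduced. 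With that established, everything else is a one-line chain of identities and an appeal to Proposition~\ref{prop:tree-mc}.
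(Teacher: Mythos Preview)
Your proposal is correct and follows essentially the same route as the paper: reduce to single-actor removal, expand via (\ref{eq:vsp_prior}), re-index the double sum as all leaf-insertions into BDTs of $w$, apply tree marginal consistency (Proposition~\ref{prop:tree-mc}), and collapse back via (\ref{eq:vsp_prior}). Your treatment is in fact slightly more careful than the paper's, since you explicitly justify the set equality via the MRCA observation and note disjointness on the right-hand side to rule out double-counting, whereas the paper simply asserts the re-indexing by reference to Definition~\ref{defn:sub-contain-tree}.
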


This concludes the first part of Theorem~1. %\ref{thm:vsp-prior}. 
We now prove the second part.

\subsection{PART II: CLOSED FORM PRIOR}

The following proof makes use of the MDT representation of a VSP introduced in Section~1.1 %\ref{sec:background} 
and detailed in \ref{sec:mdt-defn} below. 

We next observe that all BDTs representing the same VSP have equal prior probabilities (they collapse to the same MDT and that fixes $S(t)$). This makes it easy to do the sum in (4) %\ref{eq:vsp_prior}) 
as the summand is constant.
\begin{restatable}[Probability Distribution over VSPs]{prop}{vspprior}
\label{prop:vsp-prior}
The prior probability for a VSP with $n$ nodes is
\begin{align*}\label{eq:vsp-prior}
    \pi_{\V_{[n]}}(v|q) =|t(v)|&\pi_{\T_{[n]}}(t|q),
\end{align*}
for any tree $t\in t(v)$. 
\begin{proof}[Proposition \ref{prop:vsp-prior}]\label{proof:vsp-prior}
    For $v\in \V_{[n]}$, any two trees $t,t'\in t(v)$ are both in $\T_{[n]}$. They also satisfy $S(t)=S(t')$. This follows from Lemma~1: %\ref{lemma:mdt-vsp}: 
    if these numbers differ then the $S$-clusters of $t$ and $t'$ cannot all have equal sizes; the $S$-cluster sizes of a BDT determine of the numbers of children of the $S$-nodes in its MDT; it follows that $m=t_\M(t)$ and $m'=t_\M(t')$ cannot be isomorphic (identifying leaves by actor labels); but $m$ and $m'$ are then distinct MDT's for $v$ which contradicts Lemma~1. %\ref{lemma:mdt-vsp}. %since otherwise there must be at least one pair of actors $i,j\in [n]$ with different relations in the two trees: the $S/P$-type of their MRCA in $t$ and $t'$ would not be the same; the type of the MRCA determines the relation between $i$ and $j$ in $v(t)$, so we must have $v(t)\ne v(t')$ and that contradicts $t,t'\in t(v)$. 
    Referring to Eqn.~3 %\ref{eq:tree-prior} 
    we see that $\pi_{\T_{[n]}}(t|q)$ is constant over $t\in t(v)$ so the sum in Eqn.~4 %\ref{eq:vsp_prior} 
    just counts trees in $t(v)$.
\end{proof}
\end{restatable}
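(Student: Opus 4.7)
The plan is to reduce Proposition~5 to the question of whether $S(t)$ takes a common value across all $t \in t(v)$, since the prior mass in \eqref{eq:tree-prior} depends on $t$ only through $S(t)$ (for fixed $n$ and $q$). Once that invariance is established, the sum \eqref{eq:vsp_prior} collapses immediately to $|t(v)|\,\pi_{\T_{[n]}}(t|q)$ for any representative $t \in t(v)$, which is exactly the claim.

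So first I would unpack $\pi_{\T_{[n]}}(t|q)$ from \eqref{eq:tree-prior} and isolate the only $t$-dependent factor, namely $(q/2)^{S(t)}(1-q)^{n-S(t)-1}$. Everything else (the normalising factor $1/|\T_{[n]}|$ and the exponents counting internal nodes) is a function of $n$ alone. Thus the claim reduces to: for any $t,t' \in t(v)$, $S(t) = S(t')$.

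For this invariance I would use Lemma~\ref{lemma:mdt-vsp}: $v \mapsto m_\V(v)$ is bijective, so the trees $t,t' \in t(v)$ collapse to one and the same MDT $m$. Now I would argue that $S(t)$ is already determined by $m$. The $S$-clusters of $t$ are in one-to-one correspondence with the $S$-nodes of $m$ (collapsing adjacent same-type internal nodes is exactly what produces $m$), and a single $S$-node with $c$ children in $m$ corresponds to an $S$-cluster of size $c-1$ in any BDT that collapses to $m$, because any binary tree that realises a total order on $c$ leaves has exactly $c-1$ internal (series) nodes. Summing over all $S$-clusters therefore gives $S(t) = \sum_{i \in S(m)} (c_i - 1)$, a quantity depending only on $m$; in particular it equals $S(t')$. (The analogous statement for $P$-nodes is not needed here but confirms the picture.)

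With this, $\pi_{\T_{[n]}}(t|q)$ is constant on $t(v)$, and \eqref{eq:vsp_prior} gives
\begin{equation*}
\pi_{\V_{[n]}}(v|q) = \sum_{t \in t(v)} \pi_{\T_{[n]}}(t|q) = |t(v)|\,\pi_{\T_{[n]}}(t|q),
\end{equation*}
for any $t \in t(v)$, as claimed. The main obstacle is the structural claim that $S(t)$ is an invariant of the MDT rather than of the particular BDT; everything else is bookkeeping. The cleanest way to handle it is by contraposition: if $S(t) \ne S(t')$, then the list of $S$-cluster sizes of $t$ differs from that of $t'$, hence the children-counts at the $S$-nodes of their collapsed MDTs differ, hence the two MDTs cannot be isomorphic as labelled rooted trees, contradicting Lemma~\ref{lemma:mdt-vsp}.
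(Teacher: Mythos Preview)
Your proposal is correct and follows essentially the same route as the paper: reduce to showing $S(t)$ is constant on $t(v)$, then invoke Lemma~\ref{lemma:mdt-vsp} and the correspondence between $S$-clusters of a BDT and $S$-nodes of its MDT to get the invariance, so the sum in \eqref{eq:vsp_prior} collapses. Your direct computation $S(t)=\sum_{i\in S(m)}(c_i-1)$ is slightly more explicit than the paper's contrapositive phrasing, but the argument is the same.
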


Finally, we count trees in $t(v)$ and this gives us the closed form we seek. This seems to be new.
\begin{restatable}{prop}{treetotal}
\label{prop:tree-total}
Let $t\in t(v)$ be an arbitrary BDT of a VSP $v\in \V_{[n]}$ with $P$- and $S$-clusters defined as in Theorem~1. %\ref{thm:vsp-prior}. 
The number of BDTs of $v$ is
    \begin{equation}\label{eq:tree-total}
         |t(v)|= \prod_{k=1}^{K_P} (|2C_k^{(P)}|-1)!!\prod_{k'=1}^{K^S} \mathcal{C}_{|C_{k'}^{(S)}|}
    \end{equation} 
    with $\mathcal{C}_s,\ s\ge 0$ given in (7). %\ref{eq:catalan}). 
    \begin{proof}[Proposition~\ref{prop:tree-total}]
    By Lemma~1 %\ref{lemma:mdt-vsp} 
    the set of BDT trees $t(v)$ for any $v\in \V_{[n]}$ is identical to the set $t_\M(m)=\{t\in \T_{[n]}: m_\T(t)=m\}$ when $m=m_\V(v)$ so we need to count the number of BDT's that collapse down to the same MDT. Let $m=(F,E,L)$ be an MDT with leaves $\F$ and internal nodes $\A$. 
    
    A $P$-node $i\in \A$ in $m$ having $c$ child nodes is generated by collapsing some $P$-cluster $C^{(P)}_k$ of a BDT $t\in t_\M(m)$ with $|C^{(P)}_k|=c-1$ nodes ``internal'' to the $P$-cluster. This $P$-cluster corresponds to a sub-tree $t_k=(V(t_k),E(t_k))$ with vertices $V(t_k)=C^{(P)}_k$ and edges \[E(t_k)=E(t)\cap (C^{(P)}_k\times C^{(P)}_k).\] 
    The sub-tree $t_k$ has $c=|C^{(P)}_k|+1$ leaves. If we replace $t_k$ with any tree with $|C^{(P)}_k|+1$ labelled leaves then it collapses to a MDT node with in- and out-edges isomorphic to those of node $i$ in $m$. The number of such trees is $(2|C^{(P)}_k|-1)!!$.  
    
    An $S$-node $i\in \A$ of the MDT with $s$ child nodes and stacking data $L_i(m)=(i_1,...,i_s)$ is generated by collapsing some $S$-cluster $S^{(S)}_k$ of a BDT. Again, that cluster covers $|S^{(P)}_k|=s-1$ internal nodes in the BDT. This $S$-cluster corresponds to a sub-tree %$t_k=(V(t_k),E(t_k))$ with vertices $V(t_k)=C^{(P)}_k$, edges $E(t_k)=E(t)\cap C^{(P)}_k\times C^{(P)}_k$. 
    of $t$ with $s=|S^{(P)}_k|+1$ leaves. Since all the internal nodes of the sub-tree are of type $S$ and its leaf nodes are labelled, this sub-tree is a BDT representing the fixed total order $i_1\succ i_2...\succ i_s$ on its leaf nodes. If we replace this subtree with any tree with $s$ labelled leaves representing the same total order then it collapses to a MDT node with in- and out-edges isomorphic to $i$ and the same stacking data. The number of such trees is given by the Catalan number $\mathcal{C}_{s-1}=\mathcal{C}_{|S^{(P)}_k|}$. This can be shown by the following induction. 
    
    The number of BDT's representing a total order on $1$ or $2$ elements is one and indeed $\mathcal{C}_0=\mathcal{C}_1=1$. Suppose the number of BDT's representing a total order $1\succ 2\succ ... \succ s$ is $\mathcal{C}_{s-1}$ and consider a BDT representing $1\succ 2\succ ... \succ s+1$. The root of such a BDT must partition the leaves into $1,...,k$ and $k+1,...,s+1$ for some $1\le k\le s$ so that the root stacks $1,...,k$ above $k+1,...,s+1$. By the induction hypothesis the number of subtrees representing $1\succ 2\succ ... \succ k$ is $\mathcal{C}_{k-1}$ and the number representing $k+1\succ 2\succ ... \succ s+1$ is $\mathcal{C}_{s-k-1}$, so the number of BDT's splitting the leaves into $1,...,k$ and $k+1,...,s+1$ is $\mathcal{C}_{k-1}\mathcal{C}_{s-k-1}$. The total number of
    BDT's representing $1\succ 2\succ ... \succ s+1$ is then
    \begin{align*}
        \sum_{k=1}^s\mathcal{C}_{k-1}\mathcal{C}_{s-k-1}&=\sum_{k=0}^s\mathcal{C}_{k}\mathcal{C}_{s-k}\\
        &=\mathcal{C}_s,
    \end{align*}
    where the last step is given in \citet{Weisstein02}.
    
    The total number of BDT's is given by the product over the internal nodes of the MDT of the numbers of BDT sub-trees which collapse to give those nodes.
    This gives Eqn.~\ref{eq:tree-total} and completes the proof of Theorem~1. %\ref{thm:vsp-prior}.
    \end{proof}
\end{restatable}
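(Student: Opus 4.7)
The plan is to invoke Lemma~1 to identify the counting problem $|t(v)|$ with counting BDTs that collapse to the unique MDT $m = m_\V(v)$. The key structural observation I would establish first is that collapsing is local: each $S$-cluster (respectively $P$-cluster) of any BDT $t \in t(v)$ corresponds to exactly one $S$-node (respectively $P$-node) of $m$, and conversely, any BDT mapping to $m$ can be reconstructed by choosing, independently for each internal node $i$ of $m$ with $c_i$ children, a binary sub-tree $\tau_i$ whose internal nodes all share $i$'s type and which, in the $S$-case, respects the prescribed stacking order $L_i(m)$ on its leaves. Because these sub-tree choices are independent across internal nodes of $m$, $|t(v)|$ factors as a product over internal nodes, matching the two-product structure of the claimed formula.

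Next I would count admissible sub-trees for the two node types separately. For a $P$-cluster $C_k^{(P)}$ corresponding to a $P$-node with $c = |C_k^{(P)}|+1$ leaves, the admissible sub-trees are arbitrary binary trees with $c$ distinguishable leaves and $c-1$ internal nodes all of type $P$. Since $P$-nodes carry no stacking data, any labelled binary topology is admissible, so the count is the classical $(2c-3)!! = (2|C_k^{(P)}|-1)!!$. This supplies the first product.

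For an $S$-cluster $C_{k'}^{(S)}$ corresponding to an $S$-node with $s = |C_{k'}^{(S)}|+1$ ordered children, the admissible sub-trees are binary trees with $s$ distinguishable leaves, internal nodes all of type $S$, and stacking data representing the fixed total order inherited from $L_i(m)$. The main obstacle is to show this count equals $\mathcal{C}_{s-1}$. I would prove this by induction on $s$. The base cases $s = 1, 2$ each admit a unique sub-tree, agreeing with $\mathcal{C}_0 = \mathcal{C}_1 = 1$. For the inductive step, the root of any admissible sub-tree must split the total order at some position $k \in \{1, \dots, s-1\}$, stacking the top $k$ leaves above the bottom $s-k$; by the inductive hypothesis these two sub-problems contribute $\mathcal{C}_{k-1}$ and $\mathcal{C}_{s-k-1}$ choices respectively, and summing gives
\[
\sum_{k=1}^{s-1} \mathcal{C}_{k-1}\mathcal{C}_{s-k-1} = \sum_{j=0}^{s-2} \mathcal{C}_{j}\mathcal{C}_{s-2-j} = \mathcal{C}_{s-1},
\]
by the standard Catalan convolution recurrence.

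Multiplying the per-cluster counts across all $K_P$ parallel clusters and $K_S$ series clusters of any representative $t \in t(v)$ yields the displayed formula for $|t(v)|$. Combined with Proposition~5, which shows $\pi_{\T_{[n]}}(t|q)$ is constant on $t(v)$, this gives the closed form in Theorem~1. The only subtle point to verify carefully is that the two BDTs representing $v$ really do share the same cluster sizes (so that the product depends only on $v$, not on the chosen $t$), but this follows from the MDT bijection in Lemma~1 since cluster sizes are read off the child-counts of $m$.
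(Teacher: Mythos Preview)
Your proposal is correct and follows essentially the same approach as the paper: reduce via Lemma~1 to counting BDTs with a fixed MDT, factor over MDT internal nodes by locality of the collapse map, count $P$-clusters by the $(2c-3)!!$ formula for labelled binary trees, and count $S$-clusters via the Catalan convolution recurrence by induction on the number of leaves. Your indexing in the Catalan induction is actually cleaner than the paper's (which has a minor slip in the subscript $\mathcal{C}_{s-k-1}$ that should be $\mathcal{C}_{s-k}$), and your explicit remark that cluster sizes depend only on $v$ via the MDT bijection is a point the paper handles in Proposition~5 rather than here.
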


%This completes the proof of Theorem~\ref{thm:vsp-prior}. Since VSPs and MDTs are one-to one, we can parameterise our posterior in \sec~\ref{sec:posterior} in terms of MDTs. We do this in Corollary~A.1 
%\ref{cor:mdt-prior} 
%in A.7%\ref{proof:cor:mdt-prior}
%. This gives $S(t),\ t\in t_\M(m)$ directly so we can evaluate $\pi_{V_{[n]}}(v|q)$ without constructing a BDT $t$.

\subsection{Multi-Decomposition Trees}\label{sec:mdt-defn}

A MDT $m\in \M_{[n]}$ is a tree $m=(F(m),E(m),L(m))$ with $n$ leaves and edges $E(m)$ directed from the root to the leaves. Let $\F$ and $\A$ be the index sets for the leaves and internal nodes, such that $|\F|=n$ and $1\leq|\A|\leq n-1$. An internal node $i\in\A$ of a MDT may have any number of child nodes between two and $n-1$. For $i \in \F$ and $m\in \M_{[n]}$, the array $F_i(m)\in [n]$ records the actor represented by leaf node $i$. The internal nodes $i\in\A$ are either of type $S$ or type $P$. The key defining feature of an MDT is that the internal nodes of an MDT which are adjacent must have unequal types. 
    
    Let $S(m)$ be the number of $S$-nodes in multi-tree $m\in\M_{[n]}$. For $m\in\M_{[n]}$ let $v(m):\in \V_{[n]}$ map an MDT to its corresponding VSP and for $i\in\F\cup\A$ let $m_i(m)$ denote the sub-tree rooted by node $i$. If $i\in\A$ is of type $P$ with $k$ children $j_1,\dots,j_k$, then $$v(m_i(m))=v(m_{j_1}(m))\oplus\dots\oplus v(m_{j_k}(m)).$$
    If $i\in\A$ is of type $S$ with $k$ child nodes $\{j_1,\dots,j_k\}=\{j\in \F\cup\A: \langle i,j\rangle\in E(m)\}$, an ordered set $L_i=(j_1,\dots,j_k)$ gives the stacking order (with $j_1$ at the top) for the sub-trees rooted by the children of $i$. It follows that
    $$v(m_i(m))=v(m_{j_1}(m))\otimes\dots\otimes v(m_{j_k}(m)).$$
    Let $L(m)=\{L_i\}_{i\in \A}$ with $L_i=\emptyset$ if $i$ is a $P$-node. Adjacent internal nodes have unequal type so if $\langle i,j\rangle\in E(m)$ then exactly one of $L_i$ and $L_j$ is empty. In this notation a MDT tree is a BDT if all its internal nodes have two child nodes and a BDT is an MDT if all adjacent internal nodes have different $S/P$-types.

    An MDT can be formed from a BDT by collapsing edges between internal nodes in the BDT which have the same type while preserving information about stacking order at $S$-nodes. This collapses $P$- and $S$-clusters to a single node. A set of BDT's can be recovered from an MDT by ``unpacking'' internal nodes of the MDT with more than two child nodes in different ways. For $t\in \T_{[n]}$ let $m_\T(t)\in \M_{[n]}$ map the BDT $t$ to its corresponding MDT. See Figure 4 %\ref{fig:multitree-ex} 
    for an example.

Counting linear extensions in the MDT formulation is similar to the BDT case (Eqns.~1 \& 2). 

\begin{align}
    |\L(h_1 \otimes \dots \otimes h_n)| = & |\L(h_1)| \times \dots \times |\L(h_n)| \label{eq:le_s_mdt}\\
    |\L(h_1 \oplus \dots \oplus h_n)| = & |\L(h_1)| \times \dots \times |\L(h_n)|{{|V(h_1)|+\dots +|V(h_n)|\choose |V(h_1)|,\dots,|V(h_n)|}}\label{eq:le_p_mdt}
\end{align}
where $|V(h_1)|$ and $|V(h_2)|$ give the number of actors in $h_1$ and $h_2$. This may be evaluated recursively in $O(n)$ steps.

\subsection{Proof of Proposition~\ref{prop:posteriors}
} \label{proof:prop:posteriors}

\posteriors*

\begin{proof}[Proposition \ref{prop:posteriors}]
 Eqn.~11 %\ref{eq:post-VSP} 
 is the marginal over $t\in t(v)$ 
    of Eqn.~10: %\ref{eq:post-BDT}: 
    if $(t,q,\psi)\sim \pi_{\T_{[n]}}(\cdot|y)$ then the new joint distribution at $v(t)=v$ is
    \begin{align*}
        p(v,q,\psi)&\propto \sum_{t'\in t(v)} \pi_{\T_{[n]}}(t'|q)\pi(q,\psi) Q(y|v(t'),\psi)\\
        &=\pi(q,\psi) Q(y|v,\psi)\sum_{t'\in t(v)} \pi_{\T_{[n]}}(t'|q)\\
        &=\pi_{\V_{[n]}}(v,q,\psi|y)
    \end{align*}
    as $Q(y|v(t),\psi)=Q(y|v,\psi)$ is a constant for $t\in t(v)$ and the prior marginalises to $\pi_{\V_{[n]}}(v|q)$ by Eqn.~4. %\ref{eq:vsp_prior}.
\end{proof}

\newpage
\section{Queue-Jumping Models}

\subsection{Queue-Jumping Up/Down Observation Model}\label{sec:QJ-U-appendix}

Let $L_T(v)=|\L[v]|$ be the number of linear extensions of VSP $v\in\V_{\tilde [n]}$ and for $i\in [n]$ let $T_i(v)=|\{l\in\L[v]:l_1=i\}|$ give the number of linear extensions with actor $i$ at the top. The observation model for QJ-U for a generic list $x\in\P_{[n]}$ is
\begin{equation}\label{eq:QJ-up}
   Q_{up}(x|v,p)=\prod_{i=1}^{n-1}\left(\frac{p}{n-i+1}+(1-p)\frac{T_{x_i}(v[x_{i:n}])}{L_T(v[y_{i:n}])}\right). 
\end{equation}
We can interpret this as the distribution over lists determined by a process in which the list is formed by building it up one element at a time from the top, choosing the next actor at random from those that remain with probability $p$ and otherwise choosing the next actor as the first actor in a list drawn from the noise free model (beginning of Section~3) %\ref{sec:QJ-B-defn-main}) 
applied to the remaining actors. \Fig~\ref{fig:QJU} gives an example list realisation for VSP $v_0$. We give the generative model alg.\ref{alg:jump-up}.

\begin{minipage}{\linewidth}
    \centering
    \begin{tikzpicture}[thick,scale=.9, every node/.style={scale=0.8}]
        \node[draw, circle, minimum width=.6cm] (1) at (0, 1) {$1$};
        \node[draw, circle, minimum width=.6cm] (2) at (-1, -0.5) {$2$};
        \node[draw, circle, minimum width=.6cm] (3) at (1, 0) {$3$};
        \node[draw, circle, minimum width=.6cm] (4) at (1, -1) {$4$};
        \node[draw, circle, minimum width=.6cm] (5) at (0, -2) {$5$};
        \draw [solid] (1.75,-2.5) -- (1.75,1.5);
        \node[draw, circle, dashed, color = red, minimum width=.6cm] (6) at (2.5, 1) {};
        \node[draw, circle, dashed, minimum width=.6cm] (7) at (2.5, 0.25) {};
        \node[draw, circle, dashed, minimum width=.6cm] (8) at (2.5, -0.5) {};
        \node[draw, circle, dashed, minimum width=.6cm] (9) at (2.5, -1.25) {};
        \node[draw, circle, dashed, minimum width=.6cm] (10) at (2.5, -2) {};
        \node[draw, circle, color=blue, minimum width=.6cm] (21) at (4, 1.5) {$1$};
        \node[draw, circle, color=blue, minimum width=.6cm,label=below:\textcolor{blue}{\small $j\sim \mathcal{U}\{1,2,3,4,5\}$}] (22) at (4, .5) {$j$};
        \node[draw, circle, minimum width=.6cm] (11) at (5.5, 1) {$1$};
        \node[draw, circle, dashed, color = red, minimum width=.6cm] (12) at (5.5, 0.25) {};
        \node[draw, circle, color=blue, minimum width=.6cm,label=above:\textcolor{blue}{\small $i\in \{2,3\}\sim (\frac{1}{3},\frac{2}{3})$}] (23) at (7, .75) {$i$};
        \node[draw, circle, color=blue, minimum width=.6cm,label=below:\textcolor{blue}{\small $j\sim \mathcal{U}\{2,3,4,5\}$}] (24) at (7, -.25) {$j$};
        \node[draw, circle, dashed, minimum width=.6cm] (13) at (5.5, -0.5) {};
        \node[draw, circle, dashed, minimum width=.6cm] (14) at (5.5, -1.25) {};
        \node[draw, circle, dashed, minimum width=.6cm] (15) at (5.5, -2) {};
        \node[draw, circle, minimum width=.6cm] (25) at (8.5, 1) {$1$};
        \node[draw, circle, minimum width=.6cm] (26) at (8.5, 0.25) {$5$};
        \node[draw, circle, color=blue, minimum width=.6cm,label=above:\textcolor{blue}{\small $i\in \{2,3\}\sim (\frac{1}{3},\frac{2}{3})$}] (30) at (10, 0) {$i$};
        \node[draw, circle, color=blue, minimum width=.6cm,label=below:\textcolor{blue}{\small $j\sim \mathcal{U}\{2,3,4\}$}] (31) at (10, -1) {$j$};
        \node[draw, circle, dashed, color=red, minimum width=.6cm] (27) at (8.5, -0.5) {};
        \node[draw, circle, dashed, minimum width=.6cm] (28) at (8.5, -1.25) {};
        \node[draw, circle, dashed, minimum width=.6cm] (29) at (8.5, -2) {};
        \node[draw, circle, minimum width=.6cm] (32) at (11.5, 1) {$1$};
        \node[draw, circle, minimum width=.6cm] (33) at (11.5, 0.25) {$5$};
        \node[draw, circle, color=blue, minimum width=.6cm] (37) at (13, -.75) {$3$};
        \node[draw, circle, color=blue, minimum width=.6cm,label=below:\textcolor{blue}{\small $j\sim \mathcal{U}\{3,4\}$}] (38) at (13, -1.75) {$j$};
        \node[draw, circle, minimum width=.6cm] (34) at (11.5, -0.5) {$2$};
        \node[draw, circle, dashed, color=red,minimum width=.6cm] (35) at (11.5, -1.25) {};
        \node[draw, circle, dashed, minimum width=.6cm] (36) at (11.5, -2) {};
        \node[draw, circle, minimum width=.6cm] (16) at (14.5, 1) {$1$};
        \node[draw, circle, minimum width=.6cm] (17) at (14.5, 0.25) {$5$};
        \node[draw, circle, minimum width=.6cm] (18) at (14.5, -0.5) {$2$};
        \node[draw, circle, minimum width=.6cm] (19) at (14.5, -1.25) {$4$};
        \node[draw, circle, color=red, minimum width=.6cm] (20) at (14.5, -2) {$3$};
        \node at ($(27)!.7!(18)$) {\ldots};
        \draw[-latex] (1) -- (2);
        \draw[-latex, color=red, dotted] (21) -- (6) node[midway,above,sloped] {\textcolor{blue}{\small{$1-p$}}};
        \draw[-latex, color=blue, dotted] (22) -- (6) node[midway,below,sloped] {\textcolor{blue}{\small{$p$}}};
        \draw[-latex, color=blue, dotted] (23) -- (12) node[midway,above,sloped] {\textcolor{blue}{\small{$1-p$}}};
        \draw[-latex, color=red, dotted] (24) -- (12) node[midway,below,sloped] {\textcolor{blue}{\small{$p$}}};
        \draw[-implies,double equal sign distance] (3,-.5) -- (5,-.5);
        \draw[-latex, color=red, dotted] (30) -- (27) node[midway,above,sloped] {\textcolor{blue}{\small{$1-p$}}};
        \draw[-latex, color=blue, dotted] (31) -- (27) node[midway,below,sloped] {\textcolor{blue}{\small{$p$}}};
        \draw[-latex, color=blue, dotted] (37) -- (35) node[midway,above,sloped] {\textcolor{blue}{\small{$1-p$}}};
        \draw[-latex, color=red, dotted] (38) -- (35) node[midway,below,sloped] {\textcolor{blue}{\small{$p$}}};
        \draw[-implies,double equal sign distance] (3,-.5) -- (5,-.5);
        \draw[-implies,double equal sign distance] (6,-.5) -- (8,-.5);
        \draw[-implies,double equal sign distance] (9,-.5) -- (11,-.5);
        \draw[-implies,double equal sign distance] (12,-.5) -- (14,-.5);
        \draw[-latex] (2) -- (5);
        \draw[-latex] (1) -- (3);
        \draw[-latex] (3) -- (4);
        \draw[-latex] (4) -- (5);
        \draw[-latex] (6) -- (7);
        \draw[-latex] (7) -- (8);
        \draw[-latex] (8) -- (9);
        \draw[-latex] (9) -- (10);
        \draw[-latex] (11) -- (12);
        \draw[-latex] (12) -- (13);
        \draw[-latex] (13) -- (14);
        \draw[-latex] (14) -- (15);
        \draw[-latex] (16) -- (17);
        \draw[-latex] (17) -- (18);
        \draw[-latex] (18) -- (19);
        \draw[-latex] (19) -- (20);
        \draw[-latex] (25) -- (26); 
        \draw[-latex] (26) -- (27); 
        \draw[-latex] (27) -- (28); 
        \draw[-latex] (28) -- (29); 
        \draw[-latex] (32) -- (33); 
        \draw[-latex] (33) -- (34); 
        \draw[-latex] (34) -- (35); 
        \draw[-latex] (35) -- (36); 
    \end{tikzpicture}
    \captionof{figure}{One example list simulation process from the VSP $v_0$ (left) via the QJ-U observation model. The simulated list is displayed on the right. }\label{fig:QJU}
\end{minipage}

\begin{algorithm}[H]
    \caption{Simulation algorithm for QJ-U.}\label{alg:jump-up}
    \begin{algorithmic}
        \Require $v\in \V_{[n]}, p\in [0,1]$
        \Ensure $x\sim Q_{up}(\cdot|v,p)$ \vskip 0.05in
        \State $i\gets 1, s\gets [n], v'\gets v$
        \While{$|s|>0$}
        \State $q \gets (T_j(v')/L(v'))_{j\in s}$%\Comment{Probability vector weighted by the numbers of linear extensions headed by each actor.}
        \State Sample $c\sim Bern(1-p)$
        \If{$c=0$}
            \State Sample $x_i \sim \U(s)$ %\Comment{Choose one of the remaining actors at random.}
        \ElsIf{$c=1$}
            \State Sample $x_i\sim \text{multinom}(q)$
        \EndIf
        \State $s\gets s\backslash x_i$
        \State $i\gets i+1, v'\gets v[s]$
        \EndWhile
        \Return $x = (x_1,\dots,x_n)$
    \end{algorithmic}
\end{algorithm}

The output $x\sim Q_{up}(\cdot|v,p)$ is a random list of $n$ elements distributed according to $Q_{up}$. This follows because the probabilities to choose entries in $x$ at each step are just the factors in $Q_{up}$ in Eqn.~\ref{eq:QJ-up}. We can turn the model around and build the list from the bottom, allowing ``queue jumping-down''. If we set $p=0$, we get a telescoping product and $Q_{up}(l|v,p=0)=1/L_T(v)$ for $l\in \L[v]$, so we recover the error-free model. Lists are assumed to be drawn independently, and the actors present $o_j,\ j=1,...,N$ are known, so the likelihood is
\[
Q(y|v,p)=\prod_{j=1}^N Q(y_j|v[o_j],p).
\]
Here $Q=Q_{up}$ (and $Q=Q_{bi}$ in the next section).

\subsection{Bi-Directional Queue-Jumping model}\label{sec:QJ-B-app-sim}

Similar to QJ-U, QJ-B ranks by repeated selection - but from both ends. We either rank from the top with probability $\phi$ or from the bottom with probability $1-\phi$. From the top (bottom) of the list, the next actor is chosen at random from those that remain with probability $p$, and otherwise as the first (last) actor in a list drawn from the noise free model. An example simulation process from VSP $v_0$ is visualised in \Fig~5. %\ref{fig:QJB}.

Algorithm \ref{alg:bi-directn} gives the simulation algorithm for the bi-directional queue jumping model. It introduces one extra step in each loop of algorithm \ref{alg:jump-up} in which we randomly choose the top/bottom fill-direction to place the next actor in the realised list with probability $\phi$. 

\begin{algorithm}[H]
    \caption{Simulation algorithm for QJ-B.}\label{alg:bi-directn}
    \begin{algorithmic}
        \Require $v\in\V_{[n]}, p\in[0,1], \phi\in[0,1]$
        \Ensure $x\sim Q_{bi}(x|v,p,\phi)$\vskip 0.05in
        \State $s\gets [n], v'\gets v$
        \State $x \gets (\emptyset,\dots,\emptyset)\in \{\emptyset\}^n$, $k\gets 1$, $U_0\gets 0$, $D_0\gets n+1$
        \While{$|s|>0$}
            \State Sample $z_k\sim Bern(1-\phi)$
            \State $U_k\gets U_{k-1}+z_k$, $D_k\gets D_{k-1}-(1-z_k)$
            \State $i_k\gets z_k U_k + (1-z_k)D_k$
            \State Sample $c_k\sim Bern(1-p)$
            \If{$c_k=0$}
                \State Sample $a \sim \U(s)$
            \Else
                \If{$z_k=0$}
                    \State $q \gets (T_a(v')/L_T(v'))_{a\in s}$
                    \State Sample $a\sim \text{multinom}(q)$
                \ElsIf{$z_k=1$}
                    \State $q \gets (B_a(v')/L_T(v'))_{a\in s}$
                    \State Sample $a\sim \text{multinom}(q)$
                \EndIf
            \EndIf
            \State Set $x_{i_k}\gets a,\ k\gets k+1,\ s\gets s\backslash a,\ v'\gets v[s]$
        \EndWhile
        \Return $x = (x_1,\dots,x_n)$
    \end{algorithmic}
\end{algorithm} 

\newpage

\subsection{Recursive Evaluation Algorithm for QJ-B}\label{sec:QJ-B-app-eval}

This sub-section gives Algorithm~\ref{alg:recur-bi-directn}, an algorithm for recursive evaluation of the QJ-B likelihood.

\begin{algorithm}[H]
    \caption{Recursion evaluating $Q_{bi}$ in Eqn.~9%\ref{eq:QJ-bidirectn}
    }\label{alg:recur-bi-directn}
    \begin{algorithmic}
    \Procedure{$f$}{$v, x,p,\phi$}
        \State $n = |v|$
        \If{$n = 1$}
            \Return 1
        \EndIf
        \If{$\phi > 0$}
            \State $l_0 \gets \frac{p}{n}+(1-p)\frac{T_{x_1}(v)}{L_T(v)}$
            \State $x\gets x_{2:n}, v\gets v[x]$
            \State $P_0 = \phi\times l_0 \times$ \Call{$f$}{$v,x,p,\phi$}
            \Else{$\text{ } L_0 = 0$}
        \EndIf
        \If{$\phi < 1$}
            \State $l_1 \gets \frac{p}{n}+(1-p)\frac{B_{x_n}(v)}{L_T(v)}$
            \State $x\gets x_{1:n-1}, v\gets v[x]$
            \State $P_1 = (1-\phi)\times l_1 \times$ \Call{$f$}{$v,x,p,\phi$}
        \Else $\text{ }P_1 = 0$
        \EndIf
        \Return $P_0 + P_1$
        \EndProcedure        
    \end{algorithmic}
\end{algorithm} 

We now show this algorithm is correct.

Let $X\sim Q_{bi}$ be a random list with realisation $X=x$. For sub-list $x_{a:b},\ 1\leq a<b\leq n$ let 
    \begin{align*}
    P_{a|a:b}&=p(X_a = x_a|z_k=0,v[x_{a:b}],p),\\
    P_{b|a:b}&=p(X_b = x_b|z_k=1,v[x_{a:b}],p),\\
    P_{a:b}&=p(X_{a:b}=x_{a:b}|v[x_{a:b}],p,\phi),
    \end{align*}
    so that $P_{1:n} = Q_{bi}(x|v,p,\phi)$ and $P_a = 1$ when $a=b$. 

\setcounter{prop}{6}
\begin{restatable}{prop}{priorbd}
\label{prop:bi-directn}
\begin{equation}\label{eq:L-recursion}
        P_{a:b}=\phi P_{a|a:b}P_{a+1:b}+(1-\phi)P_{b|a:b}P_{a:b-1},
    \end{equation}
    and $f(v,x,p,\phi)$ in Algorithm~\ref{alg:recur-bi-directn} returns $Q_{bi}(x|v,p,\phi)$.  
\end{restatable}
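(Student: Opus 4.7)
The plan is to first establish the recursion (\ref{eq:L-recursion}) by conditioning on the first placement in the QJ-B generative process, and then verify that the algorithm returns $Q_{bi}$ by induction on $n=b-a+1$.

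First I would derive (\ref{eq:L-recursion}) starting from the defining identity $Q_{bi}(x_{a:b}|v[x_{a:b}],p,\phi)=\sum_{z}Q_{bi}(x_{a:b}|z,v[x_{a:b}],p)\,p(z|\phi)$ and splitting the sum on the value of the first decision $z_1$. When $z_1$ corresponds to the ``top'' branch (contributing the factor $\phi$ through $p(z|\phi)$), the position filled is $i_1=a$ and the observed actor there must be $x_a$, so the $k=1$ factor in the product defining $Q_{bi}(x_{a:b}|z,\dots)$ is exactly $P_{a|a:b}$. The remaining sum over $z_{2:n-1}$, together with the remaining $n-2$ factors and the residual Bernoulli mass, is by inspection the QJ-B likelihood of $x_{a+1:b}$ under $v[x_{a+1:b}]$ with the same parameters $(p,\phi)$, hence it equals $P_{a+1:b}$. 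The symmetric ``bottom'' branch contributes $(1-\phi)P_{b|a:b}P_{a:b-1}$, and adding the two branches gives (\ref{eq:L-recursion}).

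With the recursion in hand I would prove $f(v[x_{a:b}],x_{a:b},p,\phi)=P_{a:b}$ by induction on $n=b-a+1$. The base case $n=1$ is immediate: Algorithm~\ref{alg:recur-bi-directn} returns $1$, and $P_{a:a}=1$ by convention. For the inductive step I read off from the pseudocode that $l_0=\frac{p}{n}+(1-p)T_{x_a}(v[x_{a:b}])/L_T(v[x_{a:b}])$ equals $P_{a|a:b}$, and similarly $l_1=P_{b|a:b}$; the two recursive calls apply $f$ to strictly smaller sub-orders, so the inductive hypothesis identifies them with $P_{a+1:b}$ and $P_{a:b-1}$. The returned value is therefore the RHS of (\ref{eq:L-recursion}), i.e. $P_{a:b}$. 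The edge cases $\phi\in\{0,1\}$ are handled by the conditional guards in the pseudocode, which skip the vanishing branch and correctly return the single remaining term. Taking $a=1$ and $b=n$ then gives $f(v,x,p,\phi)=P_{1:n}=Q_{bi}(x|v,p,\phi)$.

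I expect the main obstacle to be the justification that, after peeling off the first factor, the residual sum-over-$z$ is \emph{exactly} a QJ-B likelihood on the reduced sub-order with unchanged parameters. This requires checking that the $k$-th factor in the definition of $Q_{bi}(x|z,v,p)$ depends on $v$ only through the restriction $v[x_{[n]\setminus\{i_1,\dots,i_{k-1}\}}]$, so that removing $x_a$ from the top (or $x_b$ from the bottom) yields precisely the smaller QJ-B instance needed. Once this structural property is spelled out, the remainder of the argument is straightforward index-bookkeeping.
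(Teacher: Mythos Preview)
Your proposal is correct and follows essentially the same approach as the paper: partition on the first placement direction to obtain the recursion, then verify by induction on the sub-list length that the algorithm computes $P_{a:b}$, with the base case $n=1$ and the identification $l_0=P_{a|a:b}$, $l_1=P_{b|a:b}$. The paper presents the two steps in the opposite order (algorithm correctness assuming the recursion, then the recursion), and frames the conditioning as ``partitioning on $z_k$ at step $k=n+a-b+1$ of the full process'' rather than ``conditioning on $z_1$ of the reduced process'', but these are the same argument; the obstacle you flag --- that the residual sum over $z_{2:n-1}$ is exactly a QJ-B likelihood on $v[x_{a+1:b}]$ --- is precisely the step the paper asserts without further detail.
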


\begin{proof}[Proposition \ref{prop:bi-directn}] 
    First of all, if Eqn.~\ref{eq:L-recursion} holds then a call to $f(v[x_{a:b}],x_{a:b},p,\phi)$
    evaluates $l_0=P_{a|a:b}$, $l_1=P_{b|a:b}$ and returns
    the sum of 
    $\phi l_0 f(v[x_{a:b}],x_{a:b},p,\phi)$
    and $(1-\phi) l_1 f(v[x_{a:b-1}],x_{a:b-1},p,\phi)$. Then since $f(v[x_a],x_{a},p,\phi)=P_a=1$ we have by induction (and Eqn.~\ref{eq:L-recursion}) that $f(v[x_{a:b}],x_{a:b},p,\phi)=P_{a:b}$ and 
    \[f(v,x,p,\phi)=Q_{bi}(x|v,p,\phi).\]
    We now show Eqn.~\ref{eq:L-recursion}) holds for the distribution of sub-lists $X_{a:b}$ of $X\sim Q_{bi}$.
    If $a:b$ remain to be realised then $a-1+n-(b-1)$ entries in $X$ have been realised and this would occur as we enter step $k=n+a-b+1$ of Algorithm~\ref{alg:bi-directn}. Partitioning on the value of $z_k$,
    \begin{align*}
        P_{a:b}&=p(X_{a:b}=x_{a:b}|v[x_{a:b}],p,\phi) \\
        %& =p(y_{a:b},z_k = 0|v[y_{a:b}],p,\phi)+p(y_{a:b},z_k =1|v[y_{a:b}],p,\phi)\\
        & = p(z_k = 0|\phi)p(X_{a:b}=x_{a:b}|z_k = 0,v[x_{a:b}],p,\phi)\\
        & \quad +\ 
        p(z_k = 1|\phi)p(X_{a:b}=x_{a:b}|z_k = 1,v[x_{a:b}],p,\phi)\\
        & = \phi P_{a|a:b} p(x_{a+1:b}|v[x_{a+1:b}],p,\phi) 
        \\
        & \quad +\ (1-\phi)P_{b|a:b}p(x_{a:b-1}|v[y_{a:b-1}],p,\phi),\\
        &=\phi P_{a|a:b}P_{a+1:b}+(1-\phi)P_{b|a:b}P_{a:b-1}.
    \end{align*}
    % where 
    % \begin{align*}
    %     & p(Y_a = y_a|z_k=0,v[y_{a:b}],p) = \frac{p}{b-a+1}+(1-p)\frac{T_{y_a}(v[y_{a:b}])}{L(v[y_{a:b}])},\\
    %     & p(Y_b = y_b|z_k=1,v[y_{a:b}],p) = \frac{p}{b-a+1}+(1-p)\frac{B_{y_b}(v[y_{a:b}])}{L(v[y_{a:b}])}.
    % \end{align*}
    % This gives that Eqn.\ref{eq:bidirectn-recursive} can be written as
    % $$L_{a:b}=\phi L_{a|a:b}L_{a+1:b}+(1-\phi)L_{b|a:b}L_{a:b-1}.$$
    % This completes the proof.
\end{proof}

\newpage
\section{MCMC Sampler}\label{sec:MCMC-app}

We use Metropolis-Hasting MCMC to sample posterior distributions. We can target either distribution in Proposition~\ref{prop:posteriors}. 

\subsection{MCMC sampler in the BDT representation}\label{sec:MCMC-app-BDT}

We start with MCMC targeting BDT. This was the method we implemented as the data structures seem slightly simpler. However, we would expect MCMC targeting the VSP posterior directly to be a little more efficient, as MCMC targeting the BDT posterior wastes time exploring latent subspaces $t(v)$ without changing $v$. Tree sampling requires edge operations on trees (called ``subtree prune and regraft'' (OP-PR) in the phylogenetics literature). %We extend definition \ref{defn:leaf-ins-del} to general edges in definition \ref{defn:tree-op}.
For this purpose we assume the $0$-node with an edge to the root of the BDT is restored, so $0\in \F$ for a regraft above the root. Let $\F_{-0}=\F\setminus \{0\}$ and $E_{-0}(t)=E(t)\setminus\{\langle e_1,e_2\rangle\in E(t): e_1=0\}$.

\begin{definition}[Subtree Prune and Regraft on a BDT ]\label{defn:tree-op}
For $t = (F(t),E(t),L(t)),\ t\in\T_{[n]}$ a BDT with leaf node labels $\F$ and internal node labels $\A$, an edge operation $t' = t \triangleleft_e (e,e')$ moves edge $e=\langle e_1,e_2\rangle,\ e\in E_{-0}(t)$ to edge $e' = \langle e'_1, e'_2 \rangle,\ e'\in E(t')$. The leaf-to-actor map $F(t')=F(t)$ is unchanged. Let \[f_p(j|t)=\{i\in \A|\langle i,j \rangle \in E(t)\}\] give the parent of $j \in \F_{-0}\cup \A$ with $f_p(r|t)=0$ if $r$ is the root. Let \[f_c(i|t)=\{j_1,j_2\in\F\cup\A|\{\langle i,j_1 \rangle, \langle i,j_2 \rangle \subset E(t)\}\] give the children of $i\in \A$.%, with $f_c(i|t)=\emptyset$ if $i \in \F$.
 Let $\cev{e}_{1}=f_p(e_1|t)$ give the parent of $e_1$ and $\vec e_2=f_c(e_1|t)\backslash \{e_2\}$ give the ``sibling'' of $e_2$ in $t$ (the child of $e_1$ which is not $e_2$). Then
\begin{align*}
    E(t')=E(t)&\backslash \{e',\langle \cev e_1,e_1\rangle,\langle e_1, \vec e_2\rangle\}\\
    & \cup \{\langle e_1',e_1\rangle, \langle e_1,e_2'\rangle, \langle \cev e_{1},\vec e_2\rangle \}.
\end{align*}
Set $L(t')=L(t)$ and make the following replacements as needed. If $L_{\cev e_1}(t)\neq \emptyset$ then $L_{\cev e_1}(t)$ is an ordered set containing two edges. Set $L_{\cev e_1}(t')=L_{\cev e_1}(t)\backslash \{e_1\}\cup \{\vec e_2\}$ where the replacement enters the vacated position in the ordered set. If $L_{e_1'}(t)\neq \emptyset$, $L_{e_1'}(t')=L_{e_1'}(t)\backslash \{e_2'\} \cup \{e_1\}$. If $L_{e_1}(t)\neq \emptyset$ then take $L_{e_1}(t')\sim\U\{(e_2,e_2'),(e_2',e_2)\}$. 
\end{definition}

The edge operation $t \triangleleft_e (e,e')$ moves the sub-tree rooted by $e_2$ into edge $e'$, breaking that edge and inserting node $e_1$. The $S/P$-type of $e_1$ travels with $e_1$, and if it is $S$ we must assign a stacking order to the subtrees rooted by $e'_2$ and $e_2$. Figure \ref{fig:edge-op-BDT} illustrates an example edge operation. 

\begin{minipage}{\linewidth}
    \centering
    \begin{tikzpicture}[thick,scale=.7, every node/.style={scale=0.6}]
        \node[draw, circle, minimum width=1cm,fill=pink] (1) at (0, 2) {$S$};
        \node[draw, circle, minimum width=1cm] (2) at (-1, 1) {$1+$};
        \node[draw, circle, minimum width=1cm,fill=pink] (3) at (1, 1) {$S-$};
        \node[draw, circle, dashed,minimum width=1cm,fill=cyan] (4) at (0, 0) {$P+$};
        \node[draw, circle, minimum width=1cm] (5) at (2, 0) {$5-$};
        \node[draw, circle, minimum width=1cm] (6) at (-1, -1) {$2$};
        \node[draw, circle, dashed,minimum width=1cm,fill=pink] (7) at (1, -1) {$S$};
        \node[draw, circle, dashed,minimum width=1cm] (8) at (0, -2) {$3+$};
        \node[draw, circle, dashed,minimum width=1cm] (9) at (2, -2) {$4-$};
        \draw[-latex] (1) -- (3);
        \draw[-latex][red] (1) -- (2) node[midway,above] {\textcolor{blue}{$e'$}};
        \draw[-latex][red] (3) -- (4);
        \draw[-latex] (3) -- (5);
        \draw[-latex] (4) -- (6);
        \draw[-latex] (4) -- (7) node[midway,above] {\textcolor{blue}{$e$}};
        \draw[-latex][red] (7) -- (8);
        \draw[-latex] (7) -- (9);
        \draw[-implies,double equal sign distance] (2.65,0) -- (4.15,0) node[midway,above]{$t \triangleleft_e (e, e')$};
        \node[draw, circle, minimum width=1cm,fill=pink] (10) at (6.5, 1.8) {$S$};
        \node[draw, circle, minimum width=1cm] (11) at (4.7, -0.5) {$1$};
        \node[draw, circle, minimum width=1cm,fill=pink] (12) at (7.5, .7) {$S-$};
        \node[draw, circle, dashed, minimum width=1cm,fill=cyan] (13) at (5.5, .7) {$P+$};
        \node[draw, circle, minimum width=1cm] (14) at (8.3, -.5) {$5-$};
        \node[draw, circle, minimum width=1cm] (15) at (7, -.5) {$2+$};
        \node[draw, circle, dashed, minimum width=1cm,fill=pink] (16) at (6, -.5) {$S$};
        \node[draw, circle, dashed, minimum width=1cm] (17) at (5.3, -1.8) {$3+$};
        \node[draw, circle, dashed,minimum width=1cm] (18) at (6.7, -1.8) {$4-$};
        \draw[-latex][red] (10) -- (13);
        \draw[-latex] (10) -- (12);
        \draw[-latex] (13) -- (16);
        \draw[-latex] (13) -- (11);
        \draw[-latex][red] (12) -- (15);
        \draw[-latex] (12) -- (14);
        \draw[-latex][red] (16) -- (17);
        \draw[-latex] (16) -- (18);
    \end{tikzpicture}
    \captionof{figure}{An example OP-PR edge operation on BDT $t_0$.}\label{fig:edge-op-BDT}
\end{minipage}

The tree updates in our MCMC admit both local and global edge operations. In the local edge operation, an edge can only be moved to a neighboring edge, i.e. if $e = \langle e_1, e_2\rangle$, $e'$ is selected from $e$'s neighboring edges $E_l(e|t)$ such that
\begin{align*}
    E_l(e|t)=\{\langle e'_1,e'_2 \rangle \!\in\! E(t)\mid e'_2 \!=\! \cev e_1 \text{ or }
    e'_1 \!=\! \vec e_2 \text{ or } e'_2\!=\!\vec e_1 \}.
\end{align*}
% \begin{align*}
%     & E_l(e|t)=\{\langle e'_1,e'_2 \rangle \in E(t)|e'_1 = f_p(e_1|t) \text{ or }\\ 
%     & e'_1 = f_c(e_1|t)\backslash e_2 \text{ or } e'_2 =  f_p(e_1|t) \}\backslash \{\langle f_p(e_1|t),e_1 \rangle\}.
% \end{align*}
These ``small'' changes have a higher acceptance rate.
The global edge operation moves an edge $e$ to any $e' \in E(t)\backslash e$. For $t\in\T_{[n]}$, we typically perform 1 global edge operation for every $n$ local edge operations. We present the MCMC algorithm for BDT with the QJ-B observation model in Algorithm \ref{alg:MCMC-BDT}, omitting the standard $q,p$ and $\phi$ updates. A simple internal node type update is included. The algorithm for QJ-U observation model is similar but without the $\phi$-update. 

\begin{algorithm}
    \caption{The MCMC algorithm for the BDT with QJ-B observation model at step $k$.}\label{alg:MCMC-BDT}
    \begin{algorithmic}
        \Require $y,t^{(k-1)}=t, q^{(k-1)}=q, p^{(k-1)}=p,\phi^{(k-1)}=\phi$ with $t=(F(t),E(t),L(t)),\ t\in \T_{[n]}$.
        \Ensure \\
        {\centering\vspace*{-0.1in}
         $
        \begin{aligned}
            &t^{(k)}\sim \pi(t|y,q,p,\phi), \\
            &q^{(k)}\sim\pi(q|y,t^{(k)},p,\phi),\\
            &p^{(k)}\sim\pi(p|y,t^{(k)},q^{(k)},\phi),\\
            &\phi^{(k)}\sim\pi(\phi|y,t^{(k)},q^{(k)},p^{(k)})\\[0.05in]
        \end{aligned}
        $
        \par}
        \Function{type}{$i|t$}
            \If {$L_i(t)=\emptyset$}
            \Return $P$
            \Else \Return $S$
            \EndIf
        \EndFunction
        
        \hrulefill\emph{Update for $t$ (internal node type)}\hrulefill
        \State $t'\gets t^{(k)}\gets t$
        \State Sample $i\sim \U(\A)$
        \If {\Call{type}{$i|t$}=$P$}
            %\State \Call{type}{$i|t$} $\gets S$
            \State Sample $z\sim \U\{0,1\}$
        \begin{align*}
        L_i(t')&\gets zf_c(i|t)[(1,2)]+(1-z)f_c(i|t)[(2,1)]\\    
            \eta_1&\gets\frac{2\times Q(y|v(t'),p,\phi)\pi_{\T_{[n]}}(t'|q)}{Q(y|v(t),p,\phi)\pi_{\T_{[n]}}(t|q)}\\[-0.3in]
            \end{align*}
        \ElsIf{\Call{type}{$i|t$}=$S$}
        %\State \Call{type}{$i|t$} $\gets P$
        
        \State  \vspace*{-0.25in}
            \begin{align*}
            L_i(t')&\gets \emptyset\\
            \eta_1&\gets\frac{Q(y|v(t'),p,\phi)\pi_{\T_{[n]}}(t'|q)}{2Q(y|v(t),p,\phi)\pi_{\T_{[n]}}(t|q)}\\[-0.3in]
            \end{align*}
        \EndIf
        \If {$\U(0,1)\le\eta_1$}
            \State $t\gets t^{(k)} \gets t'$
        \EndIf
        
        \hrulefill\emph{Update for $t$ (global edge operation)}\hrulefill
        \State Sample $e\sim \U(E_{-0}(t))$, $e' \sim \U(E(t)\backslash e)$
        \State \vspace*{-0.25in}
            \begin{align*}
            t'&\gets t\triangleleft_e (e,e')\\
            \eta_2 &\gets \frac{Q(y|v(t'),p,\phi)\pi_{\T_{[n]}}(t'|q)}{Q(y|v(t^{(k)}),p,\phi)\pi_{\T_{[n]}}(t|q)}\\[-0.3in]
            \end{align*}
        \If {$\U(0,1)\le \eta_2$}
            \State $t\gets t^{(k)} \gets t'$
        \EndIf

        \hrulefill\emph{Update for $t$ (local edge operation)}\hrulefill
        \State Sample $e\sim \U(E_{-0}(t)), e' \sim \U(E_l(e|t))$
        \State \vspace*{-0.25in}
            \begin{align*}
            t'&\gets t\triangleleft_e (e,e')\\
            \eta_3 &\gets \frac{Q(y|v(t'),p,\phi)\pi_{\T_{[n]}}(t'|q)|E_l(e|t)|}{Q(y|v(t),p,\phi)\pi_{\T_{[n]}}(t|q)|E_l(e|t')|}\\[-0.3in]
            \end{align*}
        \If {$\U(0,1)\le \eta_3$}
            \State $t\gets t^{(k)} \gets t'$
        \EndIf

        \hrulefill\emph{Updates for $q,p$ and $\phi$ omitted}\hrulefill
    \end{algorithmic}
\end{algorithm}

% \begin{algorithm}
%     \caption{The MCMC algorithm for the BDT at step $k$ - continued.}
%     \begin{algorithmic}
%     \Require $y,t^{(k-1)}=(F(t^{(k-1)}),E(t^{(k-1)}),L(t^{(k-1)}))\in \T_{[n]}, q^{(k-1)}, p^{(k-1)},\phi^{(k-1)}$\\
%         \Ensure \begin{align*}
%             &t^{(k)}\sim \pi(t|y,q^{(k-1)},p^{(k-1)},\phi^{(k-1)}), \\
%             &q\sim\pi(q|y,t^{(k)},p^{(k-1)},\phi^{(k-1)}),\\
%             &p\sim\pi(q|y,t^{(k)},q^{(k)},\phi^{(k-1)}),\\
%             &\phi\sim\pi(q|y,t^{(k)},q^{(k)},p^{(k})
%         \end{align*}
%     \hrulefill\emph{Update for $p$}\hrulefill\\
        
%     \State Sample $p\sim \pi(p)$
%     \State $\eta_4 = \frac{Q(y|v(t^{(k)}),p,\phi^{(k-1)})\pi(p)}{Q(y|v(t^{(k)}),p^{(k-1)},\phi^{(k-1)})\pi(p^{(k-1)})}$
%     \State Sample $u\in\U(0,1)$
%     \If {$u<\eta_4$}
%         \State $p^{(k)} \gets p$
%     \EndIf

%     \hrulefill\emph{Update for $q$}\hrulefill\\
        
%     \State Sample $q\sim \pi(q)$
%     \State $\eta_5 = \frac{\pi_{\T_{[n]}}(t^{(k)}|q)\pi(q)}{\pi_{\T_{[n]}}(t^{(k)}|q^{(k-1)})\pi(q^{(k-1)})}$
%     \State Sample $u\in\U(0,1)$
%     \If {$u<\eta_5$}
%         \State $q^{(k)} \gets q$
%     \EndIf

%     \hrulefill\emph{Update for $\phi$}\hrulefill\\
        
%     \State Sample $\phi\sim \pi(\phi)$
%     \State $\eta_6 = \frac{Q(y|v(t^{(k)}),p^{(k)},\phi)\pi(\phi)}{Q(y|v(t^{(k)}),p^{(k)},\phi^{(k-1)})\pi(\phi^{(k-1)})}$
%     \State Sample $u\in\U(0,1)$
%     \If {$u<\eta_6$}
%         \State $\phi^{(k)} \gets \phi$
%     \EndIf
        
%     \end{algorithmic}
% \end{algorithm}

\subsection{MCMC sampler in the MDT representation}\label{sec:MCMC-app-MDT}

We can target the VSP-posterior directly. Since MDT's are one-to-one with VSP's, we can parameterise using MDT's and define (in Defn.~\ref{defn:edge-op-MDT}) a sub-tree prune and regraft operator for MDT's.

\begin{definition}[Subtree Prune and Regraft on a MDT]\label{defn:edge-op-MDT}
    For $m = (F(m),E(m),L(m)),\ m\in\M_{[n]}$ a MDT with leaf node labels $\F$ and internal nodes labels $\A$, an edge operation $m' = m\triangleleft_e (e,i)$ creates a new MDT with nodes $\F',\A'$, moving edge $e=\langle e_1,e_2\rangle,\ e\in E_{-0}(m)$ onto node $i\in (\F\cup\A)\backslash \{e_1,e_2\}$.
    
    We need at most $2n$ node labels below. Assume $\F_{-0}\cup\A\subset [2n]$ and let $pop(\F,\A)=\min([2n]\setminus (\F\cup\A))$ be a function we call when we need a new node label. There are three types of edge operation. 
    \begin{enumerate}
        \item $i \in \A$: we connect $e$ to node $i$.\\ 
        Here $F(m')=F(m)$ and 
        \[E(m')=E(m)\backslash \{e\} \cup \langle i,e_2 \rangle.\] Set $L(m')=L(m)$ and make the following changes as needed. If $L_{e_1}(m)\neq \emptyset$ then set $L_{e_1}(m')=L_{e_1}(m)\backslash \{e_1\}$. If $L_{i}(m)\ne \emptyset$ then suppose $L_{i}(m)=(j_1,\dots,j_k)$. Take $L_i(m')\sim \U\{(e_1,j_1,\dots,j_k),\dots, (j_1,\dots,j_k, e_1)\}$ (insert the subtree below $\langle e_1,e_2\rangle$ uniformly in the stack under $i$). %(ie, with probability $(|L_{i}(m)|+1)^{-1}$).
        \item $i \in \F$: we connect $e$ into edge $\langle \cev i,i \rangle$ with $\cev i=f_p(i|m)$ and add an additional internal node $j = pop(\F,\A)$.\\
        Here $F(m')=F(m)$ and 
        \[E(m')=E(m)\backslash \{e,\langle \cev i,i \rangle\} \cup \{\langle \cev i,j\rangle,\langle j,i\rangle,\langle j,e_2 \rangle\}.\]
        Set $L(m')=L(m)$ and make the following changes as needed.
        If $L_{e_1}(m)\neq \emptyset$ then set $L_{e_1}(m')=L_{e_1}(m)\setminus \{e_1\}$.
        If $L_{\cev i}(m)\neq \emptyset$ (parent is $S$), suppose $L_{\cev i}(m)=(j_1,\dots,i,\dots,j_k)$. Set $L_{\cev{j}}(m')=(j_1,\dots,j,\dots,j_k)$ and $L_j(m')=\emptyset$ (new child is $P$). Finally, if $L_{\cev i}(m)=\emptyset$ (parent is $P$), take $L_j(m)\sim \U\{(i,e_2),(e_2,i)\}$ (new child is $S$). 
        \item $i = 0$: connect $e$ into the edge above the root, $r=f_c(0|m),\ r\in \A$ and add an additional internal node $j=pop(\F,\A)$ which will root $m'$.\\ 
        Here $F(m')=F(m)$ and 
        \[E(m')=E(m)\backslash e \cup \{\langle 0,j \rangle,\langle j,r \rangle,\langle j,e_2\rangle\}.\]
        Set $L(m')=L(m)$ and make the following changes as needed. If $L_{e_1}(m)\neq \emptyset$ then set $L_{e_1}(m')=L_{e_1}(m)\backslash \{e_1\}$.
        If $L_r(m)\neq \emptyset$ (child is $S$), we define $L_j(m')=\emptyset$ (new node is $P$). Otherwise, if $L_{r}(m)=\emptyset$ (child is $P$), we take $L_j(m')\sim\U\{(r,e_2),(e_2,r)\}$ (new node is $S$).
    \end{enumerate}
\end{definition}

Figure \ref{fig:edge-op-MDT} illustrates an example edge operation on a MDT. Moving an edge $e=\langle e_1,e_2\rangle$ may increase or decrease the number of edges and internal nodes. For example, if in case (1) $f_c(e_1|m)=\{e_2,\vec e_2\}$, moving $e$ replaces $\langle \cev e_1,e_1\rangle,\langle e_1,\vec e_2\rangle$ with $\langle \cev e_1,\vec e_2\rangle$ and $e_1$ is removed. If $e$ is attached in an existing internal node $i\in \A$ then the number of nodes and edges each go down by one. 
%If in case (2) or (3) $|f_c(e_1|m)|>2$ and $i\in\F$ then a new node is added above $i$ (or below $0$) but node $e_1$ is retained and $|\A'|=|\A|+1$. Finally $|\A'|=|\A|$, both in case (1) when $|f_c(e_1|m)|>2$ and in (2) or (3) if $|f_c(e_1|m)|=2$ and $i\in\F$ so that a node $e_1$ is removed and another added.

If we take $e\sim \U(E_{-0}(m))$ and $i\sim \U[(\F\cup \A)\setminus\{e_1,e_2\}]$ and set $m' = m\triangleleft_e (e,i)$ as given in Defn.~\ref{defn:edge-op-MDT} then the proposal probability $\rho(m'|m)$ depends on $e$ and $i$.
A simple generic expression is 
\begin{equation}\label{eq:MDT-proposal-prob}
    \rho(m'|m)=\frac{1}{|E(m)|}\times \frac{1}{|\F\cup \A|-2}\times \rho_{m,m'}
\end{equation}
 where $\rho_{m,m'}$ is given as follows: (Case 1) $\rho_{m,m'}=1/(c_i+1)$ if $i$ is internal and has $c_i$ child nodes and type $S$ ($e_1$ must be placed in the stack below $i$) and $\rho_{m,m'}=1$ if $i$ is internal and type $P$; (Case 2) $\rho_{m,m'}=1/2$ if $i$ is a leaf and $\cev i$ is type $P$ (as $i$ and $e_2$ must be stacked) and $\rho_{m,m'}=1$ if $i$ is leaf and $\cev i$ is type $S$; (Case 3) $\rho_{m,m'}=1/2$ if $i=0$ and $r=f_c(0|m)$ is type $P$ (as $r$ and $e_2$ must be stacked) and $\rho_{m,m'}=1$ if $i=0$ and $r$ is type $S$. 
 
 Not every operation is admissible: if $f_c(e_1|m)=\{e_2,\vec e_2\}$ and $\vec e_2$ is not a leaf, then $\vec e_2$ and $\cev e_1$ must have the same type. An edge  $\langle \cev e_1,\vec e_2\rangle$ would then connect two internal nodes of the same type and so $m'\not\in \M_{[n]}$. In Eqn.~\ref{eq:MDT-proposal-prob}, $\rho(m'|m)$ has a simple form because we do not ``keep trying till we get $m'\in\M_{[n]}$''. We know $m'\not\in \M_{[n]}$ is a possible outcome for $m'$, but we don't try to write down $\rho(m'|m)$ in this case as these proposals will be rejected without the need to evaluate $\rho(m'|m)$. 
 
\begin{minipage}{\linewidth}
    \centering
    \begin{tikzpicture}[thick,scale=.7, every node/.style={scale=0.6}]
        \node[draw, circle,fill=pink, minimum width=1cm,label={\small $i_2$}] (12) at (3.5, 1.8) {$S$};
        \node[draw, circle, minimum width=1cm,label={\small 1,$i_1$}] (13) at (2, 0) {$1$};
        \node[draw, circle, minimum width=1cm,fill=cyan,label={\small 2}] (14) at (3, 0) {$P$};
        \node[draw, circle, minimum width=1cm,label={\small 3}] (15) at (4, 0) {$5$};
        \node[draw, circle, minimum width=1cm,label={\small 4}] (16) at (5, 0) {$6$};
        \node[draw, circle,dashed, minimum width=1cm] (17) at (2, -2) {$2$};
        \node[draw, circle, minimum width=1cm] (18) at (3, -2) {$3$};
        \node[draw, circle, minimum width=1cm] (19) at (4, -2) {$4$};
        \draw[-latex] (12) -- (13);
        \draw[-latex] (12) -- (14);;
        \draw[-latex] (12) -- (15);
        \draw[-latex] (12) -- (16);
        \draw[-latex] (14) -- (17) node[midway,xshift=-.1cm,above] {\textcolor{blue}{$e$}};
        \draw[-latex] (14) -- (18);
        \draw[-latex] (14) -- (19);
        \draw[-implies,double equal sign distance] (5.5,0) -- (6.5,0) node[midway,above] {$m\triangleleft_e (e,i_2)$};;
        \draw[-implies,double equal sign distance] (5.5,1) -- (6.5,3) node[midway,below] {$m\triangleleft_e (e,i_1)$};;
        \draw[-implies,double equal sign distance] (5.5,-1) -- (6.5,-3) node[midway,above] {$m\triangleleft_e (e,0)$};;
        % (e,i_1)
        \node[draw, circle,fill=pink, minimum width=1cm] (20) at (9.2, 5.8) {$S$};
        \node[draw, circle, minimum width=1cm] (21) at (7.2, 2.7) {$1$};
        \node[draw, circle, minimum width=1cm,fill=cyan, dashed,label={\small 1}] (22) at (7.7, 4.2) {$P$};
        \node[draw, circle, minimum width=1cm,fill=cyan,label={\small 2}] (28) at (8.7, 4.2) {$P$};
        \node[draw, circle, minimum width=1cm,label={\small 3}] (23) at (9.7, 4.2) {$5$};
        \node[draw, circle, minimum width=1cm,label={\small 4}] (24) at (10.7, 4.2) {$6$};
        \node[draw, circle,dashed, minimum width=1cm] (25) at (8.2, 2.7) {$2$};
        \node[draw, circle, minimum width=1cm] (26) at (9.2, 2.7) {$3$};
        \node[draw, circle, minimum width=1cm] (27) at (10.2, 2.7) {$4$};
        \draw[-latex] (20) -- (22);
        \draw[-latex] (20) -- (28);
        \draw[-latex] (20) -- (23);
        \draw[-latex] (20) -- (24);
        \draw[-latex] (22) -- (21);
        \draw[-latex] (22) -- (25);
        \draw[-latex] (28) -- (26);
        \draw[-latex] (28) -- (27);
        % (e,i_2)
        \node[draw, circle,fill=pink, minimum width=1cm] (29) at (9.2, 1.5) {$S$};
        \node[draw, circle, minimum width=1cm,label={\small 1}] (30) at (7.2, 0) {$1$};
        \node[draw, circle, minimum width=1cm,fill=cyan,label={\small 2}] (31) at (8.2, 0) {$P$};
        \node[draw, circle, minimum width=1cm,label={\small 4}] (32) at (10.2, 0) {$5$};
        \node[draw, circle, minimum width=1cm,label={\small 5}] (33) at (11.2, 0) {$6$};
        \node[draw, circle, dashed,minimum width=1cm,label={[xshift=.2cm]\small 3}] (34) at (9.2, 0) {$2$};
        \node[draw, circle, minimum width=1cm] (35) at (7.7, -1.5) {$3$};
        \node[draw, circle, minimum width=1cm] (36) at (8.7, -1.5) {$4$};
        \draw[-latex] (29) -- (30);
        \draw[-latex] (29) -- (31);
        \draw[-latex] (29) -- (32);
        \draw[-latex] (29) -- (33);
        \draw[-latex] (29) -- (34);
        \draw[-latex] (31) -- (35);
        \draw[-latex] (31) -- (36);
        % (e,0)
        \node[draw, circle,fill=cyan, minimum width=1cm] (37) at (9.5, -2.5) {$P$};
        \node[draw, circle,fill=pink, minimum width=1cm] (38) at (8.5, -3.5) {$S$};
        \node[draw, circle,dashed,minimum width=1cm] (39) at (10.5, -3.5) {$2$};
        \node[draw, circle, minimum width=1cm,label={\small 1}] (40) at (7.2, -4.7) {$1$};
        \node[draw, circle, minimum width=1cm,fill=cyan,label={\small 2}] (41) at (8.3, -4.7) {$P$};
        \node[draw, circle, minimum width=1cm,label={\small 3}] (42) at (9.4, -4.7) {$5$};
        \node[draw, circle, minimum width=1cm,label={\small 4}] (43) at (10.5, -4.7) {$6$};
        \node[draw, circle, minimum width=1cm] (44) at (7.8, -6) {$3$};
        \node[draw, circle, minimum width=1cm] (45) at (8.8, -6) {$4$};
        \draw[-latex] (37) -- (38);
        \draw[-latex] (37) -- (39);
        \draw[-latex] (38) -- (40);
        \draw[-latex] (38) -- (41);
        \draw[-latex] (38) -- (42);
        \draw[-latex] (38) -- (43);
        \draw[-latex] (41) -- (44);
        \draw[-latex] (41) -- (45);
    \end{tikzpicture}
    
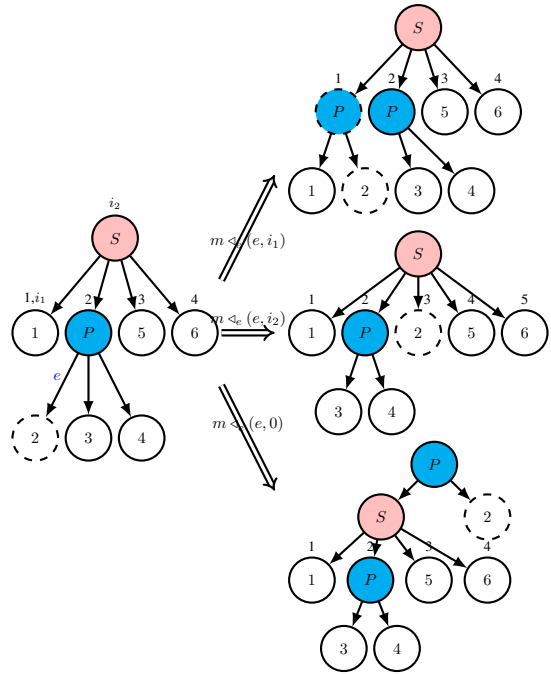
\captionof{figure}{Some possible operations on the MDT $m_1$ from Fig.~4. %\ref{fig:multitree-ex}. 
    The edge $e$ connected to leaf for actor 2 is reconnected to leaf node $i_1$ (where it must give a new $P$ node as its neighbor, the parent of $i_1$, is $S$), to ancestral node $i_2$ (and is randomly allocated position 3 among the nodes stacked below the $S$-node $i_2$), and to node $0$ (where it is added above the root as a $P$-node, as its neighbor the ex-root node is $S$). 
    }\label{fig:edge-op-MDT}
\end{minipage}

Some operations are inadmissible, so we need to check our proposal defines an irreducible Markov chain on its own, or add other operations.

\begin{restatable}[Posterior Marginals]{prop}{MDTirreducible}
\label{prop:MDT-irreducible}
Consider the MDT Markov chain $M_k,\ k\ge 0$ with $M_0\in\M_{[n]}$ formed by repeated random updates defined as follows: let $M_t=m$; let $e\sim \U(E_{-0}(m))$ and $i\sim \U[(\F\cup \A)\setminus\{e_1,e_2\}]$; Let $m' = m\triangleleft_e (e,i)$ be given by Defn.~\ref{defn:edge-op-MDT}; if $m'\in\M_{[n]}$ set $M_{k+1}=m'$ and otherwise $M_{k+1}=m$. This proposal-chain is irreducible.
\end{restatable}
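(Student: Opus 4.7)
The plan is to choose a canonical MDT $m^\star \in \M_{[n]}$ --- namely, the ``parallel star'' consisting of a single $P$-root whose children are the $n$ leaves --- and show that every $m \in \M_{[n]}$ reaches $m^\star$ via a finite sequence of admissible prune-and-regraft updates with positive probability. Because the inverse of any admissible update is itself an admissible update of the same form (by inspection of the three cases of Defn.~\ref{defn:edge-op-MDT}), the same argument run in reverse gives positive-probability paths from $m^\star$ to arbitrary $m$, so irreducibility between any pair follows by concatenation through $m^\star$.

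Reachability of $m^\star$ from arbitrary $m$ I would prove by induction on the number of internal nodes $|\A|$ of $m$. In the base case $|\A|=1$, either $m=m^\star$, or $m$ is the ``series star'' (a single $S$-root with $n$ leaf children); in the latter case, an explicit short sequence of Case~2 updates attaches leaves one at a time into a growing $P$-subtree beneath the $S$-root, each step admissible because the $S$-root retains at least two children throughout, until the $S$-root has only the $P$-subtree as a child, whereupon it contracts and leaves $m^\star$.

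For the inductive step with $|\A|\ge 2$, the key claim is that some admissible update strictly reduces $|\A|$. The generic case is when a deepest non-root internal node $p$ has exactly two children, both of which are leaves; the Case~1 update moving one of these leaves onto $\cev p$ leaves $p$ with a single remaining leaf child, triggering contraction of $p$. The surviving leaf then becomes a direct child of $\cev p$ without creating any type conflict, since leaves impose no type constraint. The auxiliary case, when every deepest internal node has $\ge 3$ leaf children or its paired sibling under $\cev p$ is an internal node of type matching $\cev p$, is handled by a short detour: first perform an admissible Case~2 update to split off a pair of leaves into a new two-leaf internal node, then apply the reducing update to that new node.

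The main obstacle is verifying admissibility throughout such detours under the MDT constraint that adjacent internal nodes have opposite types: a naive contraction could produce a same-type adjacency and so fail the $m'\in\M_{[n]}$ check. A finite case analysis on the types of $p$, $\cev p$, and the surviving child of $p$ covers all configurations, and since $\M_{[n]}$ is finite every detour terminates in finitely many positive-probability admissible steps, completing the induction and hence the irreducibility proof.
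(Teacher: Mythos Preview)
Your overall strategy---collapse every MDT to a single canonical star $m^\star$ and invoke reversibility---is sound and genuinely different from the paper's route. The paper uses \emph{two} canonical stars $m_a$ (your $m^\star$, the $P$-star) and $m_b$ (the $S$-star), shows $m_a\leftrightarrow m_b$, and then argues \emph{constructively top-down}: starting from the star whose root type matches a target $m^*$, it pulls leaves down level by level to rebuild $m^*$ exactly, recursing on subtrees. You instead argue \emph{destructively bottom-up} by induction on $|\A|$, draining a deepest internal node into its parent. Your approach is arguably cleaner (one canonical target, a simple well-founded measure); the paper's buys an explicit recipe for the forward path without appealing to a separate reversibility claim.

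That said, two details in your sketch need fixing. First, in the base case you cannot reach $m^\star$ from the $S$-star using only ``Case~2 updates'': each Case~2 update targets a leaf and creates a \emph{new} internal node of type opposite to the target's parent, so repeating it only grows $|\A|$. What actually works is a single Case~2 (or Case~3) move to create the first $P$-node, followed by Case~1 moves that attach the remaining leaves to that $P$-node until the $S$-root is left with one child and contracts---this is essentially the paper's $m_a\leftrightarrow m_b$ passage.

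Second, your ``auxiliary case'' is muddled. A sibling of $p$ under $\cev p$ can never have type matching $\cev p$---that would already violate the MDT adjacency constraint---so that branch is vacuous. More importantly, no detour is needed at all: if a deepest internal node $p$ has $k\ge 3$ leaf children, just move leaves one at a time to $\cev p$ via Case~1 (always admissible, since each moved child is a leaf and so imposes no type constraint on $\cev p$) until $p$ has two leaves left; the next such move triggers contraction of $p$ with the surviving child a leaf, so no same-type adjacency can arise. The inductive claim should therefore be ``a finite sequence of admissible updates strictly reduces $|\A|$'', not that a single update does. With these corrections your argument goes through.
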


\begin{proof}[Proposition~\ref{prop:MDT-irreducible}]
    Consider the two building-block MDT's $m_a,m_b$ shown in the top row of Fig.~\ref{fig:MDT-block}. These have a single internal node with $n$ leaves. Any MDT $m\in\M_{[n]}$ has a root node which must be of type $P$ or $S$. We show that every MDT with a root of type $P$ (or $S$) intercommunicates with $m_a$ (respectively $m_b$) and that $m_a$ intercommunicates with $m_b$ and hence $\M_{[n]}$ is a closed communicating class. 

    \begin{minipage}{\linewidth}
    \centering
    \begin{tikzpicture}[thick,scale=.7, every node/.style={scale=0.6}]
        \node[draw, circle, minimum width=1cm, fill=cyan] (1) at (-2.2,2) {$P$};
        \node[draw, circle, minimum width=1cm] (2) at (-2.9, .5) {};
        \node[draw, circle, minimum width=1cm] (3) at (-3.9, .5) {};
        \node[draw, circle, minimum width=1cm] (4) at (-1.5, .5) {};
        \node[draw, circle, minimum width=1cm] (5) at (-0.5, .5) {};
        \node[] at (-2.2,-.2) {$(a)$};
        \node[] at (-2.2,.5) {$\dots$};
        \draw[-latex] (1) -- (2);
        \draw[-latex] (1) -- (3);
        \draw[-latex] (1) -- (4);
        \draw[-latex] (1) -- (5);
        \node[draw, circle, fill=pink, minimum width=1cm] (1) at (2.6,2) {$S$};
        \node[draw, circle, minimum width=1cm,label={[xshift=.1cm]\small $n-1$}] (2) at (3.3, .5) {};
        \node[draw, circle, minimum width=1cm,label={\small $n$}] (3) at (4.3, .5) {};
        \node[draw, circle, minimum width=1cm,label={\small 2}] (4) at (1.9, .5) {};
        \node[draw, circle, minimum width=1cm,label={\small 1}] (5) at (0.9, .5) {};
        \node[] at (2.6,-.2) {$(b)$};
        \node[] at (2.6,.5) {$\dots$};
        \draw[-latex] (1) -- (2);
        \draw[-latex] (1) -- (3);
        \draw[-latex] (1) -- (4);
        \draw[-latex] (1) -- (5);
        \node[draw, circle, minimum width=1cm,fill=cyan] (1) at (-2.2,-1) {$P$};
        \node[draw, circle, minimum width=1cm,fill=pink] (3) at (-3.2, -2) {$S$};
        \node[draw, circle, minimum width=1cm,fill=pink] (4) at (-1.2, -2) {$S$};
        \node[draw, circle, minimum width=1cm,label={\small 1}] (5) at (-4.1, -3.2) {};
        \node[draw, circle, minimum width=1cm,label={\small 2}] (6) at (-2.7, -3.2) {};
        \node[draw, circle, minimum width=1cm,label={\small 1}] (7) at (-1.7, -3.2) {};
        \node[draw, circle, minimum width=1cm,label={\small 2}] (8) at (-.3, -3.2) {};
        \node[] at (-2.2,-3.9) {$(c)$};
        \node[] at (-2.2,.-2) {$\dots$};
        \node[] at (-3.4,.-3.2) {$\dots$};
        \node[] at (-1,.-3.2) {$\dots$};
        \draw[-latex] (1) -- (3);
        \draw[-latex] (1) -- (4);
        \draw[-latex] (3) -- (5);
        \draw[-latex] (3) -- (6);
        \draw[-latex] (4) -- (7);
        \draw[-latex] (4) -- (8);
        \node[draw, circle, minimum width=1cm,fill=pink] (1) at (2.6,-1) {$S$};
        \node[draw, circle, minimum width=1cm,fill=cyan,label={\small 2}] (3) at (3.6, -2) {$P$};
        \node[draw, circle, minimum width=1cm,fill=cyan,label={\small 1}] (4) at (1.6, -2) {$P$};
        \node[draw, circle, minimum width=1cm] (5) at (4.5, -3.2) {};
        \node[draw, circle, minimum width=1cm] (6) at (3.1, -3.2) {};
        \node[draw, circle, minimum width=1cm] (7) at (2.1, -3.2) {};
        \node[draw, circle, minimum width=1cm] (8) at (.7, -3.2) {};
        \node[] at (2.6,-3.9) {$(d)$};
        \node[] at (2.6,.-2) {$\dots$};
        \node[] at (3.8,.-3.2) {$\dots$};
        \node[] at (1.4,.-3.2) {$\dots$};
        \draw[-latex] (1) -- (3);
        \draw[-latex] (1) -- (4);
        \draw[-latex] (3) -- (5);
        \draw[-latex] (3) -- (6);
        \draw[-latex] (4) -- (7);
        \draw[-latex] (4) -- (8);
    \end{tikzpicture}
    
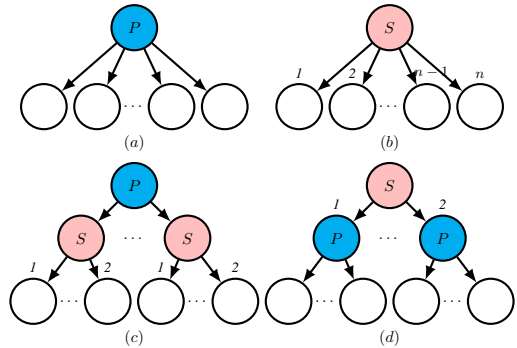
\captionof{figure}{Four building-block MDT's.
    }\label{fig:MDT-block}
\end{minipage}

    % We show any building block can be reached from another through repeated edge operation defined in Defn. \ref{defn:edge-op-MDT}. This gives our desired irreducibility. Without loss of generality, we assume the number of leaf nodes is $n$ for each building block. Denote the root nodes for each MDT as $r\in \A$. 
    
    We first show $m_a\to m_b$. We use the 0 node but there are many paths. Let $r_a$ be the label of the root node in $m_a$ and $r_b$ in $m_b$. Suppose $L_{r_b}(m_b)=(i_1,...,i_n)$ gives the stacking data for the children of the $S$ node $r_b$. Label nodes of $m_a$ so $f_c(r_a|m_a)=\{i_1,...,i_n\}$ and $F_{i_k}(m_a)=F_{i_k}(m_b),\ k=1,...,n$. Now take $e=\langle r_a,i_1\rangle$ in $m_a$ and $i=0$ and set $m=m_a \triangleleft_e (e,i)$. This creates a new node $j$ of type $S$ above the root.
    Let the stacking data of this new node be $L_j(m)=(i_1,r)$. Now apply $m\gets m \triangleleft_e (\langle r,i_k\rangle,j)$ for each $k=2,...,n-1$,
    adding $i_k$ into position $k$ in the list $L_j(m)$. When we do the last node $k=n-1$, node $r$ is removed and $j$ connects directly to $i_n$ with $i_n$ in the correct position in $L_j(m)$. This gives $m=m_b$. All these operations are admissible and have non-zero probability. The same scheme can be reversed, so
    we can take a MDT of type $m_b$ and reorder the entries in $L_{r_b}(m_b)$ by going to $m_a$ and back, placing the leaves in any desired order in $L_j(m)$ as we pass back.

    Now take a general $m^*\in \M_{[n]}$. Its root $r^*$ matches $m_a$ or $m_b$ by type. The root of $m^*$ partitions the leaves into $K$ sets $\{s_1,...,s_K\}$ where $K$ is the number of child nodes of $r^*$ and $s_k=(s_{k,1},...,s_{k,c_k}),\ k=1,...,K$. 
    
    If the root type of $m^*$ is $S$ then these partitions are ordered. In this case we permute the leaves of $m_b$ so that $L_{r_b}(m_b)=(s_{1,1},...,s_{K,c_K})$. Let $m=m_b$ with root $r$. If $i\in s_{k'}$ is a child of $r^*$ which is a leaf then $s_{k'}=\{i\}$ and we are done. All the other partitions $s_k$ correspond to child nodes $i_k$ of $r^*$ which are $P$ nodes. We pull the edges $\langle r,i\rangle,\ i\in s_k$ of $m$ down one at a time to create a $P$ node with child nodes $s_k$ matching the leaf-descendants of $i_k$ in $m^*$. This gives a new $m$ matching $m^*$ down to all nodes of depth less than or equal to two. The passage from $m_b$ to the new $m=m_d$ is illustrated bottom right in Fig.~\ref{fig:MDT-block}.
    
    If the root type of $m^*$ is $P$ then the partitions are $\{s_1,...,s_K\}$ are unordered. The same process is repeated for $m=m_a$, pulling down the edges $\langle r,i\rangle,\ i\in s_k$ one at a time to build an $S$-node with leaves $s_k$ matching the leaf-descendants of $i_k$ and their order in $m^*$.
    
    The process can now be repeated, as the problem of changing an MDT $m$ so that it matches $m^*$ to depth three when it already matches $m^*$ to depth two is the problem of changing the MDT's in $m$ rooted by $i_1,...,i_K$ to match the corresponding subtrees of $m^*$ to depth two. This task is the same as the original task and we have shown we can match to depth two. Since we can always increase the depth of the match and the depth is finite, we can change $m_a$ or $m_b$ to match $m^*$.
    
    % We now show $(a) \Leftrightarrow (c)$. For $(a) \Rightarrow (c)$, we pick $e\in E(m)$ and do $m=m\triangleleft_e (e,i)$ for $i\in\F$. This creates additional internal nodes, say $s_1$. Repeating this procedure gives multiple $S$ nodes ($s_1,\dots,s_k$). Performing $m = \triangleleft_e (e,s_l) \forall e\in \{e=\langle e_1,e_2\rangle\in E(m)|e_1 = r, e_2\in\F\} \forall l\in [k]$ arrives to $(c)$. For $(c) \Rightarrow (a)$, we do $m = m\triangleleft_e (e,r) \forall e\in \{e=\langle e_1, e_2\rangle\in E(m)|e_1\in\{s_1,\dots,s_k\}\}$. This gives $(a)$. Similarly, $(b) \Leftrightarrow (d)$ also holds. 

    %Therefore, $(a) \Leftrightarrow (b) \Leftrightarrow (c) \Leftrightarrow (d)$. 
    
    It is straightforward to check that these processes can be reversed and so the MDT proposal Markov chain formed by repeated edge operation defined in Defn. \ref{defn:edge-op-MDT} is irreducible. 
\end{proof}

Our MCMC algorithm for MDT with the QJ-B observation model is given in Algorithm \ref{alg:MCMC-MDT}, omitting the standard $q,p$ and $\phi$ updates. The algorithm for QJ-U model omits the $\phi$-update. 

\begin{algorithm}[H]
    \caption{The MCMC algorithm for the MDT with QJ-B observation model at step $k$.}\label{alg:MCMC-MDT}
    \begin{algorithmic}
        \Require $y,m^{(k-1)}\!\!=\!m, q^{(k-1)}\!\!=\!q, p^{(k-1)}\!\!=\!p,\phi^{(k-1)}\!\!=\!\phi$ with $m\!=\!(F(m),E(m),L(m)),\ m \!\in\! \M_{[n]}$\\
        \Ensure \\
        {\centering\vspace*{-0.1in}
         $
        \begin{aligned}
            &m^{(k)}\sim \pi(m|y,q,p,\phi), \\
            &q^{(k)}\sim\pi(q|y,m^{(k)},p,\phi),\\
            &p^{(k)}\sim\pi(p|y,m^{(k)},q^{(k)},\phi),\\
            &\phi^{(k)}\sim\pi(\phi|y,m^{(k)},q^{(k)},p^{(k)})\\[0.05in]
        \end{aligned}
        $
        \par}

        \hrulefill\emph{Update for $m$}\hrulefill
        \State $m' \gets m^{(k-1)} \gets m$
        \State Sample $e\sim \U(E_{-0}(m))$ and $i \sim \U[(\F \cup \A)\backslash\{e_1,e_2\}]$
        \State $m'\gets m\triangleleft_e (e,i)$
        \If{$m'\in\M_{[n]}$}
            $$\eta_1 \gets \frac{Q(y|v(m'),p,\phi)\pi_{\M_{[n]}}(m'|q)\rho(m|m')}{Q(y|v(m),p,\phi)\pi_{\M_{[n]}}(m|q) \rho(m'|m)}$$
            \If {$\U(0,1)\le \eta_1$}
            \State $m\gets m^{(k)} \gets m'$
            \EndIf
        \EndIf
        
        \hrulefill\emph{Updates for $q,p$ and $\phi$ omitted}\hrulefill
    \end{algorithmic}
\end{algorithm}

The queue-jumping probability $p>0$ (almost surely) so the Hastings ratio $\eta>0$ in Algorithm~\ref{alg:MCMC-MDT} is not zero for all $m,m'\in \M_{[n]}$ connected by an update. Since the proposal chain $M_k,\ k\ge 0$ in Proposition~\ref{prop:MDT-irreducible} is irreducible, it follows that our MDT-MCMC is irreducible.

\pagebreak 

\section{Data background and Additional Results}\label{sec:data-and-results-app}

\subsection{The `Royal Acta' Data}

The ``Royal Acta'' data is a database made for ``The Charters of William II and Henry I'' project by the late Professor Richard Sharpe and Dr Nicholas Karn \citep{sharpe14}. It collects dated witness lists from legal documents in England and Wales in the eleventh and twelfth century. Each witness list is dated though the dating is sometimes uncertain (a few years is typical). Lower and upper bounds on the date of a list are part of the data. Each individual is associated with a profession (title) such as Queen, Archbishop, etc. We assign witnesses with no title as ``other''. \Fig~\ref{fig:ex-list} gives an example of such witness list. The data records different number of lists with various lengths over time - summarised in Figure~\ref{fig:datasum}. 

\begin{figure}[!htb]
  \centering
  \includegraphics[width=.4\linewidth]{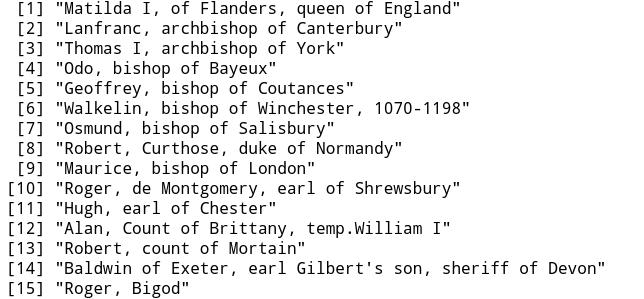}
  \caption{An example witness list from 1080, extracted from the ``Royal Acta'' data. The witnesses names are entered by a clerk in order from top to bottom.}\label{fig:ex-list}
\end{figure}

\begin{figure}[!htb]
  \centering
  \includegraphics[width=0.8\linewidth]{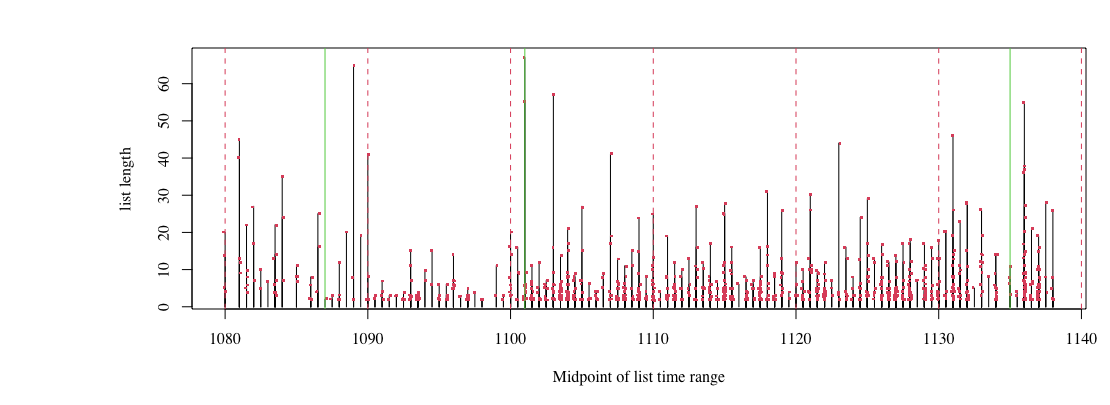}
  \caption{The midpoint of list time range v.s. list range. Each red dot is a list of length $y$ created in a time range midpointed by $x$. The bars represents the length of the longest list at time $x$. }\label{fig:datasum}
\end{figure}

In Section 5, %\ref{sec:results}, 
we limit the number of lists per actor (LPA) participate in to be at least 5 for ease of presentation. However, it is possible to fit our model on much larger datasets. We chose time periods with a large number of lists with relatively long lengths - 1080-1084 and 1136-1138, and extract the lists with 1LPA. Table \ref{tab:data} summarises the data in the different experiments. In Section~\ref{sec:1lpa}, we carry out Bayesian inference on the 1LPA datasets. In Sections~\ref{sec:5lpa-u} and \ref{sec:5apl-b}, we present MCMC traceplots and effective sample sizes for MCMC samples of key parameters in the analysis on 5LPA data, from the VSP/QJ-U and VSP/QJ-B models respectively. 

\begin{table}[!htb]
    \centering
    \caption{Data content for time periods of interest including the number of actors ($n$), number of lists ($N$) and the length of their longest list ($\max(y)$). Data analysed with both VSP/QJ-U and VSP/QJ-B are marked in blue. The 1134-1138 bishop-only data is 34-38(b). }\label{tab:data}
    \resizebox{.5\columnwidth}{!}{%
    \begin{tabular}{c|cccc|cc}
      \toprule % from booktabs package
      {} & \multicolumn{4}{c}{\bfseries 5LPA} & \multicolumn{2}{c}{\bfseries 1LPA}\\
      \midrule
      \bfseries  & \textcolor{blue}{80-84} & \textcolor{blue}{26-30} & 34-38 & \textcolor{blue}{34-38(b)}  & 80-84 & 34-38\\
      \midrule % from booktabs package
      $n$ & \textcolor{blue}{17} & \textcolor{blue}{13} & 49 & \textcolor{blue}{14} & 181 & 216 \\
      $N$ & \textcolor{blue}{20} & \textcolor{blue}{30} & 82 & \textcolor{blue}{37} & 27 & 95\\
      $\max(y)$ & \textcolor{blue}{17} & \textcolor{blue}{8} & 35 & \textcolor{blue}{14} & 45 & 55\\
      \bottomrule % from booktabs package
    \end{tabular}
    }
\end{table}

\pagebreak

\subsubsection{Inference Results on List Data with 1LPA (QJ-U Observation Model)}\label{sec:1lpa}

Using the full-data lists (allowing $LPA=1$), we arrive at much larger datasets with 181 actors (1080-1084) and 216 actors (1134-1138) respectively, as is summarised in table \ref{tab:data}. Though QJ-B observation model has higher flexibility, it is rather computationally demanding when we move to large datasets. In this section, we fit the VSP/QJ-U model on both data lists instead. 

We perform 50,000 MCMC iterations on 1080-1084 (1LPA) data and 48,000 iterations on 1134-1138 (1LPA) data. For details of the MCMC algorithm, see Algorithm ~\ref{alg:MCMC-BDT}. Every 10 steps is recorded from the MCMC. The effective sample sizes and traceplots for the key parameters $p$ and $P(S)=q$ from the MCMC samples are shown in Table~\ref{tab:ess-1} and Figure~\ref{fig:trace-1}. The MCMC on the 1080-1084 (1LPA) data displays fair mixing, however, the MCMC for 1134-1138 (1LPA) is yet to be fully mixed. We are aware the effective sample sizes are relatively small, here we only present the current results as a demonstration. 

\begin{table}[!htb]
    \centering
    \caption{The effective sample sizes for $P(S)$ and error probability $p$ on four datasets with 1LPA.}\label{tab:ess-1}
    \begin{tabular}{*3c}
      \toprule % from booktabs package
      {} & \multicolumn{2}{c}{\bfseries ESS}\\
      \midrule
      \bfseries Parameter & 1080-1084 & 1134-1138 \\
      \midrule % from booktabs package
      $P(S)$ & 41 & 25\\
      $p$ & 32 & 47 \\
      \bottomrule % from booktabs package
    \end{tabular}
\end{table}

\begin{figure}[H]
\centering
\begin{subfigure}{.5\textwidth}
  \centering
  \includegraphics[width=\linewidth]{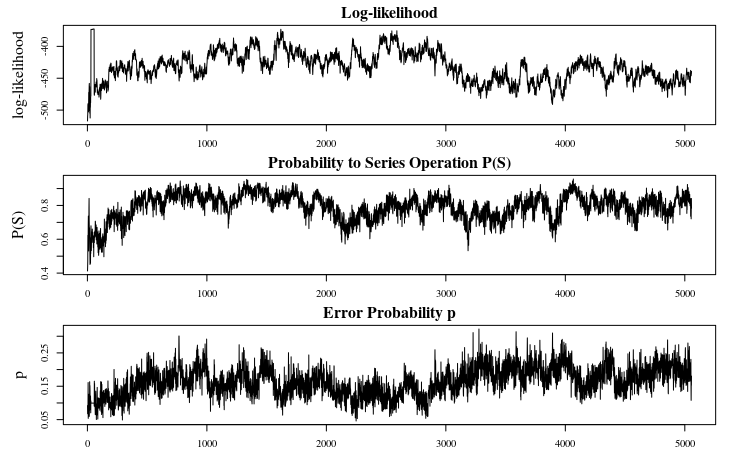}
  \caption{1080-1084 with 1 LPA}
\end{subfigure}%
\begin{subfigure}{.5\textwidth}
  \centering
  \includegraphics[width=\linewidth]{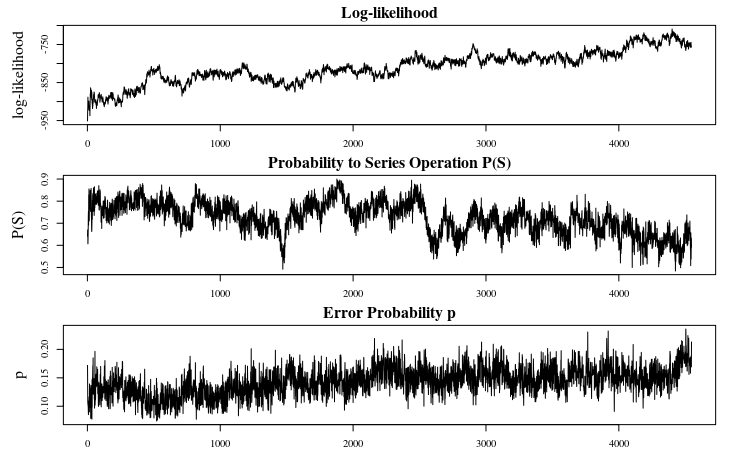}
  \caption{1134-1138 with 1 LPA}
\end{subfigure}
\caption{Traceplots for log-likelihood, $P(S)$ and error probability $p$ for the two data sets of interest here - 1080-1084 (a) and 1134-1138 (b) with 1 LPA data. }
\label{fig:trace-1}
\end{figure}

We present the consensus orders $V^{con}(\epsilon)$ in Figure~\ref{fig:con80841} for 1080-1084 (1LPA) and Figure~\ref{fig:con34381} for 1134-1138 (1LPA). We choose a threshold of $\epsilon=0.6$ in order to represent readable consensus orders graphically. Considering the large number of actors in both time periods, we also extract the non-'other' actors and reconstruct the consensus orders in Figure~\ref{fig:con80841_no} for 1080-1084 (1LPA) and Figure~\ref{fig:con34381_no} for 1134-1138 (1LPA). 

A clear order relation for king $\succ$ queen $\succ$ archbishop $\succ$ bishop is observed in both time periods. The actors roughly appear in the ``group'' of their professions.

\begin{figure}[h]
\centering
\includegraphics[width=.95\linewidth]{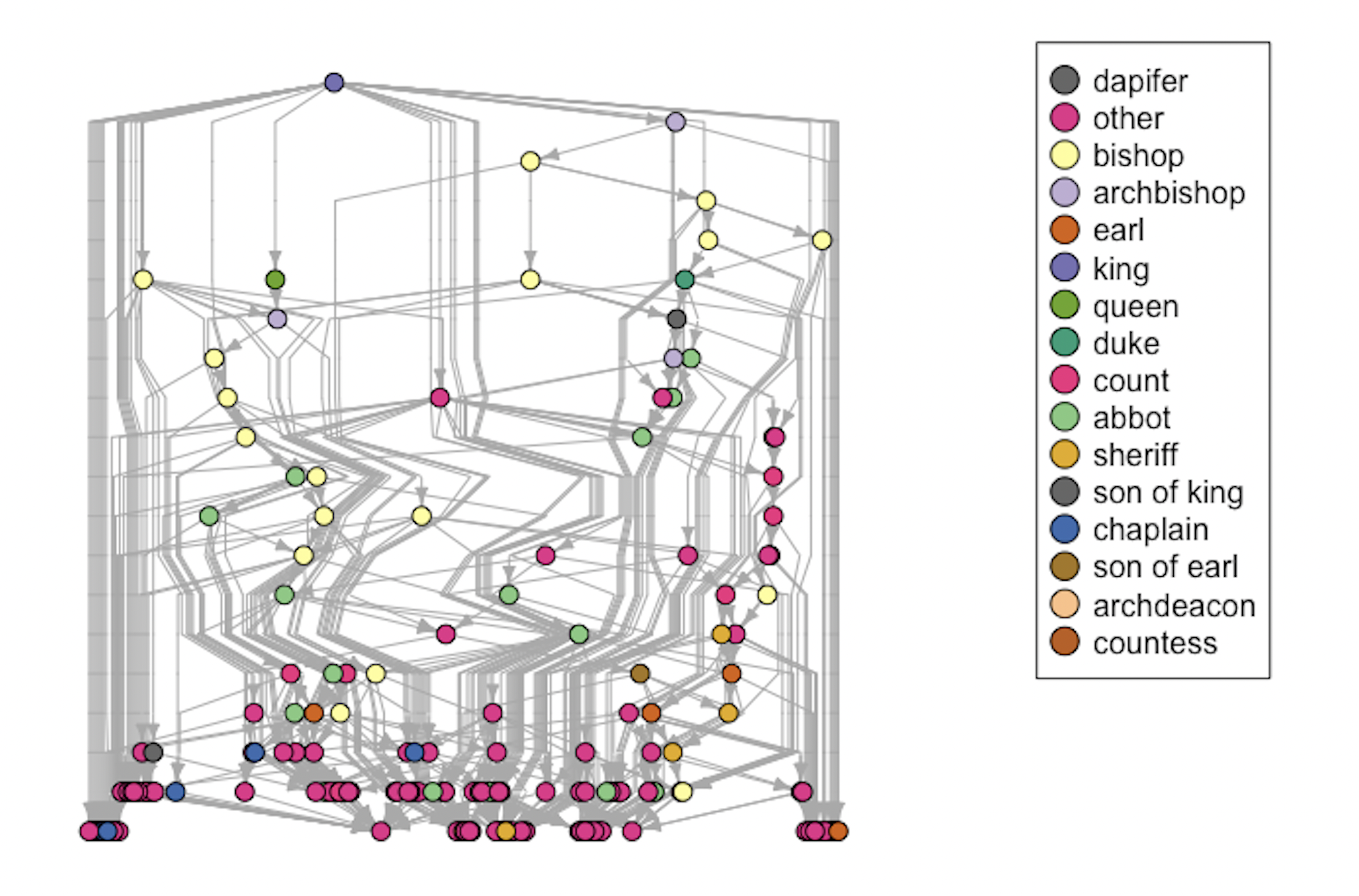}
\caption{The consensus order for 1080-1084 (1LPA) data in a VSP/QJ-U analysis. }
\label{fig:con80841}
\end{figure}

\begin{figure}[h]
  \centering
  \includegraphics[width=.95\linewidth]{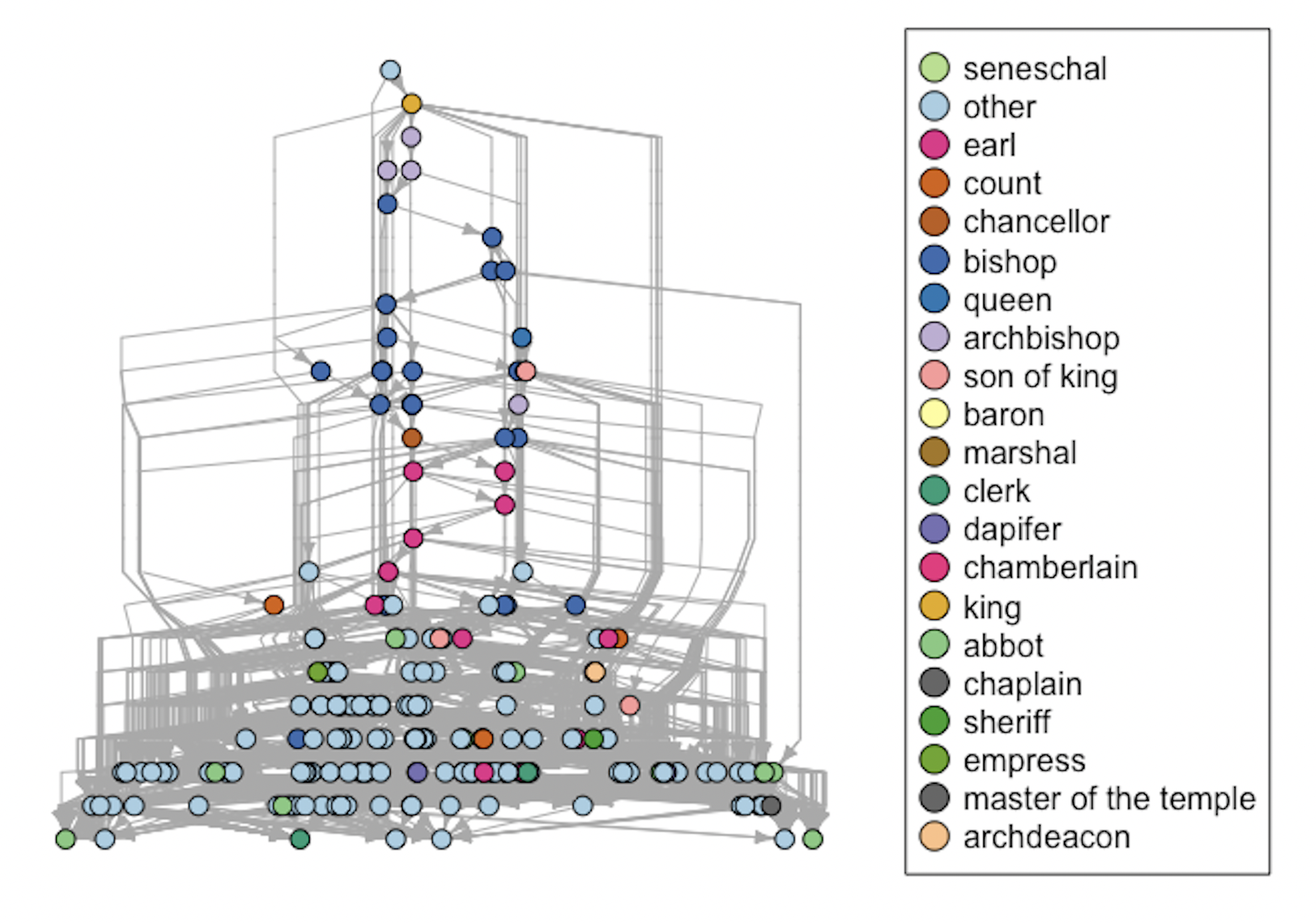}
  \caption{The consensus order for 1134-1138 (1LPA) data in a VSP/QJ-U analysis. }\label{fig:con34381}
\end{figure}

\begin{figure}[H]
  \centering
  \includegraphics[width=.95\linewidth]{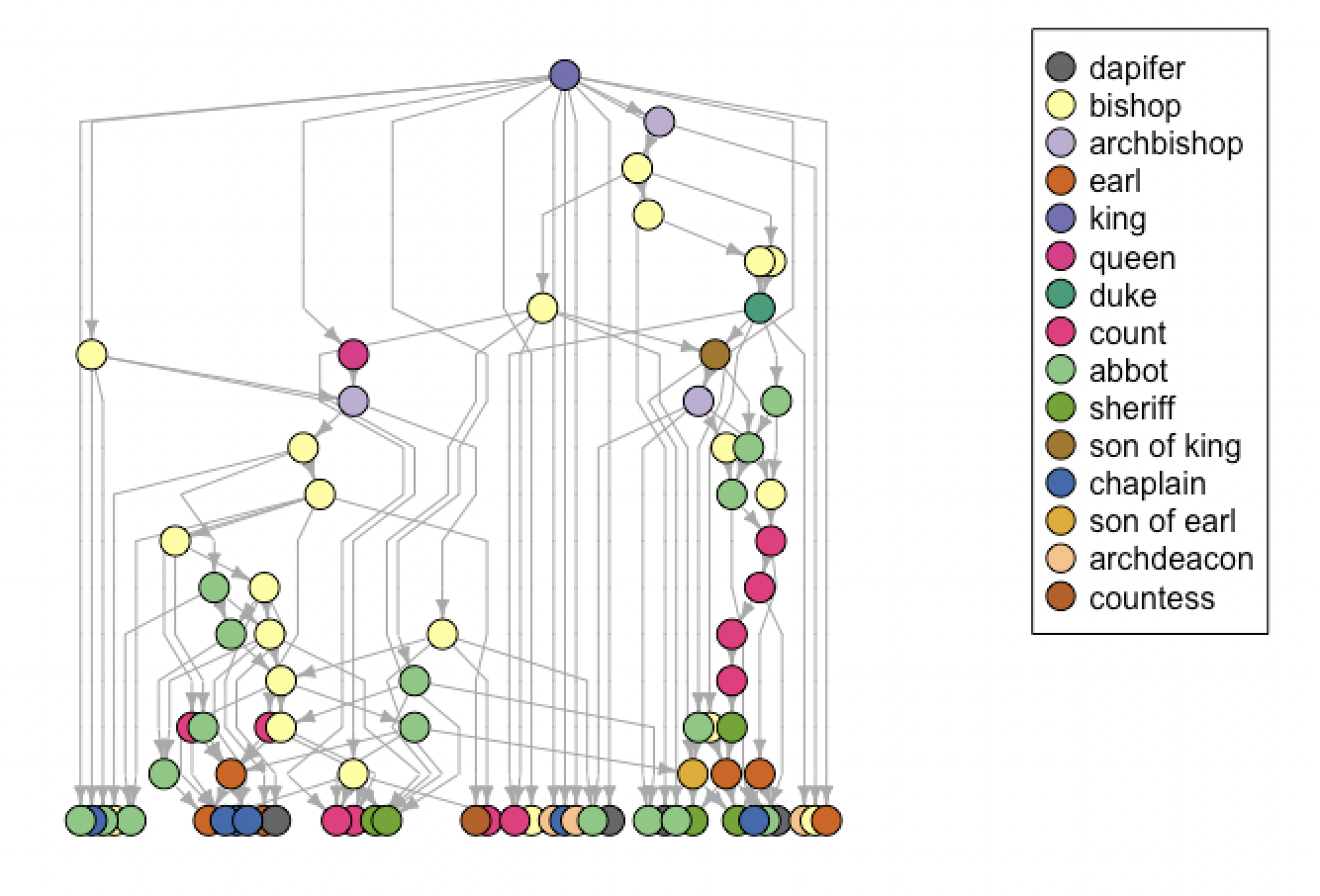}
\caption{The consensus order for 1080-1084 (1LPA) data without `other' actors in a VSP/QJ-U analysis.}
\label{fig:con80841_no}
\end{figure}

\begin{figure}[H]
  \centering
  \includegraphics[width=.95\linewidth]{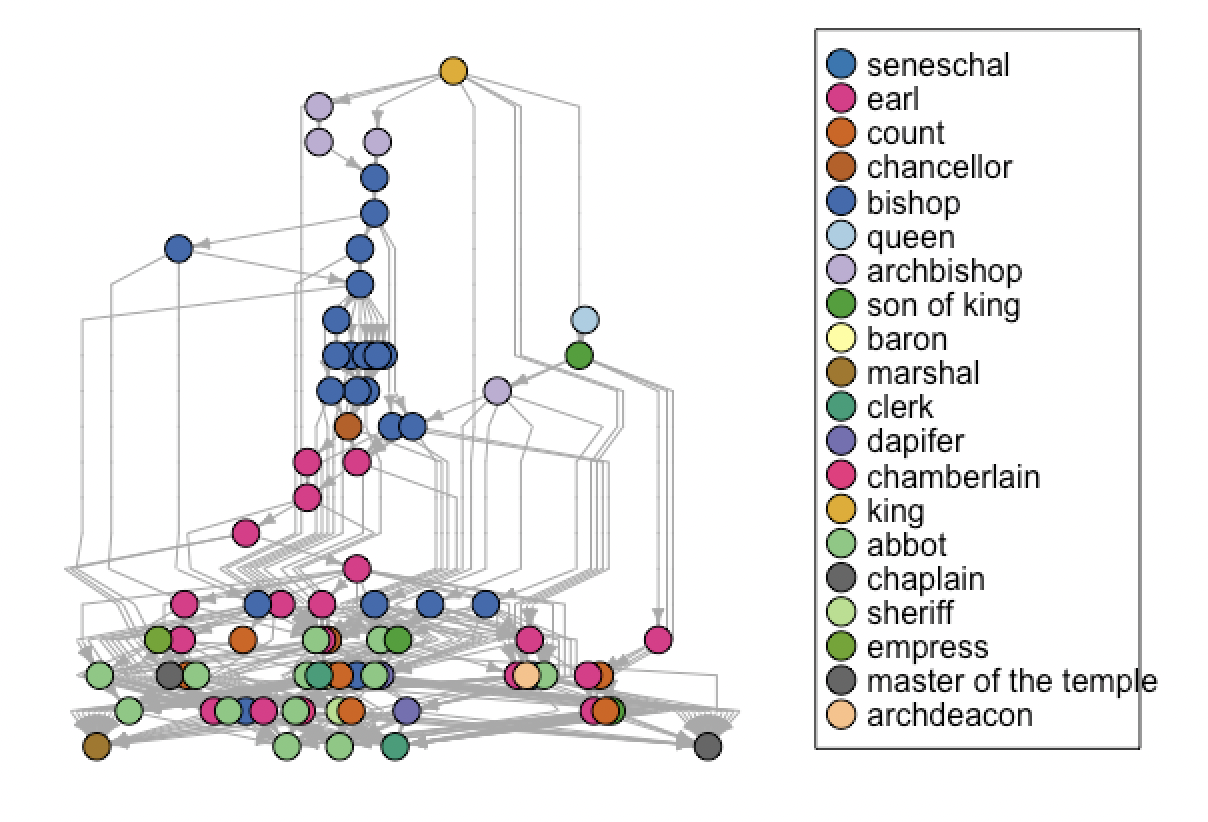}
\caption{The consensus order for 1134-1138 (1LPA) data without `other' actors in a VSP/QJ-U analysis.}
\label{fig:con34381_no}
\end{figure}

Table \ref{tab:exp-rank-1} presents the average rankings of different professions for 1080-1084 (1LPA) and 1134-1138 (1LPA). The average rankings support our observations above. Interestingly, abbots tend to be ranked higher during 1080-1084 than 1134-1138, and the archdeacon is ranked higher in 1134-1138 than 1080-1084.

\begin{table}[!htb]
    \centering
    \small
    \caption{The professions and their average rankings %($\text{normalised average ranking} = \text{average ranking}/n$) 
    for 1080-1084 (1LPA) and 1134-1138 (1LPA). NA means the profession of interest does not appear in this time period.}\label{tab:exp-rank-1}
    \begin{tabular}{*3c}
      \toprule % from booktabs package
      {} & \multicolumn{2}{c}{\bfseries Average Rank}\\
      \midrule
      \bfseries Profession & 1080-1084 & 1134-1138\\
      \midrule % from booktabs package
      King & 1.21 (0.007) & 3.73 (0.02) \\
      Queen & 4.81 (0.03) & 4.97 (0.02)\\
      Archbishop & 9.70 (0.05) & 8.89 (0.04) \\
      Empress & NA & 16.0 (0.07) \\
      Duke & 15.4 (0.08) & NA  \\
      Bishop & 18.7 (0.10) & 20.8 (0.10) \\
      Son of King & 18.8 (0.10) & 24.0 (0.11)  \\
      Seneschal & NA & 28.0 (0.13)\\
      Abbot & 32.8 (0.18) & 88.0 (0.41) \\
      Countess & 39.0 (0.22) & NA \\
      Count & 43.1 (0.24) & 33 (0.15) \\
      Son of Earl & 43.5 (0.24) & NA \\
      Earl & 44.3 (0.24) & 44.3 (0.20)  \\
      Dapifer & 44.5 (0.25) & 81.3 (0.38) \\
      Archdeacon & 48.7 (0.27) & 35.3 (0.16) \\
      Chancellor & NA & 43.6 (0.20)  \\
      Other & 50.1 (0.28) & 79.2 (0.37) \\
      Chaplain & 50.3 (0.28) & 44.7 (0.21) \\
      Baron & NA & 78.4 (0.36)\\
      Sheriff & 60.5 (0.33) & 95.7 (0.44) \\
      Chamberlain & NA & 101 (0.47) \\
      Clerk & NA & 114 (0.53)\\
      Master of the temple & NA & 137 (0.63) \\
      Marshal & NA & 150 (0.70)\\
      \bottomrule % from booktabs package
    \end{tabular}
\end{table}

Posterior distributions for the key parameters in Figure \ref{fig:pos-1lpa} show that witness lists in 1080-1084 tend to respect a stronger social hierarchy than in 1134-1138 with larger $P(S)$. The error probabilities $p$ are relatively smaller for witness lists in 1134-1138. This agrees with the results for 5LPA presented in \Fig~5, %\ref{fig:pq}, 
Section~5.2. %\ref{sec:QJ-up}. 
The prior and posterior VSP depth distributions are shown in \Fig~\ref{fig:depth-1lpa}. Despite the roughly uniform prior distribution over the VSP depth, the posterior depths appear to concentrate around 75 for 1080-1084 and 90 for 1134-1138.

\begin{figure}[!htb]
  \centering
  \includegraphics[width=0.66\linewidth]{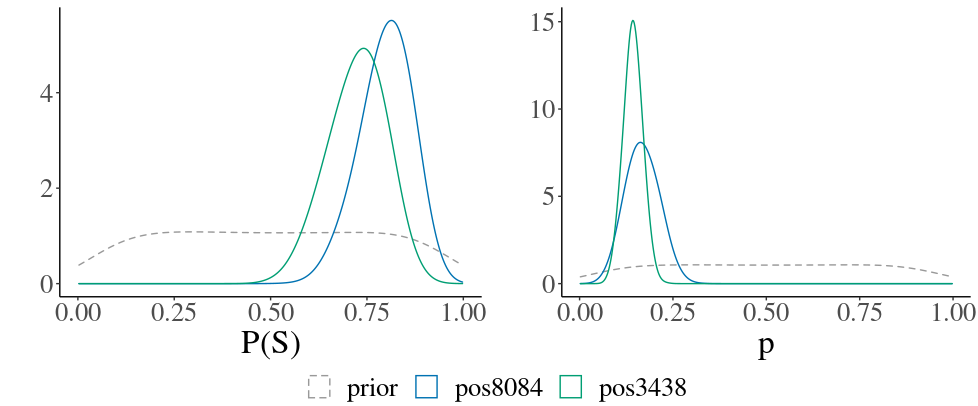}
  \caption{Prior (grey line) and posterior distributions for
$q=P(S)$ (left) and error probability $p$ (right) for the time periods 1080-1084 (1LPA) (blue) and 1134-1138 (1LPA) (green) in a VSP/QJ-U analysis.}\label{fig:pos-1lpa}
\end{figure}

\begin{figure}[!htb]
  \centering
  \includegraphics[width=0.66\linewidth]{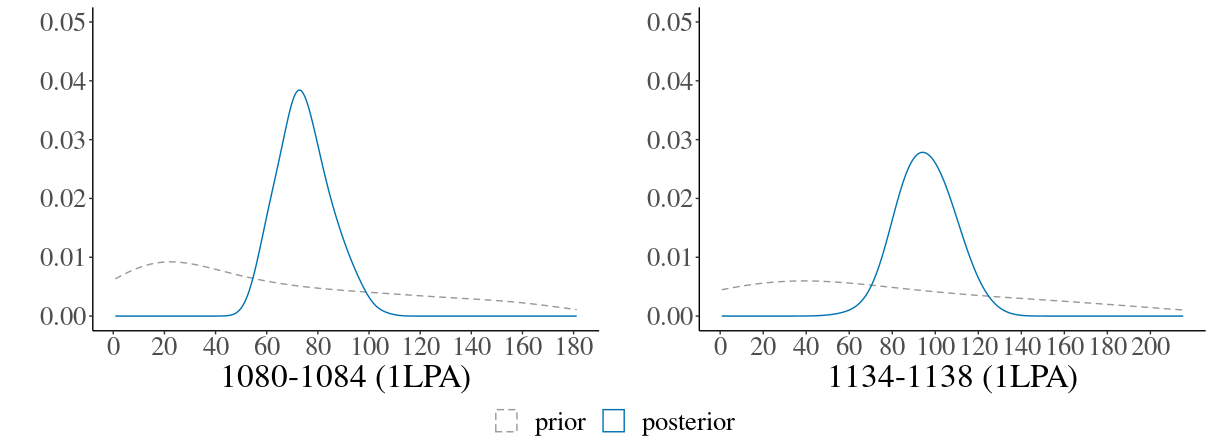}
  \caption{The prior (grey) and posterior (blue) VSP depth distribution for 1080-1084 (1LPA) (left) and 1134-1138 (1LPA) (right) in a VSP/QJ-U analysis.}\label{fig:depth-1lpa}
\end{figure}

\pagebreak 

\subsubsection{Inference Results on List Data with 5LPA (QJ-U Observation Model)}\label{sec:5lpa-u}

\Fig~6 %\ref{fig:34-38fullcon-order} 
and \Fig~7 %\ref{fig:8084con-order} 
(top-row) show the consensus orders $V^{con}$ for 1134-1138 (5LPA), 1080-1084 (5LPA), 1126-1130 (5LPA) and 1134-1138 (bishop) (5LPA) under the VSP/QJ-U model. The MCMC converge well. Here we estimate and report effective sample sizes (ESS, Table~\ref{tab:ess}) and inspect MCMC traces (\Fig~\ref{fig:test}). Both the high ESSs and the traceplots indicate good convergence to the posterior distribution. 

\begin{table}[!htb]
    \centering
    \caption{The effective sample sizes for $P(S)$ and error probability $p$ on the four datasets with 5LPA and QJ-U.}\label{tab:ess}
    \begin{tabular}{*5c}
      \toprule % from booktabs package
      {} & \multicolumn{4}{c}{\bfseries ESS}\\
      \midrule
      \bfseries Parameter & 1080-1084 & 1126-1130 & 1134-1138 & 1134-1138(b) \\
      \midrule % from booktabs package
      $P(S)$ & 1676 & 1477 & 95 & 648\\
      $p$ & 1297 & 1426 & 262 & 586\\
      \bottomrule % from booktabs package
    \end{tabular}
\end{table}

\begin{figure}[!htb]
\centering
\begin{subfigure}{.35\textwidth}
  \centering
  \includegraphics[width=\linewidth]{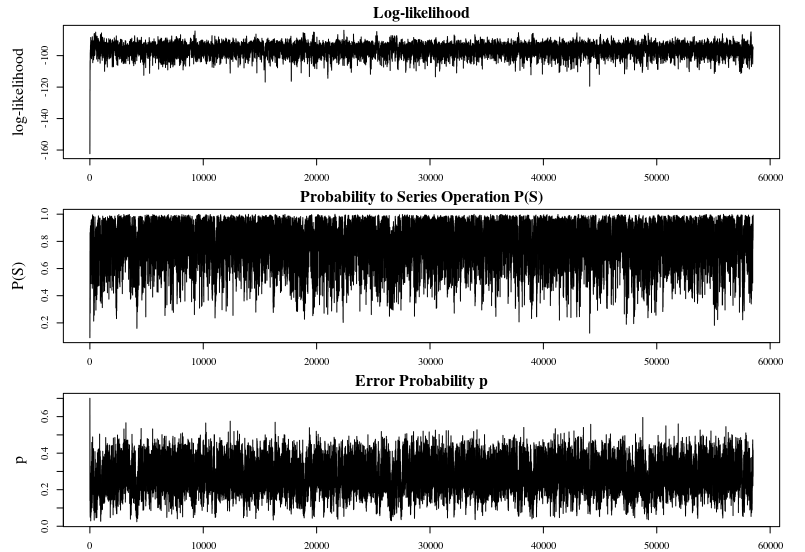}
  \caption{1080-1084 with 5 LPA}
\end{subfigure}%
\begin{subfigure}{.35\textwidth}
  \centering
  \includegraphics[width=\linewidth]{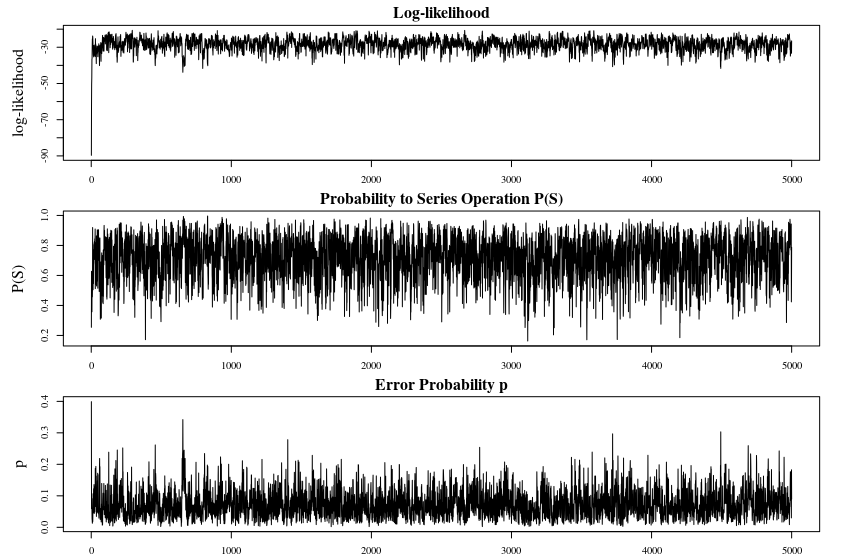}
  \caption{1126-1130 with 5 LPA}
\end{subfigure}
\begin{subfigure}{.35\textwidth}
  \centering
  \includegraphics[width=\linewidth]{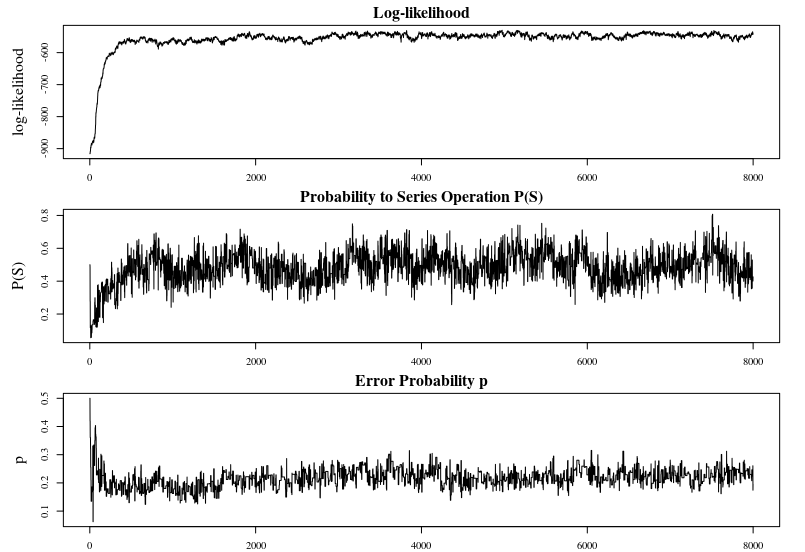}
  \caption{1134-1138 with 5 LPA}
\end{subfigure}
\begin{subfigure}{.35\textwidth}
  \centering
  \includegraphics[width=\linewidth]{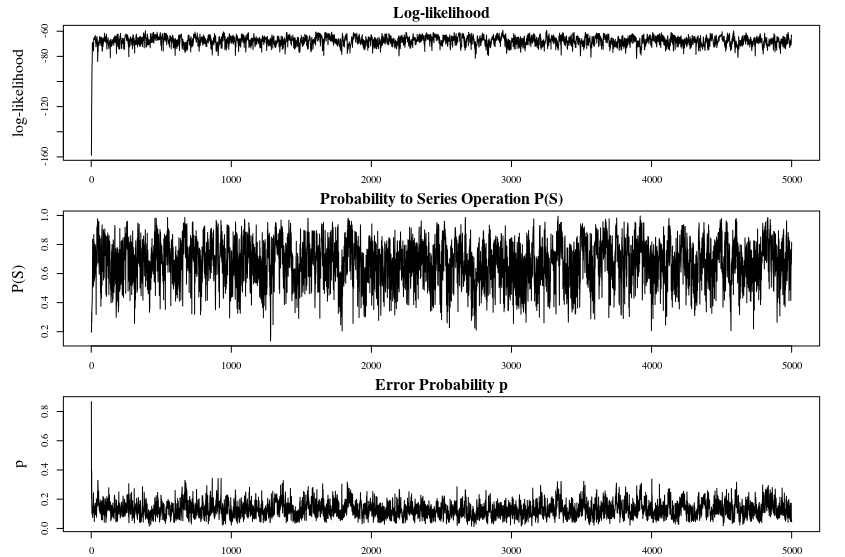}
  \caption{1134-1138(b) with 5 LPA}
\end{subfigure}
\caption{Traceplots for log-likelihood, $P(S)$ and error probability $p$ for the four list data of interest - 1080-1084 (a) and 1126-1130 (b), 1134-1138 (c) and 1134-1138 (bishops) (d) with 5 LPA data and a VSP/QJ-U analysis. }
\label{fig:test}
\end{figure}

The posterior distributions for both $p$ and $q=P(S)$ are shown in \Fig~8. %\ref{fig:pq}. 
We also present the posterior depth-distributions for the datasets in Figure \ref{fig:depth2}. It appears that 1080-1084 (5LPA) admits the most rigid social hierarchy, while 1134-1138 (5LPA) has less hierarchy with respect to $n$. The average rankings per profession are reported in Table \ref{tab:exp-rank-34385}. Similar to the consensus orders (\Fig~6 %\ref{fig:34-38fullcon-order} 
and \Fig~7), %\ref{fig:8084con-order}), 
king $\succ$ queen $\succ$ archbishop $\succ$ bishop. The three time periods show similar hierarchical structure, although the power gap between count and earl is relatively narrower in 1126-1130.

\begin{figure}[!tbp]
  \centering
  \includegraphics[width=0.5\linewidth]{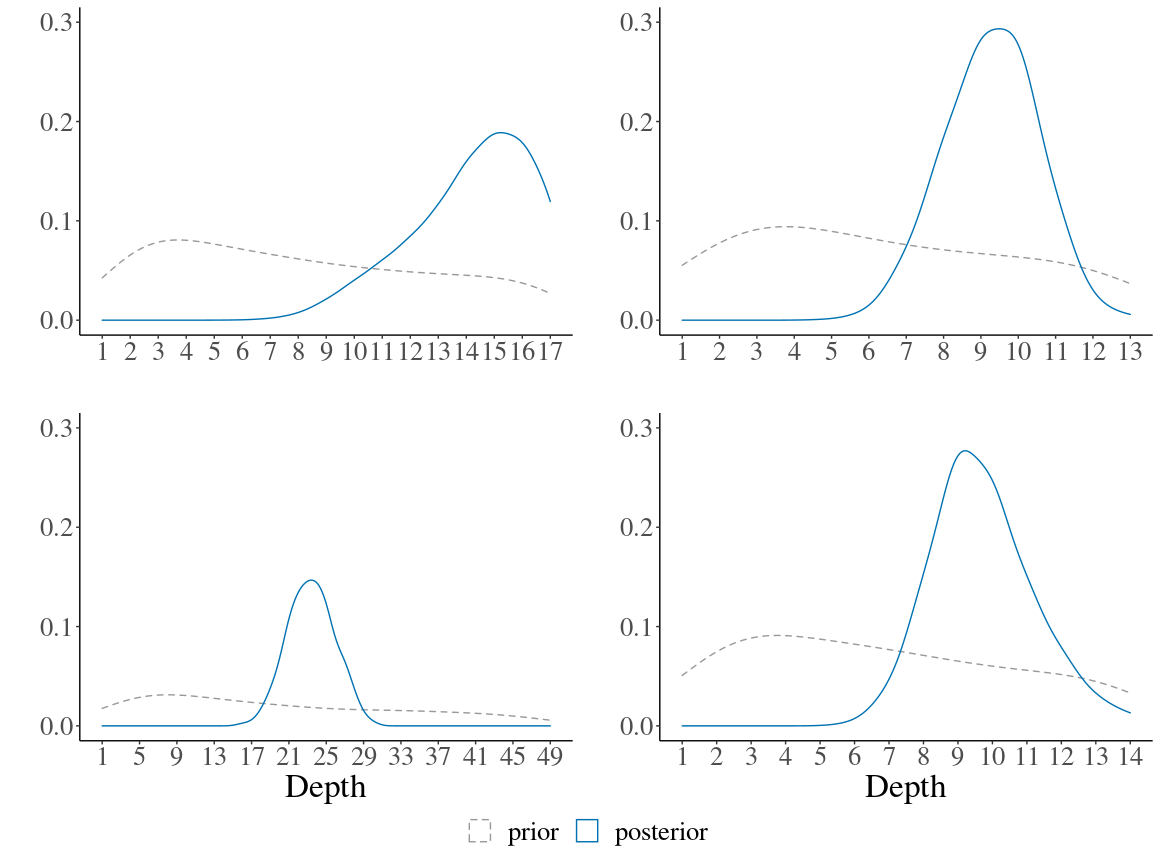}
  \caption{The prior (grey) and posterior (blue) VSP depth distribution for 1180-1184 (top-left), 1126-1130 (top-right), 1134-1138 (bottom-left) and 1134-1138(b) (bottom-right) with 5LPA and QJ-U. }\label{fig:depth2}
\end{figure}

\begin{table}[!htb]
    \centering
    \small
    \caption{The professions and their average rankings %($\text{normalised average ranking} = \text{average ranking}/n$) 
    for all three time periods with 5LPA and QJ-U. NA means the profession of interest does not appear in this time period.}\label{tab:exp-rank-34385}
    \begin{tabular}{*4c}
      \toprule % from booktabs package
      {} & \multicolumn{3}{c}{\bfseries Average Rank}\\
      \midrule
      \bfseries Profession & 1080-1084 & 1126-1130 & 1134-1138\\
      \midrule % from booktabs package
      King & 1.02 (0.06) & NA & 1.01 (0.02)\\
      Queen & 2.15 (0.13) & NA & 2.01 (0.04)\\
      Duke & 2.79 (0.16) & NA & NA \\
      Son of King & 4.63 (0.27) & NA  & 3.11 (0.06)\\
      Archbishop & 4.45 (0.26) & 1 (0.08) & 4.55 (0.09)\\
      Bishop & 8.25 (0.49) & 4.02 (0.31) & 11.10 (0.23)\\
      Chancellor & NA & NA & 21.40 (0.44) \\
      Count & 10.90 (0.64) & 5.92 (0.45) & 24.00 (0.49)\\
      Earl & 12.20 (0.72) & 5.98 (0.46) & 28.10 (0.57)\\
      Other & 15.30 (0.90) & 8.80 (0.68) & 33.10 (0.68)\\
      \bottomrule % from booktabs package
    \end{tabular} 
\end{table}

\pagebreak

As discussed, we perform reconstruction accuracy tests on each dataset to assess the reliability of our estimations. This is done by taking representative parameters (the last sample state of the parameters sampled from the corresponding posterior), and generating synthetic data with the same list-memberships and lengths as the real data. We carry out or standard analysis on these synthetic datasets, fitting the same model used to simulate the data, and construct the corresponding consensus orders $V^{con} (\epsilon)$ with $\epsilon \in [0,1]$. The results are summarised using receiver operator characteristic (ROC) curves. The ROC curve shows the relation between the proportion of inferred false-positive order relations (x-axis) and true-positive relations (y-axis) for different $\epsilon$. The existence of a $\epsilon$ that gives high true-positive and low false-positive reconstructed fraction means reconstruction accuracy is high. 

\Fig~\ref{fig:roc2} shows ROC curves for such a reconstruction test on the 1080-1084 (5LPA), 1126-1130 (5LPA) and 1134-1138 (5LPA) data in a VSP/QJ-U model. The proportion of inferred false-positive (x-axis) and true-positive (y-axis) relations increases with decreasing $\epsilon$ from (0, 0) at $\epsilon = 1$ (the consensus order is empty) to (1, 1) at $\epsilon = 0$ (complete graph). For all time periods, we observe $\epsilon$ that gives high true-positive and low false-positive reconstructed fraction, indicating our model's high reliability to reconstruct relations.

\begin{figure}[!htb]
  \centering
  \includegraphics[width=.4\linewidth]{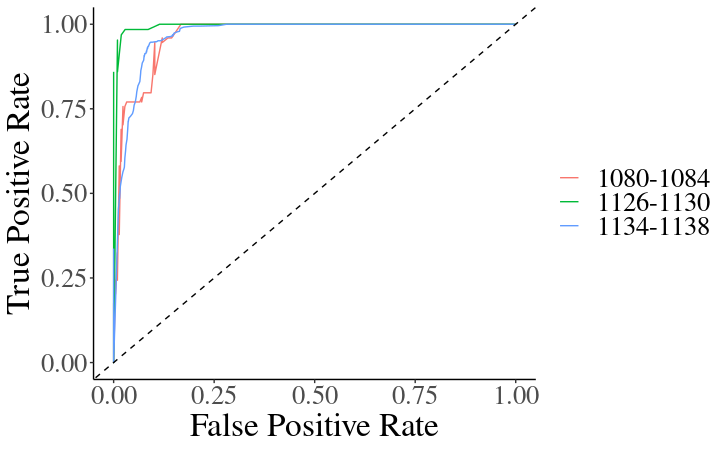}
  \caption{Receiver operating characteristic (ROC) curves for synthetic data using 1080-1084, 26-30 and 34-38 list membership structures with 5LPA and QJ-U.}\label{fig:roc2}
\end{figure}

\subsubsection{Inference Results on List Data with 5LPA (QJ-B Observation Model)}\label{sec:5apl-b}

In this section, we fit the VSP/QJ-B data on the datasets 1080-1084 (5LPA), 1126-1130 (5LPA) and 1134-1138 (bishop) (5LPA). See algorithm \ref{alg:MCMC-BDT} for the MCMC details. Traceplots for the log-likelihood, $P(S)$, error probability $p$ and bi-directional top/bottom insertion probability $\phi$ are all presented in Figure~\ref{fig:trace}. They all display reasonable convergence. In table~\ref{tab:ess-b} we estimate effective sample sizes (ESS) for key parameters. Mixing for the key parameters are fair during time period 1080-1084 and 1134-1138 (bishop), and the agreement (to some extent) to the analyses in Section~\ref{sec:5lpa-u} supports our conclusion that the samples are representative. 

\begin{table}[!htp]
    \centering
    \caption{The effective sample sizes for $P(S)$ and error probability $p$ on the three datasets with 5LPA fitting VSP/QJ-B.}\label{tab:ess-b}
    \begin{tabular}{*4c}
      \toprule % from booktabs package
      {} & \multicolumn{3}{c}{\bfseries ESS}\\
      \midrule
      \bfseries Parameter & 1080-1084 & 1126-1130 & 1134-1138(b) \\
      \midrule % from booktabs package
      $P(S)$ & 47 & 1875 & 121\\
      $p$ & 61 & 3401 & 197 \\
      $\phi$ & 69 & 3428 & 728\\
      \bottomrule % from booktabs package
    \end{tabular}
\end{table}

\begin{figure}[!htb]
\centering
\begin{subfigure}{.33\textwidth}
  \centering
  \includegraphics[width=\linewidth]{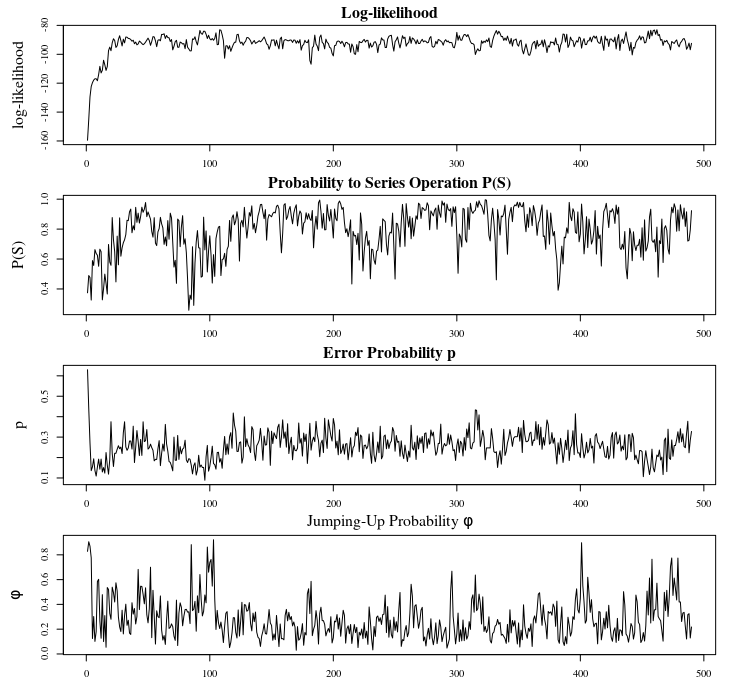}
  \caption{1080-1084 with 5 LPA and QJ-B}
\end{subfigure}%
\begin{subfigure}{.33\textwidth}
  \centering
  \includegraphics[width=\linewidth]{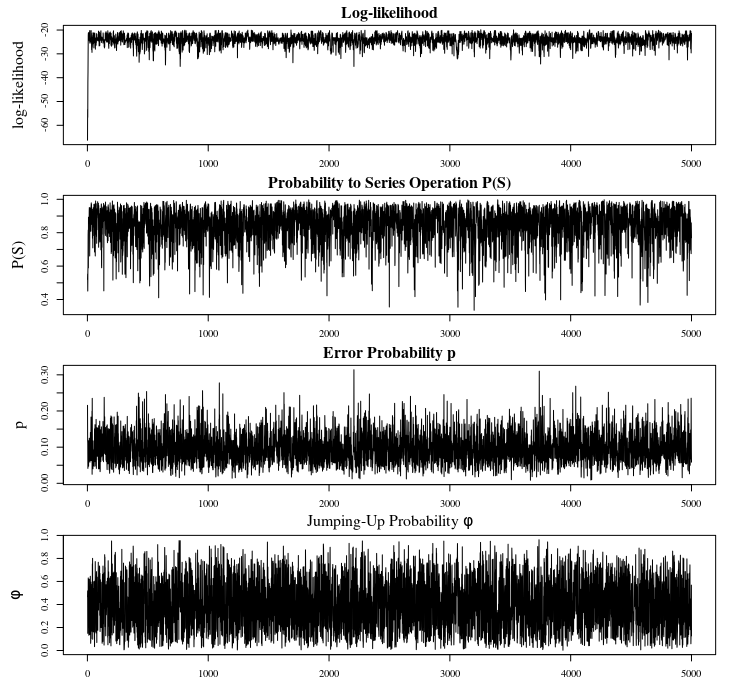}
  \caption{1126-1130 with 5 LPA}
\end{subfigure}
\begin{subfigure}{.33\textwidth}
  \centering
  \includegraphics[width=\linewidth]{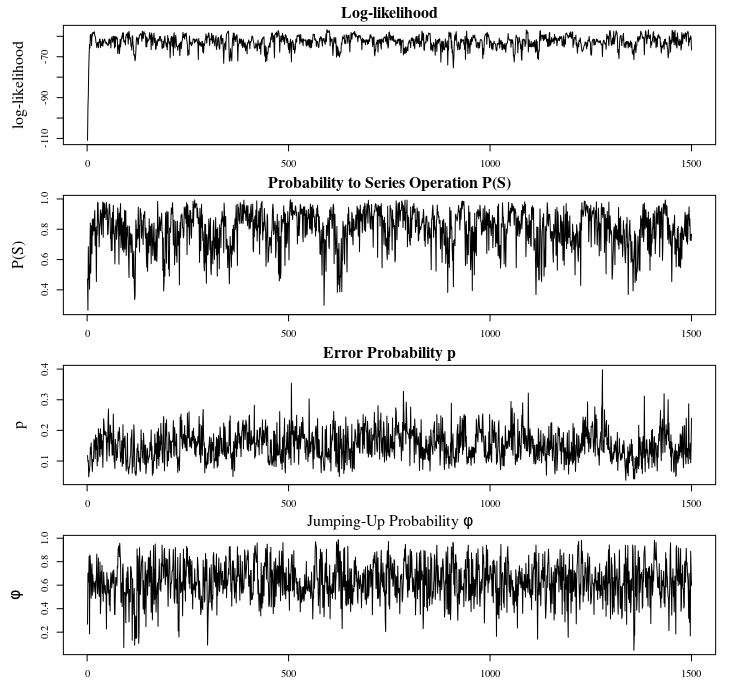}
  \caption{1134-1138(b) with 5 LPA}
\end{subfigure}
\caption{Traceplots for the log-likelihood, $P(S)$ and error probability $p$ for the three list data sets of interest - 1080-1084 (a) and 1126-1130 (b) and 1134-1138 (bishops) (c) with 5LPA data and a VSP/QJ-B analysis.}
\label{fig:trace}
\end{figure}

Consensus orders $V^{con}(\epsilon)$ with $\epsilon=0.5$ are shown in \Fig~7 %\ref{fig:8084con-order} 
(bottom-row). We report the average rankings per profession for 1080-1084 (5LPA) and 1126-1130 (5LPA) in Table~\ref{tab:exp-rank-b}. The posterior distributions for the key parameters $p$, $q=P(S)$ and $\phi$ are shown in \Fig~8. %\ref{fig:pq}. 
Here we display the posterior depth distribution for the three time periods in \Fig~\ref{fig:depth-b}. All periods favour higher VSP depths. By comparing the consensus orders, the bi-directional queue-jumping model seems to fit a more rigid social hierarchy than the queue-jumping-up model, especially during periods 1126-1130 and 1134-1138. This is also illustrated by higher posterior means on $q=P(S)$ for both the 1126-1130 (5LPA) and 1134-1138 (bishop) (5LPA) data. It is surprising that earl $\succ$ count in 1126-1130 under the QJ-B model, although the opposite is observed under QJ-U. Both QJ-U and QJ-B models conclude similar posterior distribution on $p$, the error probability in the data-lists. By inspecting the posterior distributions on $\phi$, it appears that QJ-D is slightly preferred for 1080-1084 (5LPA) while QJ-U/QJ-B is preferred for 1134-1138 (bishop) (5LPA). This is justified by the Bayes Factors in section 5. %\ref{sec:results}. 

\begin{figure}[!htb]
  \centering
  \includegraphics[width=.5\linewidth]{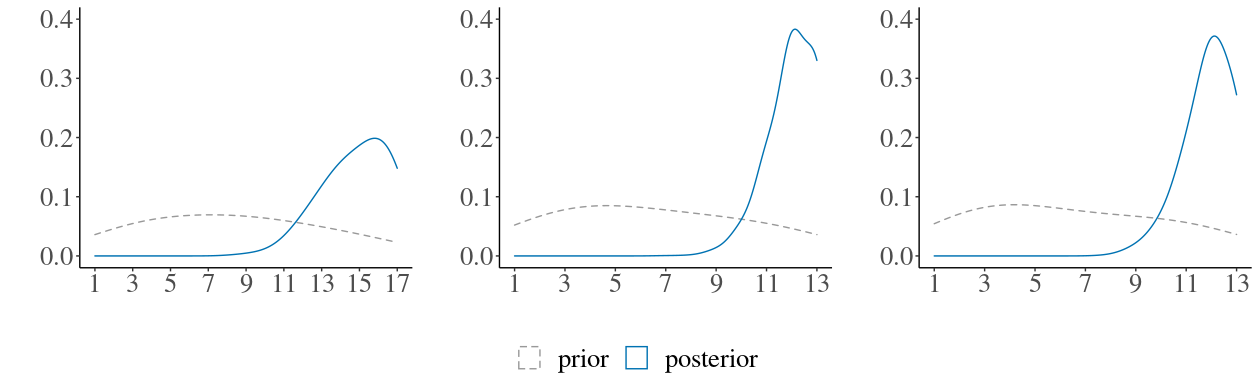}
  \caption{The prior (grey) and posterior (blue) VSP depth distribution for 1180-1184 (left), 1126-1130 (middle) and 1134-1138(b) (right) with 5LPA data in a VSP/QJ-B analysis. }\label{fig:depth-b}
\end{figure}

\begin{table}[!htb]
    \centering
    \small
    \caption{The professions and their average rankings %($\text{normalised average ranking} = \text{average ranking}/n$) 
    for all three time periods with 5LPA data and QJ-B. NA means the profession of interest does not appear in this time period.}\label{tab:exp-rank-b}
    \begin{tabular}{*3c}
      \toprule % from booktabs package
      {} & \multicolumn{2}{c}{\bfseries Average Rank}\\
      \midrule
      \bfseries Profession & 1080-1084 & 1126-1130\\
      \midrule % from booktabs package
      King & 1.03 (0.06) & NA \\
      Queen & 1.95 (0.11) & NA\\
      Duke & 4.29 (0.25) & NA  \\
      Son of King & 6.18 (0.36) & NA \\
      Archbishop & 3.88 (0.23) & 1 (0.08) \\
      Bishop & 8.38 (0.49) & 3.99 (0.31) \\
      Earl & 12.40 (0.73) & 6.93 (0.53) \\
      Count & 13.00(0.77) & 8.94 (0.69) \\
      Other & 15.90 (0.94) & 10.40 (0.80)\\
      \bottomrule % from booktabs package
    \end{tabular}
\end{table}

Figure \ref{fig:roc-b} displays ROC curves from a reconstruction accuracy test using VSP/QJ-B to simulate and fit synthetic data matching the 1126-1130 and 1134-1138 5LPA data, as described in Section~5. %\ref{sec:results}. 
Again, we see the proportion of inferred false-positive and true-positive relations increasing while decreasing $\epsilon$ from $(0,0)$ at $\epsilon=1$ to $(1,1)$ at $\epsilon=0$. The $\epsilon$'s that give high true-positive and low false-positive reconstruction fraction can be easily identified in \Fig~\ref{fig:roc-b}. This indicates our model's high accuracy in reconstruction order relations. 

\begin{figure}[!htb]
  \centering
  \includegraphics[width=.4\linewidth,height=115pt]{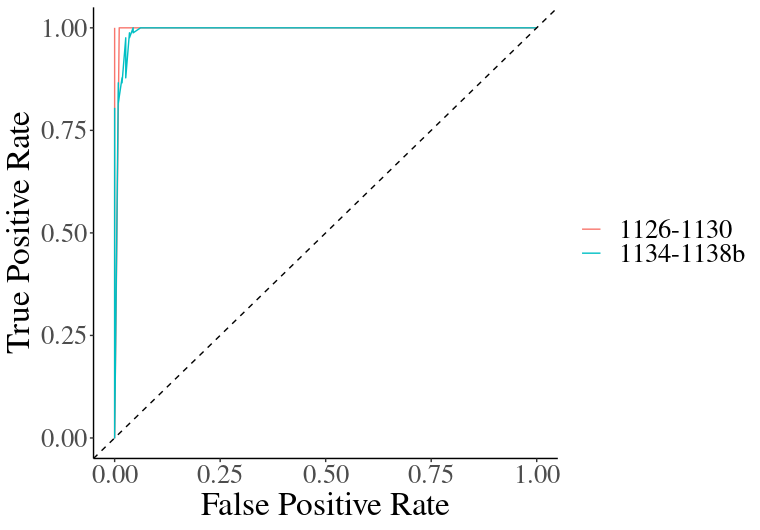}
  \caption{Receiver operating characteristic (ROC) curves for synthetic data using 1126-1130 and 1134-1138 (bishop) list membership structures with 5LPA and QJ-B.}\label{fig:roc-b}
\end{figure}

\pagebreak

\subsection{The Formula 1 Race Data}

The Formula 1 race data (2017 - 2022) \cite{F1Data} records information about every formula 1 race in the past five seasons. The data gives the top 20 drivers in each Grand Prix race in each season. One typical list, for the British Grand Prix (Silverstone Circuit) in 2021, is as follows

\vspace*{-0.6in}

\begin{quote}
    \small 1 – HAM, 2 – LEC, 3 – BOT, 4 – NOR, 5 – RIC, 6 – SAI, 7 – ALO, 8 – STR, 9 – OCO, 10 – TSU, 11 – GAS, 12 – RUS, 13 – GIO, 14 – LAT, 15 – RAI, 16 – PER, 17 – MAZ, 18 – MSC, R – VET, R – VER. 
\end{quote}

\vspace*{-0.6in}

Each abbreviation is a unique code for a driver (see table \ref{tab:drivers}), e.g. `HAM' stands for Lewis Hamilton, who was the winner of this race. The drivers are ordered based on their finishing position. The label `R' indicates special circumstances, e.g. collision, accident, retirement, etc. 

We are interested in the order relations between these drivers and construct a VSP map of their performance in a specific season. This is an intersting test of the method as a heuristic model (in the sense that Plackett-Luce and Mallows are in general heuristic). There is no constraint other than car speed and skill to stop one driver overcoming another so it is not clear that the order relations we recover correspond to any element of reality. One feature that is characteristic of a PO-style analysis (such as ours with VSPs) is that the race resembles a queue in which drivers exchange places subject to skill and car-speed. 
In a race, a driver can fall down the order with a certain probability due to unexpected circumstances (poor tyre management, problems in the pits, small collisions, time penalties etc). However, there is no obvious mechanism promoting a driver up the race order. We therefore believe the QJ-D observation model is natural.

In this analysis, we take a snapshot of 2021, assuming relative car-quality and skill are roughly constant over a year. The Formula 1 (F1) 2021 data consists of 22 lists corresponding to the 22 Grand Prix races. Each list is has at most 20 elements. We disregard the `R' positions, so the lists are of unequal length. There are a total of 21 drivers participating in season 2021. We assign each of them a unique Driver ID, listed in table \ref{tab:drivers}. 

\begin{table}[!htb]
    \centering
    \small
    \caption{The list of drivers in Formula 1 season 2021. Each driver is assigned a unique `Code' and `Driver ID' in our analysis. We also include further information of the drivers, including their date of birth (`DOB') and `Nationality'. }\label{tab:drivers}
    \begin{tabular}{*5c}
      \toprule % from booktabs package
      \bfseries Driver ID & \bfseries Code & \bfseries Name & \bfseries DOB & \bfseries Nationality\\
      \midrule % from booktabs package
      1 & HAM & Lewis Hamilton & 07/01/85 & British \\
      2 & ALO & Fernando Alonso & 29/07/81 & Spanish\\
      3 & RAI	& Kimi Raikonnen & 17/10/79	& Finnish \\
      4 & KUB	& Robert Kubica & 07/12/84 & Polish \\
      5 & VET	& Sebastian	Vettel &	03/07/87 &	German \\
      6 & GAS	& Pierre	Gasly	& 07/02/96 &	French \\
      7 & PER &	Sergio	Perez	& 26/01/90 &	Mexican \\
      8 & RIC &	Daniel	Ricciardo &	01/07/89 &	Australian \\
      9 & BOT	& Valtteri	Bottas &	28/08/89 &	Finnish\\
      10 & VER &	Max	Verstappen &	30/09/97 &	Dutch\\
      11 & SAI & Carlos	Sainz &	01/09/94 &	Spanish\\
      12 & OCO &	Esteban	Ocon &	17/9/96 &	French\\
      13 & STR &	Lance	Stroll &	29/10/98 &	Canadian\\
      14 & GIO &	Antonio	Giovinazzi &	14/12/93 &	Italian\\
      15 & LEC &	Charles	Leclerc	& 16/10/97 &	Monegasque\\
      16 & NOR &	Lando	Norris	& 13/11/99 &	British\\
      17 & RUS &	George	Russell &	15/02/98 &	British\\
      18 & LAT &	Nicholas	Latifi & 29/06/95 &	Canadian\\
      19 & TSU &	Yuki	Tsunoda &	11/05/00 &	Japanese\\
      20 & MAZ &	Nikita	Mazepin &	02/03/99 &	Russian\\
      21 & MSC &	Mick	Schumacher &	22/03/99 &	German\\
      \bottomrule % from booktabs package
    \end{tabular}
\end{table}

We analyse the data-lists from season 2021 between the 21 actors using the VSP/QJ-D model. The consensus order for the drivers in this season is shown in \Fig~\ref{fig:f1cpo}. Both Lewis Hamilton and Max Verstappen are ranked at top of the consensus VSP for the 2021 season, with high posterior probability (more than 0.9). 

The posterior distributions for individual parameters and the depth are shown in \Fig~\ref{fig:f1pos}. The effective sample sizes are 567 for $q=P(S)$ and 130 for $p$. The posterior for $P(S)$ concentrates at around 0.5, showing a relatively relaxed ranking relation. The posterior distribution for $p$ concentrates at a lower value at 0.15. This suggests the VSP model relatively accurately represents the strength of each driver-car pairing. The VSP depths are relatively low for this data. We are not observing a ranking as deep as the social hierarchy for witnesses in ``Royal Acta''. 

\begin{minipage}{\linewidth}
    \centering
    \begin{tikzpicture}[thick,scale=1, every node/.style={scale=0.8}]
        \node[draw, circle, minimum width=.7cm] (1) at (-1, 6) {1};
        \node[draw, circle, minimum width=.7cm] (10) at (1, 6) {10};
        \node[draw, circle, minimum width=.7cm] (9) at (0, 4.5) {9};
        \node[draw, circle, minimum width=.7cm] (3) at (-3, 1.5) {3};
        \node[draw, circle, minimum width=.7cm] (6) at (-3, 3) {6};
        \node[draw, circle, minimum width=.7cm] (7) at (-4, 1.5) {7};
        \node[draw, circle, minimum width=.7cm] (5) at (-4, 0) {5};
        \node[draw, circle, minimum width=.7cm] (11) at (-1, 3) {11};
        \node[draw, circle, minimum width=.7cm] (15) at (0, 3) {15};
        \node[draw, circle, minimum width=.7cm] (2) at (0.5, 1.5) {2};
        \node[draw, circle, minimum width=.7cm] (19) at (0.5, 0) {19};
        \node[draw, circle, minimum width=.7cm] (14) at (-.5, 0) {14};
        \node[draw, circle, minimum width=.7cm] (8) at (2, 0) {8};
        \node[draw, circle, minimum width=.7cm] (12) at (3, 0) {12};
        \node[draw, circle, minimum width=.7cm] (16) at (4, 3) {16};
        \node[draw, circle, minimum width=.7cm] (13) at (4, 1.5) {13};
        \node[draw, circle, minimum width=.7cm] (17) at (4, 0) {17};
        \node[draw, circle, minimum width=.7cm] (18) at (4, -1.5) {18};
        \node[draw, circle, minimum width=.7cm] (4) at (0, -1.5) {4};
        \node[draw, circle, minimum width=.7cm] (20) at (0, -3) {20};
        \node[draw, circle, minimum width=.7cm] (21) at (3, -3) {21};
        \begin{pgfonlayer}{bg}
            \draw[-latex,red] (1) -- (9);
            \draw[-latex,red] (10) -- (9);
            \draw[-latex,red] (9) -- (16);
            \draw[-latex] (9) -- (15);
            \draw[-latex] (9) -- (11);
            \draw[-latex] (9) -- (6);
            \draw[-latex] (9) -- (15);
            \draw[-latex] (9) -- (8);
            \draw[-latex] (9) -- (12);
            \draw[-latex] (9) -- (7);
            \draw[-latex] (6) -- (3);
            \draw[-latex] (6) -- (14);
            \draw[-latex] (14) -- (4);
            \draw[-latex,red] (11) -- (3);
            \draw[-latex] (11) -- (14);
            \draw[-latex] (15) -- (14);
            \draw[-latex] (15) -- (2);
            \draw[-latex,red] (16) -- (13);
            \draw[-latex] (13) -- (17);
            \draw[-latex] (17) -- (18);
            \draw[-latex] (3) -- (4);
            \draw[-latex] (2) -- (19);
            \draw[-latex] (19) -- (4);
            \draw[-latex] (8) -- (4);
            \draw[-latex] (5) -- (4);
            \draw[-latex] (7) -- (5);
            \draw[-latex] (4) -- (20);
            \draw[-latex] (4) -- (21);
            \draw[-latex] (12) -- (20);
            \draw[-latex] (12) -- (21);
        \end{pgfonlayer}
    \end{tikzpicture}
    \captionof{figure}{VSP/QJ-D model. Consensus order for Formula 1 (season 2021) data. Significant/strong order relations are indicated by black/red edges respectively.}
    \label{fig:f1cpo}
\end{minipage}

\begin{figure}[!htb]
  \centering
  \includegraphics[width=0.8\linewidth,height=120pt]{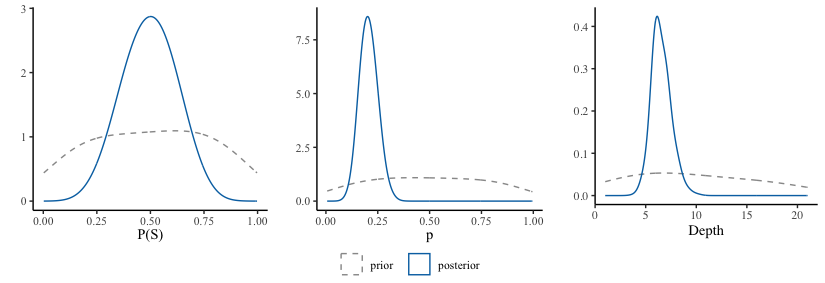}
  \caption{The prior (grey) and posterior (blue) distributions for $P(S)$ (left), $p$ (middle) and depth (right) for the Formula 1 (season 2021) data.}\label{fig:f1pos}
\end{figure}

\pagebreak 

\section{Model Comparison}

\subsection{Model Comparison with Plackett-Luce and Mallows}\label{sec:PL}

The Plackett-Luce model, the Mallows model, and their mixture-models are two categories of model widely used for ranking and partial ranking. In this section, we compare the VSP/QJ-U and VSP/QJ-B models with the two PL-models\footnote{We use the MCMC sampler available in the R-package \texttt{PLmix} \cite{mollica2017}. This uses a data augmentation scheme due to \cite{caron2012efficient}.} and the two Mallows models\footnote{We use the MCMC sampler available in the R-package \texttt{BayesMallows} \cite{bayesmallows}. } using the WAIC. This estimates the expected log pointwise predictive density (ELPD, \cite{vehtari2017practical}). It is a principled criterion for model comparison which is relatively easily estimated. 

The Plackett-Luce model defines a distribution over ranked lists $y_i\in\P_{[n]},\ i\in [N]$ with actor attributes $\lambda=(\lambda_1,\dots,\lambda_n)\in \mathbb{R}^n$. Taking into account the list membership sets $o_i,\ i\in [N]$, the likelihood is
\begin{equation}\label{Plackett-Luce}
    PL(y|\lambda) = \prod_{i=1}^N\prod_{j=1}^{n_i} \frac{e^{\lambda_{y_{i,o_j}}}}{\sum_{k = j}^{n_i} e^{\lambda_{y_{i,o_k}}}}.
\end{equation}

The Plackett-Luce mixture assumes the lists are sampled from a heterogeneous population composed of $D$ sub-populations. Each mixture component has a Plackett-Luce distribution over lists with actor attributes $\lambda^{(d)}\in\mathbb{R}^n,\ d\in [D]$. A finite mixture of Plackett-Luce models was proposed as a robust model for ranked data with incomplete lists in \citet{mollica2017,mollica2020plmix}. Let $\Lambda = (\lambda^{(d)})_{d\in [D]}\in\mathbb{R}^{n\times D}$ give the matrix of actor attributes and $\omega = (\omega_1,\dots,\omega_D)$ give the weights of mixture components with $\sum_{d=1}^D \omega_d = 1$. The $D$-component mixture Plackett-Luce model likelihood is
\begin{equation}
    PL_{mix}(y|\Lambda,\omega) = \prod_{i=1}^N\sum_{d=1}^D \omega_d PL(y_i|\lambda^{(d)}). 
\end{equation}
Non-informative priors suggested by \cite{mollica2020plmix} are assigned with $e^{\lambda_j^{(d)}}\sim \text{Gamma}(1,0.001)$ for $j\in [n]$ and $d\in[D]$ and $\omega_1,\dots,\omega_D\sim Dir(1,\dots,1)$. 

The Mallows model \cite{mallows1957non} is typically controlled by a \textit{location parameter (consensus ranking)} $\rho\in\mathcal{P}_n$ and a \textit{scaling parameter} $\alpha\in (0,\infty)$. Letting $d(\cdot,\cdot):\mathcal{P}_n\times\mathcal{P}_n\to \mathbb{R}_+$ be a \textit{discrepancy function} between two permutations, the Mallows model is
\begin{equation}
    P_d(y|\rho,\alpha) = \prod_{i=1}^N \frac{1}{Z_n(\alpha)}e^{-\frac{\alpha}{n} d(\rho,y_i)},
\end{equation}
where $Z_n(\alpha):=\sum_{y\in \mathcal{P}_n}e^{-\frac{\alpha}{n} d(\rho,y)}$ is the normalising constant. A typical distance choice is the Kendall's tau distance. Let $\sigma(l,a)=\{k\in [n]:l_k=a\}$. The Kendall's tau distance counts the number of pairwise disagreements between two permutations, 
$d(y,l) = \sum_{i<j} \mathbbm{1}_{\sigma(l,y_i)>\sigma(l,y_j)}$, and this gives a tractable normalising constant $Z_n(\alpha)$. We use the Mallows $\phi$ model in our model comparison. A truncated exponential prior is specified for $\alpha$ and a uniform prior $\pi(\rho)$ on $\mathcal{P}_n$ is taken for $\rho$, as is suggested in \cite{bayesmallows} which implements the MCMC proposed in \cite{vitelli2018probabilistic}. The \texttt{BayesMallows} R-package deals with partial ranking by applying data augmentation techniques before fitting the full Mallows model. 

Similar to the Plackett-Luce Mixture, the finite Mallows mixture allows for heterogeneity. Let $\{\rho_d,\alpha_d\}_{d=1,\dots,D}$ be the set of parameters for cluster $d$ and let $z_1,\dots,z_N\in \{1,\dots,D\}$ be the cluster labels that assign each list to one cluster. The $D$-component mixture Mallows likelihood is
\begin{equation}
    P(y|\{\rho_d,\alpha_d\}_{d=1,\dots,D},\{z_i\}_{i=1,\dots,N})=\prod_{i=1}^N \frac{1}{Z_n(\alpha_{z_i})}e^{-\frac{\alpha_{z_i}}{n}d(y_i,\rho_{z_i})}.
\end{equation}
Independent truncated exponential priors  and independent uniform priors are specified for $\alpha$ and $\rho$ respectively. Following \cite{bayesmallows}, $z_1,\dots,z_N$ follow a uniform multinomial distribution and are assumed conditionally independent given the cluster parameters.

The ELPD measures the posterior predictive accuracy of a model. It is a natural choice for goodness-of-fit and model comparison. %Let $y'$ be a generic list coming from the true generative model $y'\sim f$. Let $p_A(y'|y)$ the posterior predictive probability for $y'$ given observed data $y$ under model $A$. The ELPD for model $A$ is defined as 
% \begin{equation}
%     elpd(A|y) = \mathbb{E}_{y'\sim f}(\log p_A(y'|y)).
% \end{equation}
We use the WAIC to estimate the ELPD for a generic model (``A'' say). The estimator resembles the AIC and BIC,
\begin{equation}
    \widehat{elpd}_{waic}(A|y) = \sum_{i=1}^N \log p_A(y_i|y) - p_{waic}, 
\end{equation}
where 
\begin{equation}\label{eq:predp}
    p_A(y_i|y) = \int p_A(y_i|\theta)p_A(\theta|y)d\theta
\end{equation}
with $\theta$ representing all parameter in model $A$. The predictive probability in Eqn. \ref{eq:predp} is estimated using MCMC samples. For a MCMC sample (after burn-in) of length $K$ targeting $p_A(\theta|y)$, 
\[\widehat{p_A}(y_i|y) = \frac{1}{k}\sum_{k\in [K]}p_A(y_i|\theta^{(k)}).\] 
The term $p_{waic}$ is the effective number of parameters. If $V_{k=1}^K a_k = \frac{1}{K-1}\sum_{k=1}^K (a_k - \Bar{a})^2$, then $p_{waic}$ is estimated using $\hat{p}_{waic}=\sum_{i=1}^N V_{k=1}^K(\log(p(y_i|\theta^{(k)})))$. The \texttt{waic} function from R package \texttt{loo} \citep{vehtari2017practical} is used for $elpd_{waic}$ estimation. 

The \texttt{PLmix} package in R provides a range of model selection criterion to select the optimal number of mixture components $D$. We use the Deviation Information Criterion to select the optimal model on a given data. Similar model selection procedures are implemented for the Mallows model. 

\subsubsection{Model comparison on the `Royal Acta' Data}

Table \ref{tab:waic} summarises the estimated $elpd_{waic}$ for the six models, on three signature dataset - 1080-1084, 1126-1130 and 1134-1138(b) (5PLA). The VSP/QJ models outperforms the PL, PL-mixture, Mallows and Mallows moxture models significantly in all time periods.  The VSP/QJ-B model is relatively favourable compared to VSP/QJ-U. We note that we made no careful choice of priors on the PL models and the Mallows models. Non-informative priors are adapted in both cases so it is possible the performance of these models could be improved. However, they have a long way to go to catch up.

%%%%%%%%%%%%%%%%%%%%%%%%%%%%
%
% UPDATED - ELPD
%
%%%%%%%%%%%%%%%%%%%%%%%%%%%%

% \begin{table}[!htb]
%     \centering
%     \caption{The estimated ELPD (SE) under three different models - VSP/QJ-U, Plackett-Luce and 2-mixture Plackett-Luce model on the 1126-1130 (5LPA) data.}\label{tab:elpd}
%     \begin{tabular}{*4c}
%       \toprule % from booktabs package
%        & \multicolumn{3}{c}{\bfseries ELPD (SE)}\\
%       \midrule
%       \bfseries Model & 1080-1084 & 1126-1130 & 1134-1138\\
%       \midrule % from booktabs package
%       VSP/QJ-U & -5.77 (1.40) & -1.35 (0.41) & -6.98 (1.64)\\
%       Plackett-Luce (PL) & -32.15 (10.18) & -47.80 (5.67) & -131.89 (25.08)\\
%       PL-Mixture & -32.07 (9.97) (D=2) & -47.84 (5.72) (D=2) & -131.85 (24.98) (D=5)\\
%       Mallows\footnote{Mallows and Mallows Mixture model with Kendall's Tau distance.} &  -557.76 (2.16) & -541.73 (1.33) & -9765.96 (7.26)\\
%       Mallows-Mixture\footnote{Mallows and Mallows Mixture model with Kendall's Tau distance.} & -415.40 (34.10) (D=4) & -424.74 (63.33) (D=6) & -9457.92 (6.98) (D=4) \\
%       \bottomrule % from booktabs package
%     \end{tabular}
% \end{table}

%%%%%%%%%%%%%%%%%%%%%%%%%%%%

\begin{table}[!htb]
    \centering
    \small
    \caption{The estimated $elpd_{waic}$ ($se$) under six different models - VSP/QJ-U, VSP/QJ-B, Plackett-Luce (PL) and 2-mixture Plackett-Luce (PL-Mix2) model.}\label{tab:waic}
    \begin{tabular}{*5c}
      \toprule % from booktabs package
      {} & \multicolumn{3}{c}{\bfseries $elpd_{waic}$ ($se$)}\\
      \midrule
      \bfseries Model & \bfseries 1080-1084 & \bfseries 1126-1130 & \bfseries 1134-1138(b)\\
      \midrule % from booktabs package
      VSP/QJ-B & -103.5 (26.0) & -28.6 (9.6)& -72.2 (21.9) \\
      VSP/QJ-U & -197.2 (77.8) & -37.8 (10.8) &  -86.3 (27.6)\\
      PL & -316.5 (38.5) & -270.4 (25.8) & -336.2 (35.6)\\
      PL-Mix2 & -291.1 (37.2) & -267.6 (24.7)& -318.6 (36.3)\\
      Mallows & -601.9 (6.8) & -624.5 (3.0) & -770.2 (7.6) \\
      Mallows-Mix & -613.9 (4.1) (D=4) & -604.7 (1.9) (D=6)  & -820.7 (4.8) (D=4) \\
      \bottomrule % from booktabs package
    \end{tabular}
\end{table}

We estimate consensus orders for both the PL and PL-Mixture models. This is done by first sampling from the posterior distribution of ranking(s). We turn the rankings into partial order representations. For a PL-mixture, we calculate the intersection order that records the order relation appearing in all rankings. The consensus order is then constructed from this `posterior distribution of partial orders'. The estimated consensus orders for the PL and PL-Mixture (D=2) models are shown in Figure~\ref{fig:con-order-pl}.

\begin{minipage}{\linewidth}
    \centering
    \begin{tikzpicture}[thick,scale=.3, every node/.style={scale=0.3}]
        \node[draw, circle, minimum width=.7cm,fill=Magenta] (1) at (-3, 13) {};
        \node[draw, circle, minimum width=.7cm,fill=Magenta] (2) at (-3, 12) {};
        \node[draw, circle, minimum width=.7cm,fill=Gray] (3) at (-3, 11) {};
        \node[draw, circle, minimum width=.7cm,fill=Magenta] (4) at (-3, 10) {};
        \node[draw, circle, minimum width=.7cm,fill=Magenta] (5) at (-3, 9) {};
        \node[draw, circle, minimum width=.7cm,fill=Magenta] (6) at (-3, 8) {};
        \node[draw, circle, minimum width=.7cm,fill=Gray] (7) at (-3, 7) {};
        \node[draw, circle, minimum width=.7cm,fill=SkyBlue] (8) at (-3, 6) {};
        \node[draw, circle, minimum width=.7cm,fill=Gray] (9) at (-3, 5) {};
        \node[draw, circle, minimum width=.7cm,fill=Gray] (10) at (-3,4) {};
        \node[draw, circle, minimum width=.7cm,fill=Maroon] (11) at (-3, 3) {};
        \node[draw, circle, minimum width=.7cm,fill=Gray] (12) at (-3, 2) {};
        \node[draw, circle, minimum width=.7cm,fill=Apricot] (13) at (-3, 1) {};
        \begin{pgfonlayer}{bg}
            \draw[-latex] (1) -- (2);
            \draw[-latex,red] (2) -- (3);
            \draw[-latex] (3) -- (4);
            \draw[-latex] (4) -- (5);
            \draw[-latex] (5) -- (6);
            \draw[-latex] (6) -- (7);
            \draw[-latex] (7) -- (8);
            \draw[-latex] (8) -- (9);
            \draw[-latex] (9) -- (10);
            \draw[-latex] (10) -- (11);
            \draw[-latex] (11) -- (12);
            \draw[-latex] (12) -- (13);
        \end{pgfonlayer}
        \node[draw, circle, minimum width=.7cm,fill=Magenta] (14) at (5, 13) {};
        \node[draw, circle, minimum width=.7cm,fill=Magenta] (15) at (6, 13) {};
        \node[draw, circle, minimum width=.7cm,fill=Magenta] (16) at (4.5, 7) {};
        \node[draw, circle, minimum width=.7cm,fill=Magenta] (17) at (6.5, 7) {};
        \node[draw, circle, minimum width=.7cm,fill=Gray] (18) at (8.5, 7) {};
        \node[draw, circle, minimum width=.7cm,fill=Magenta] (19) at (3, 1) {};
        \node[draw, circle, minimum width=.7cm,fill=Gray] (20) at (4, 1) {};
        \node[draw, circle, minimum width=.7cm,fill=Gray] (21) at (5, 1) {};
        \node[draw, circle, minimum width=.7cm,fill=Maroon] (22) at (6, 1) {};
        \node[draw, circle, minimum width=.7cm,fill=SkyBlue] (23) at (7, 1) {};
        \node[draw, circle, minimum width=.7cm,fill=Apricot] (24) at (8, 1) {};
        \node[draw, circle, minimum width=.7cm,fill=Gray] (25) at (9, 1) {};
        \node[draw, circle, minimum width=.7cm,fill=Gray] (26) at (10, 1) {};
        \begin{pgfonlayer}{bg}
            \draw[-latex] (14) -- (16);
            \draw[-latex] (14) -- (17);
            \draw[-latex] (15) -- (16);
            \draw[-latex] (15) -- (17);
            \draw[-latex] (14) -- (19);
            \draw[-latex] (15) -- (19);
            \draw[-latex] (16) -- (20);
            \draw[-latex] (17) -- (20);
            \draw[-latex] (14) -- (21);
            \draw[-latex] (15) -- (21);
            \draw[-latex] (14) -- (22);
            \draw[-latex] (15) -- (22);
            \draw[-latex] (15) -- (23);
            \draw[-latex] (14) -- (24);
            \draw[-latex] (15) -- (24);
            \draw[-latex] (18) -- (24);
            \draw[-latex] (14) -- (25);
            \draw[-latex] (15) -- (25);
            \draw[-latex] (18) -- (25);
            \draw[-latex] (18) -- (26);
            \draw[-latex] (15) -- (26);
        \end{pgfonlayer}
        \matrix [draw,below left] at (current bounding box.north east) {
          \node [circle,fill=Maroon,label=right:Archbishop] {}; \\
          \node [circle,fill=Magenta,label=right:Bishop] {}; \\
          \node [circle,fill=SkyBlue,label=right:Earl] {}; \\
          \node [circle,fill=Apricot,label=right:Count] {}; \\
          \node [circle,fill=Gray,label=right:Other] {}; \\
        };     
    \end{tikzpicture}
    \captionof{figure}{The estimated consensus orders from the Plackett-Luce (left) and PL-Mixture (D=2) (right) models on the 1126-1130 data. Red edges indicate order relations that posterior probabilities are higher than 0.9. }\label{fig:con-order-pl}
\end{minipage}

Both the PL and PL-Mixture (D=2) model are not designed to reconstruct partial orders in the way we use it here. It was of interest to see if they did capture the same or similar relations to those we find with VSP models. This is not the case. Although we don't know the true partial order, we do expect a fairly deep social hierarchy in the 12th century. Neither model reflects such a feature. %The expected ranks in table~\ref{tab:exp-rank-pl} reflect the same issue. 

% \begin{table}[!htb]
%     \centering
%     \caption{The professions and their average rankings under the Plackett-Luce and PL-Mixture models for time period 1126-1130. NA means the profession of interest does not appear in this time period.}\label{tab:exp-rank-pl}
%     \begin{tabular}{*3c}
%       \toprule % from booktabs package
%       {} & \multicolumn{2}{c}{\bfseries Average Rank}\\
%       \midrule
%       \bfseries Profession & Plackett-Luce & PL-Mixture (D=2)\\
%       \midrule % from booktabs package
%       Archbishop & 9.41 (0.72) & 4.31 (0.33) \\
%       Bishop & 4.42 (0.34) & 2.27 (0.18) \\
%       Count & 10.70 (0.82) & 5.37 (0.41) \\
%       Earl & 8.57 (0.66) & 3.81 (0.29) \\
%       Other & 8.05 (0.62) & 4.10 (0.32)\\
%       \bottomrule % from booktabs package
%     \end{tabular}
% \end{table}

\subsubsection{Model comparison on the Formula 1 Race Data}

We compare the VSP/QJ-D model with the Placket-Luce and Mallows model, and their mixtures on the Formula 1 dataset. The comparison result using $elpd_{waic}$ is shown in table \ref{tab:waic-f1}. The VSP/QJ-D model outperforms both the Plackett-Luce, the Mallows and their mixtures significantly. 

\begin{table}[h]
    \centering
    \small
    \caption{The estimated $elpd_{waic}$ ($se$) under five different models for the Formula 1 Racing Data - VSP/QJ-D, Plackett-Luce (PL) and 2-mixture Plackett-Luce (PL-Mix2), Mallows and 3-Mixture Mallows (Mallows-Mix3) model.}\label{tab:waic-f1}
    \begin{tabular}{*2c}
      \toprule % from booktabs package
      \bfseries Model & {\bfseries $elpd_{waic}$ ($se$)}\\
      \midrule % from booktabs package
      VSP/QJ-D & -597.1 (25.2) \\
      PL & -847.4 (18.6) \\
      PL-Mix2 & -821.6 (17.4) \\
      Mallows & -973.7 (3.4) \\
      Mallows-Mix3 & -963.5 (3.9) \\
      \bottomrule % from booktabs package
    \end{tabular}
\end{table}

\subsection{Model comparison VSP v. Bucket order}\label{sec:bucket-BF}

Bayes factors $B_{01}$ for bucket orders (see Section~1 %\ref{sec:intro})
over VSPs can be estimated using the Savage-Dickey Ratio. %We assume $M_0$ is the model considering only `bucket' orders and $M_1$ is the model considering VSP. The Bayes factors $B_{01}$ for both datasets 
Results are summarized in Table \ref{tab:bf} for both models QJ-U and QJ-B and both 1LPA and 5LPA datasets. Numbers above one support bucket orders. Numbers below one support VSPs. For 1PLA dataset, we observe strong support for VSPs. For 5LPA data there is a very slight preference for bucket orders ``barely worth mentioning'' over QJ-B. Presumably the extra model complexity of QJ-B is costing something here. For QJ-U and the period 1180-84 there is no strong preference - the consensus order in \Fig~7 %\ref{fig:8084con-order} 
is ``nearly'' a bucket order. However, for QJ-U, 1126-30 and 1134-38 and 1134-38(b) the consensus orders are more complex and VSP's are strongly preferred over Bucket orders.

\begin{table}[!htb]
    \centering
    \small
    \caption{The Bayes factors $B_{01}$ for `bucket' order over VSP on all datasets 5LPA (Left) and 1LPA (Right).}\label{tab:bf}
    \begin{tabular}{*3c}
      \toprule % from booktabs package
      {} & \multicolumn{2}{c}{\bfseries Bayes Factor $B_{01}$}\\
      \midrule
      \bfseries Dataset & \bfseries VSP/QJ-U & \bfseries VSP/QJ-B\\
      \midrule % from booktabs package
      1080-1084 & 1.73 & 2.83\\
      1126-1130 & 0.18 & 2.83\\
      1134-1138 & 0.00 & NA\\
      1134-1138(b) & 0.33 & 2.59\\
      \bottomrule % from booktabs package
    \end{tabular}
    \qquad
    \begin{tabular}{c c}
     \toprule 
     {} &{\bfseries Bayes Factor $B_{01}$} \\ \midrule
      \bfseries Dataset & \bfseries VSP/QJ-U\\ \midrule
       & \\
      1080-1084 & 0.00 \\ 
       & \\
       1134-1138&0.00 \\
      \bottomrule 
    \end{tabular}
\end{table}

\subsection{Model comparison with the latent Partial Order Model}

\cite{nicholls122011partial} proposes a latent partial order model, which can be applied to fit general partial orders to rank-order list-data. Though their method is not scalable to datasets of more than around 20 actors, we are interested in comparing the performance between their partial order (PO) model and the VSP class of models proposed in this paper. We choose the same observation model, QJ-U, to make the test. We choose a relatively small dataset, 1126-1130 with 5LPA, for this comparison, so the full PO model is tractable. We chose priors $\rho\sim Beta(1,\frac{1}{6})$ as suggested in \cite{nicholls122011partial} and a non-informative prior for the error probability $p=\frac{e^r}{1+e^r}$ where $r\sim \mathcal{N}(0,1.5)$ in order to get a reasonably flat depth distribution for the PO-prior. 

The consensus order from the PO/QJ-U model is shown in \Fig~\ref{fig:26-30pocon} (left). We also copy the result from the VSP/QJ-U model here for comparison. The two models indicates similar social hierarchy. However, the PO/QJ-U model presents a less strict hierarchy among bishops. 

\begin{minipage}{\linewidth}
    \centering
    \begin{tikzpicture}[thick,scale=.3, every node/.style={scale=0.3}]
        \node[draw, circle, minimum width=.7cm,fill=Maroon] (1) at (0, 5) {};
        \node[draw, circle, minimum width=.7cm,fill=Magenta] (2) at (0, 4) {};
        \node[draw, circle, minimum width=.7cm,fill=Magenta] (3) at (0, 3) {};
        \node[draw, circle, minimum width=.7cm,fill=Magenta] (4) at (-1.2, 2) {};
        \node[draw, circle, minimum width=.7cm,fill=Magenta] (5) at (1.2, 2) {};
        \node[draw, circle, minimum width=.7cm,fill=Magenta] (6) at (0, 0) {};
        \node[draw, circle, minimum width=.7cm,fill=SkyBlue] (7) at (-2.25, 1.1) {};
        \node[draw, circle, minimum width=.7cm,fill=Gray] (8) at (-1.5, 0) {};
        \node[draw, circle, minimum width=.7cm,fill=Gray] (9) at (-3, 0) {};
        \node[draw, circle, minimum width=.7cm,fill=Apricot] (10) at (3, 0) {};
        \node[draw, circle, minimum width=.7cm,fill=Gray] (11) at (0, -1) {};
        \node[draw, circle, minimum width=.7cm,fill=Gray] (12) at (0, -2) {};
        \node[draw, circle, minimum width=.7cm,fill=Gray] (13) at (0, -3) {};
        \begin{pgfonlayer}{bg}
            \draw[-latex,red] (1) -- (2);
            \draw[-latex,red] (2) -- (3);
            \draw[-latex,red] (3) -- (5);
            \draw[-latex] (3) -- (4);
            \draw[-latex,red] (4) -- (6);
            \draw[-latex,red] (5) -- (6);
            \draw[-latex,red] (4) -- (7);
            \draw[-latex] (5) -- (7);
            \draw[-latex,red] (4) -- (10);
            \draw[-latex,red] (5) -- (10);
            \draw[-latex] (7) -- (8);
            \draw[-latex] (7) -- (9);
            \draw[-latex] (9) -- (11);
            \draw[-latex] (8) -- (11);
            \draw[-latex] (10) -- (11);
            \draw[-latex] (6) -- (11);
            \draw[-latex] (11) -- (12);
            \draw[-latex] (12) -- (13);
        \end{pgfonlayer}

        \node[draw, circle, minimum width=.7cm,fill=Maroon] (81) at (10, 5) {};
        \node[draw, circle, minimum width=.7cm,fill=Magenta] (82) at (10, 4.1) {};
        \node[draw, circle, minimum width=.7cm,fill=Magenta] (83) at (10, 3.2) {};
        \node[draw, circle, minimum width=.7cm,fill=Magenta] (84) at (10, 2.3) {};
        \node[draw, circle, minimum width=.7cm,fill=Magenta] (85) at (10, 1.4) {};
        \node[draw, circle, minimum width=.7cm,fill=SkyBlue] (86) at (7, 0.2) {};
        \node[draw, circle, minimum width=.7cm,fill=Gray] (87) at (7, -.7) {};
        \node[draw, circle, minimum width=.7cm,fill=Gray] (88) at (7, -1.6) {};
        \node[draw, circle, minimum width=.7cm,fill=Gray] (89) at (8, -2.5) {};
        \node[draw, circle, minimum width=.7cm,fill=Apricot] (90) at (9, -1.1) {};
        \node[draw, circle, minimum width=.7cm,fill=Magenta] (91) at (11, -.7) {};
        \node[draw, circle, minimum width=.7cm,fill=Gray] (92) at (13, -.7) {};
        \node[draw, circle, minimum width=.7cm,fill=Gray] (93) at (11, -3) {};
        \begin{pgfonlayer}{bg}
            \draw[-latex,red] (81) -- (82);
            \draw[-latex,red] (82) -- (83);
            \draw[-latex,red] (83) -- (84);
            \draw[-latex] (84) -- (85);
            \draw[-latex,red] (85) -- (86);
            \draw[-latex] (86) -- (87);
            \draw[-latex] (87) -- (88);
            \draw[-latex] (88) -- (89);
            \draw[-latex] (90) -- (89);
            \draw[-latex,red] (85) -- (90);
            \draw[-latex] (90) -- (89);
            \draw[-latex,red] (85) -- (91);
            \draw[-latex,red] (92) -- (93);
            \draw[-latex,red] (85) -- (92);
            \draw[-latex] (91) -- (93);
            \draw[-latex] (89) -- (93);
        \end{pgfonlayer}
        
    \end{tikzpicture}
    \qquad
    \begin{tikzpicture}[scale=0.6,every node/.style={scale=0.4}]
        \matrix [draw, left]{
          \node [circle,fill=Maroon,label=right:Archbishop] {}; \\
          \node [circle,fill=Magenta,label=right:Bishop] {}; \\
          \node [circle,fill=SkyBlue,label=right:Earl] {}; \\
          \node [circle,fill=Apricot,label=right:Count] {}; \\
          \node [circle,fill=Gray,label=right:Other] {}; \\
        };
    \end{tikzpicture}
    
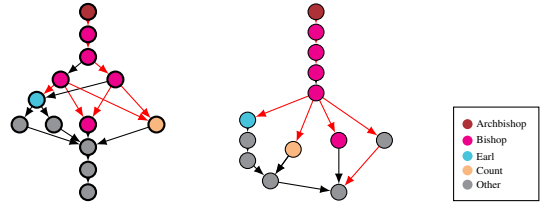
\captionof{figure}{PO/QJ-U model(left) and VSP/QJ-U model (right; same as \Fig~7). %\ref{fig:8084con-order}). 
    Consensus order for 1126-1130 5LPA data. Significant/strong order relations are indicated by black/red edges respectively.}
    \label{fig:26-30pocon}
\end{minipage}

The consensus order from the PO/QJ-U model is actually a VSP. \Fig~\ref{fig:depth-po} shows the prior and posterior depth distributions for both the PO/QJ-U and VSP/QJ-U models. Although the prior distributions over depth are all relatively flat for the two models, the PO/QJ-U model favour partial orders with relatively lower depth. 

\begin{figure}[!htb]
  \centering
  \includegraphics[width=.6\linewidth]{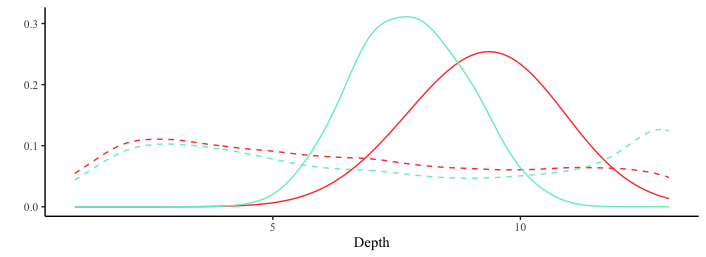}
  \caption{The prior (dashed) and posterior (solid) distribution over depth for the PO/QJ-U (green) and VSP/QJ-U model (red). }\label{fig:depth-po}
\end{figure}

The posterior probability to get a VSP given the PO/QJ-U model is $p_{PO/QJ-U}(h\in \mathcal{V}_{[n]}|\mathbf{y}) = 0.31$ so there is a reasonable chance in the more general model that the unknown true social hierarchy is a VSP. The model comparison performed in Table \ref{tab:waic-po} indicates similar $elpd_{waic}$ for both models. Considering the uncertainty in our estimation, we conclude both models fitting the data equally well. 

\begin{table}[!htb]
    \centering
    \small
    \caption{The estimated $elpd_{waic}$ ($se$) for the VSP/QJ-U and PO/QJ-U models.}\label{tab:waic-po}
    \begin{tabular}{*2c}
      \toprule % from booktabs package
      \bfseries Model & {\bfseries $elpd_{waic}$ ($se$)}\\
      \midrule % from booktabs package
      VSP/QJ-U &  -37.8 (10.8)\\
      PO/QJ-U & -36.7 (10.1) \\
      \bottomrule % from booktabs package
    \end{tabular}
\end{table}

We compare the average ranking for different professions in table \ref{tab:exp-rank-po} and observe the same ranking order in professions although ranking scales are slight different.

\begin{table}[!htb]
    \centering
    \caption{The professions and their average rankings under the PO/QJ-U and VSP/QJ-U models for time period 1126-1130. }\label{tab:exp-rank-po}
    \begin{tabular}{*3c}
      \toprule % from booktabs package
      {} & \multicolumn{2}{c}{\bfseries Average Rank}\\
      \midrule
      \bfseries Profession & PO/QJ-U & VSP/QJ-U\\
      \midrule % from booktabs package
      Archbishop & 1 (0.08) & 1 (0.08) \\
      Bishop & 3.76 (0.29) & 3.99 (0.31) \\
      Earl & 5.75 (0.44) & 6.93 (0.53) \\
      Count & 6.04 (0.46) & 8.94 (0.69) \\
      Other & 9.28 (0.71) & 10.40 (0.80)\\
      \bottomrule % from booktabs package
    \end{tabular}
\end{table}

We summarise the posterior distributions over POs/VSPs using the consensus adjacency matrix $m$, such that $$m_{i,j}=p(i \succ j|\mathbf{y}), i,j \in [n].$$ The consensus orders are inferred from the consensus adjacency matrix by setting a certain threshold. This paper chooses a threshold of 0.5. \Fig~\ref{fig:conpocomp} plots the entries of the two consensus adjacency matrices against each other. The points roughly scatter along the reference line $y=x$, and show a positive monotone trend. Based on \Fig~\ref{fig:conpocomp}, the two consensus adjacency matrices roughly agree with each other, highlighting the fact that although the VSP is a more restricted model, it works as well as a flexible and scalable partial order model in social hierarchy scenarios. 

\begin{figure}[!htb]
  \centering
  \includegraphics[width=.35\linewidth]{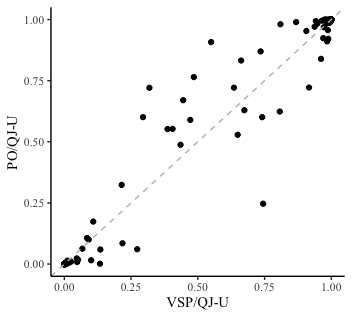}
  \caption{The comparison plot between the consensus adjacency matrices from the VSP/QJ-U (x-axis) and PO/QJ-U (y-axis) models. The gray dashed line is the $y=x$ reference line. }\label{fig:conpocomp}
\end{figure}

\pagebreak

\section{Scaling analysis}\label{sec:runtime}

Counting the number of linear extensions of a general partial order is known to be \#P-complete (\cite{brightwell1991counting}). \textit{LEcount} by \cite{kangas2016counting} seems to be the most computationally efficient counting tool available. \textit{LEcount} chooses between two algorithms, one counts by recursion in $O(2^n n)$ operations and the other by variable elimination in $O(n^{t+4})$ where $t$ is the treewidth of the cover graph. The linear-extension counting algorithm we use exploits the tree representation (1, 2) %\ref{eq:le_s},\ref{eq:le_p}) 
so it only works for VSPs, but it is more reliable and faster than \textit{LEcount} especially for the complicated and large VSPs at the right end of \Fig~\ref{fig:runtime1}.% (left). For VSPs with more than 49 actors (\Fig~\ref{fig:runtime}) memory demands seemed to be too great for \textit{LEcount}. We further compare the MCMC efficiency between our VSP model and the latent partial order model proposed in \cite{nicholls122011partial} in \Fig~\ref{fig:runtime1} (right). 

The likelihood evaluation involves substantial computation of the number of linear extensions, and is an essential part of our MCMC analysis. We compare the computational cost to the likelihood evaluations under either the VSP tree representation or \textit{LEcount}. This is done by simulating $N=20$ full length lists on VSPs of increasing size $n=3,6,...,39$ from our VSP prior. For each group of $N$ lists we evaluate the likelihood for the VSP used in simulation. We repeat this 50 times for each VSP size $n$ for each method to derive an estimated distribution over run-times. The log-scaled maximum run-time (in seconds) for each sample size is shown in \Fig~\ref{fig:runtime1}. The log-scaled maximum run-time appears to be linear for the tree representation and  exponential for \textit{LEcount}. The optimised \textit{LEcount} approach outperforms the tree representation LE evaluation when we have VSPs less than 25 actors. However, VSP-based counting significantly outperforms \textit{LEcount} when we move to much larger datasets (completely as expected, all that matters is that we are comparing a simple implementation of a fast VSP algorithm with a well optimised implementation of a PO algorithm and the simple VSP implementation still beats the optimised PO implementation at large enough VSP sizes because the VSP algorithm only works for a subset of POs, so there is no criticism of LEcount here). 

\begin{figure}[H]
  \centering
  \includegraphics[width=.4\linewidth]{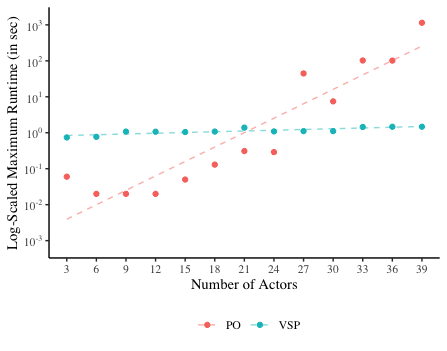}
  \caption{Run-time analysis between the count approach from tree representation and \textit{LEcount} (\cite{kangas2016counting}) on VSPs. The plot compares likelihood (QJ-U) evaluation exploiting the VSP structure (in green) and for a general PO (in red). The log-scaled maximum run-time (in seconds) from the tree representation (green) and the \textit{LEcount} is shown in y-axis, and the number of actors in VSP is shown in the x-axis. %The right plot compares the ratio of maximum run-time per effective sample size
  }\label{fig:runtime1}
\end{figure}

The scaling analysis demonstrates the high scalability of the VSP counting method via the tree representation. This enables our model to work on datasets with more than 200 actors, see Section~\ref{sec:1lpa}.

\section{Detecting VSP's}

\cite{valdes1979recognition} proposes an efficient way to recognise VSP's by detecting the so-called \textit{forbidden sub-graph} (\Fig~\ref{forbidden_subgraph}).

\begin{minipage}{\linewidth}
    \centering
    \begin{tikzpicture}[thick,scale=.7, every node/.style={scale=0.8}]
        \node[draw, circle, minimum width=.7cm,fill=red] (1) at (0, 1.5) {$1$};
        \node[draw, circle, minimum width=.7cm,fill=red] (2) at (1.5, 1.5) {$2$};
        \node[draw, circle, minimum width=.7cm,fill=red] (3) at (0, 0) {$3$};
        \node[draw, circle, minimum width=.7cm,fill=red] (4) at (1.5, 0) {$4$};
        \draw[-latex] (1) -- (3);
        \draw[-latex] (1) -- (4);
        \draw[-latex] (2) -- (4);
    \end{tikzpicture}
    
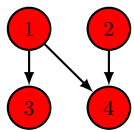
\captionof{figure}{The `forbidden sub-graph' to the VSP class of partial orders.}\label{forbidden_subgraph}
\end{minipage}

A partial order $h\in\mathcal{H}_{[n]}$ is a VSP if it does not contain a set of vertices $o=\{j_1,\dots,j_4\}\subset [n]$ with sub-graph $h=h[o]$ that is isomorphic to the `forbidden sub-graph' $F=([4],\{\langle 1,3\rangle,\langle 1,4\rangle,\langle 2,4\rangle\})$. If two graphs are isomorphic, $F$ and $h'$ in our case, they must be identical after vertex relabelling. This means edges absent in $F$ must also be absent in $h'$. This makes it straightforward to test if a partial order is a VSP.

% \section{CRS \& VSP}

% It is of interest to investigate the overlap between partial order models (especially the VSP's) and the CRS models. It appears that the CRS model does not represent all VSP's. The distribution of linear extensions of the forbidden subgraph does not admit a representation as a CRS. While we can represent a three-node VSP using CRS, we can't represent a certain 4-node VSP using CRS (example in the appendix). This seems to say a general VSP cannot be represented by the CRS model. 

\section{PRIOR DISTRIBUTION ON DEPTH}

Our VSP-prior gives good control over partial order depth. We can choose the prior distribution over $q$ so that the marginal distribution $\pi_{\V_{[n]}}(v)$ has a reasonably flat distribution over the depth $D(v)$ of the VSP-partial order $v$. This ensures the prior is non-informative with respect to partial-order depth, a property of a social hierarchy on actors which is of particular interest. After some experimentation we found that taking $\eta\sim \mathcal{N}(1,1.5)$ and setting $q=\frac{1}{1+e^{-\eta}}$ gave a reasonably non-informative depth distribution. \Fig~\ref{fig:depth-prior} shows an example prior depth distribution for partial orders with 50 actors under this prior. 

\begin{figure}[!htb]
  \centering
  \includegraphics[width=.6\linewidth]{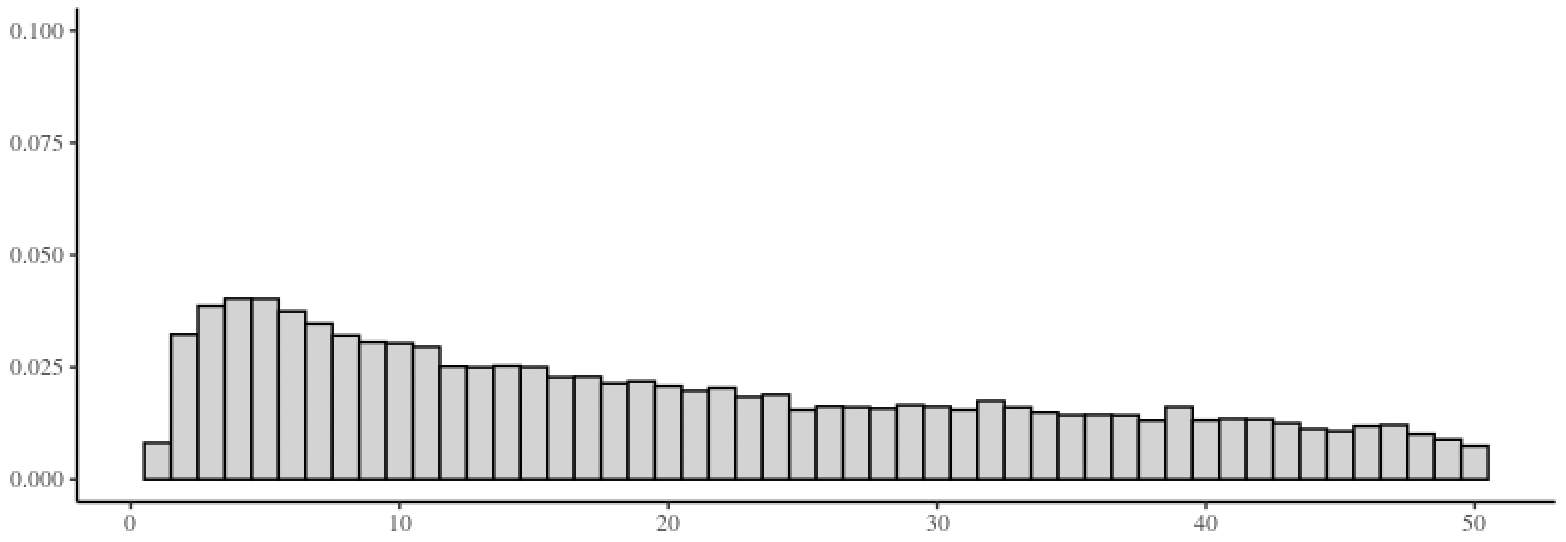}
  \caption{The prior distribution over depth for partial orders with 50 actors, when $q=\frac{1}{1+e^{-\eta}},\eta\sim\mathcal{N}(1,1.5)$. }\label{fig:depth-prior}
\end{figure}

% \begin{figure}[!htb]
%   \centering
%   \includegraphics[width=.83\linewidth]{}
%   \caption{The prior distribution over width for partial orders with 50 actors, when $q=\frac{1}{1+e^{-\eta}},\eta\sim\mathcal{N}(1,1.5)$. }\label{fig:width-prior}
% \end{figure}

% \bibliography{reference}

% \end{document}
 
\end{appendix}

\end{document}